\DeclarePairedDelimiter\floor{\lfloor}{\rfloor}
\DeclareMathOperator*{\argmin}{arg\,min}
\let\originalleft\left
\let\originalright\right
\renewcommand{\left}{\mathopen{}\mathclose\bgroup\originalleft}
\renewcommand{\right}{\aftergroup\egroup\originalright}
\newcommand{\norm}[1]{\left\lVert#1\right\rVert}
\newcommand{\expect}[2]{\mathbb{E}_{#1}\left[#2\right]}
\newcommand{\expec}[1]{\mathbb{E}\left[#1\right]}
\newcommand{\Var}[1]{\textrm{Var}\left[#1\right]}
\newcommand{\Cov}[1]{\textrm{Cov}\left[#1\right]}
\renewcommand{\P}[1]{\mathbb{P}\left[#1\right]}
\renewcommand{\d}{\text{d}}
\newcommand{\R}{\mathbb{R}}
\newcommand{\N}{\mathbb{N}}
\newcommand{\W}{{\mathcal{W}}}
\newcommand{\abs}[1]{\left|#1\right|}
\newcommand{\eset}[1]{{\left\{#1\right\}}}
\renewcommand{\t}[1]{{\textrm{#1}}}
\renewcommand{\c}[1]{{\mathcal{#1}}}
\newcommand{\Xts}{\eset{X_t}_{t\in\N}}
\newtheoremstyle{def}
  {12pt}   
  {6pt}   
  {\normalfont}  
  {0pt}       
  {\bfseries} 
  {.}         
  {5pt plus 1pt minus 1pt} 
  {}          
\theoremstyle{plain}
\newtheorem{theorem}{Theorem}[section]
\newtheorem{lemma}{Lemma}[section]
\newtheorem{example}{Example}[section]
\newtheorem{assumption}{Assumption}[section]
\newtheorem{proposition}{Proposition}[section]
\newtheorem{bproposition}{Proposition}[section]
\newtheorem{blemma}{Lemma}[section]
\newtheorem{cproposition}{Proposition}[section]
\newtheorem{clemma}{Lemma}[section]
\def\centerarc[#1](#2)(#3:#4:#5)
\definecolor{axis_color}{rgb}{.73,.73,.73}
\begin{document}

\begin{frontmatter}

\title{An Autoregressive Model for Time Series of Random Objects\tnoteref{title}}
\author{Matthieu Bult\'e\fnref{label1,label2}}
\ead{mb@math.ku.dk}
\author{Helle Sørensen\fnref{label1}}
\ead{helle@math.ku.dk}
\affiliation[label1]{organization={Department of Mathematical Sciences, University of Copenhagen}}
\affiliation[label2]{organization={Faculty of Business Administration and Economics, Bielefeld University}}

\begin{abstract}
  Random variables in metric spaces indexed by time and observed at equally spaced time points are receiving increased attention due to their broad applicability. The absence of inherent structure in metric spaces has resulted in a literature that is predominantly non-parametric and model-free. To address this gap in models for time series of random objects, we introduce an adaptation of the classical linear autoregressive model tailored for data lying in a Hadamard space. The parameters of interest in this model are the Fr\'echet mean and a concentration parameter, both of which we prove can be consistently estimated from data. Additionally, we propose a test statistic for the hypothesis of absence of serial correlation and establish its asymptotic normality. Finally, we use a permutation-based procedure to obtain critical values for the test statistic under the null hypothesis. Theoretical results of our method, including the convergence of the estimators as well as the size and power of the test, are illustrated through simulations, and the utility of the model is demonstrated by an analysis of a time series of consumer inflation expectations.
\end{abstract}

\begin{keyword}
Time Series
\sep Random Objects
\sep Autoregressive model
\sep Metric space
\sep Least squares
\end{keyword}

\end{frontmatter}
\section{Introduction}

Random variables in general metric spaces, also called random objects, have been receiving increasing attention in recent statistical research. The generality of the metric space setup does not require any algebraic structure to exist and is only based on the definition of a distance function. This allows the methods developed to be applied in domains ranging from classical setups to more complex use cases on non-standard data. This includes the study of functional data \citep{ramsay_functional_2005}, data lying on Riemannian manifolds, correlation matrices and applications thereof to fMRI data \citep{petersen_frechet_2019} or adjacency matrices and social networks \citep{dubey_frechet_2020} among others.

One example of particular interest due to its wide range of applications is that of data comprising of probability density functions. Probability distributions are a challenging example of a space that is both functional, and thus infinite-dimensional, but also non-Euclidean because of the constraints characterizing density functions. This leads to a number of different approaches to studying these objects: they have been studied as the image of Hilbert spaces under transformations \citep{petersen_functional_2016}, as specific Hilbert spaces with specific addition and scalar multiplication operators \citep{van_den_boogaart_bayes_2014}, as well as  metric spaces with stylized distances constructed to expose certain properties or invariances \citep{panaretos_invitation_2020, srivastava_functional_2016}. See \citet{petersen_modeling_2022} for a review of such methodologies. Distributions can be found in many applications, for instance the distribution of socioeconomic factors within a population such as income \citep{yoshiyuki_functional_2017}, fertility \citep{mazzuco_fitting_2015} or mortality data \citep{chen_wasserstein_2021}. They are also useful when considering belief distributions of economic factors \citep{meeks_heterogeneous_2023}, allowing economic analyses to consider entire distributions rather than empirical expectations. 

The study of random objects has received recent attention with work on hypothesis testing and inference \citep{dubey_frechet_2019, dubey_frechet_2020, mccormack_equivariant_2023, mccormack_stein_2022, kostenberger_robust_2023} as well as various approaches to regression \citep{petersen_frechet_2019, bulte_2024107995, hanneke_universal_2021}. Since the setup of general metric spaces offers little structure, part of the literature considers additional assumptions on the space in order for standard statistical quantities to be well-defined. This is often done by assuming that the metric space is a Hadamard space, see for instance \citet{auscher_probability_2003} for a detailed review of results in Hadamard spaces and \citet{bacak_computing_2014} for computation of Fr\'echet means in such spaces.

In many of the applications mentioned above, the data might be naturally observed repeatedly on a regular time grid, forming a time series. In this case, the observations might not be independent and the models and analyses require additional care to take this dependency into account. The existing literature in this setup has mainly been carried out in a non-parametric setting, with classical weak dependence assumptions. This has been done for instance for testing serial dependence \citep{jiang_testing_2023} or for proving the consistency of the Fr\'echet mean estimator \citep{caner_m-estimators_2006}. While this line of work can be broadly applied, they rely on non-parametric assumptions rather than proposing specific models.

On the other hand, time series models have been developed for specific non-Euclidean random objects by exploiting the structure of the space under study. One popular class of models is that of autoregressive models, which have been defined using the linear structure of functional spaces \citep{bosq_linear_2000, caponera_asymptotics_2021} or exploiting a tangent space structure of the space \citep{zhu_spherical_2022, xavier_generalization_2006,Ghodrati_jtsa12736, zhu_autoregressive_2021, jiang_wasserstein_2023} to name a few.

Inspired by existing autoregressive models, we propose an autoregressive model for random objects. Relying on an interpretation of iteration in the linear autoregressive model as a noisy weighted sum to the mean, we define a model parametrized by a mean and a concentration parameter. For this to be possible, we assume additional structure and require the space to be a Hadamard space, and exploit the geometry of the space to define the time series iteration through geodesics. We develop the methodology and associated theory for estimation and hypothesis testing in this model. This includes estimators for the mean and the concentration parameter, and we propose a test statistic for testing for no autocorrelation, corresponding to observing an i.i.d.\,sample, of which we characterize the asymptotic behavior under the null hypothesis and the alternative of a non-zero concentration parameter.

The paper is organized as follows: Section \ref{sec-prelim} gives a presentation of useful concepts and results in Hadamard spaces for the rest of the article. In Section \ref{sec-model}, we present our autoregressive model and present a theorem providing a sufficient condition for the existence of a stationary solution of the iterated system of equations associated with the model, and prove the identifiability of the model parameters. We propose in Section \ref{sec-estimation} estimators for these parameters and prove convergence results for those estimators. In Section \ref{sec-hypothesis-test}, we present our hypothesis test of independence. Finally, we illustrate our theoretical results with a numerical study in Section \ref{sec-numerical} and an application to real data in Section \ref{sec-application}.


\section{Preliminaries}\label{sec-prelim}

Let $(\Omega, d)$ be a metric space and $X$ a random variable, a Borel measurable function from some probability space to $\Omega$. We say that $X \in L^p(\Omega)$ if $\expec{d(X, \omega)^p} < \infty$ for some (and hence by the triangle inequality all) $\omega \in \Omega$. In the study of random objects, the concepts of mean and variance are generalized following the ideas of \citet{frechet_elements_1948}. Given a random variable $X \in L^2(\Omega)$, the Fr\'echet mean and variance of $X$ are defined as
\begin{equation}\label{eq-frechet-def}
    \expec{X} = \argmin_{\omega \in \Omega} \expec{d(X, \omega)^2}\qquad\Var{X} = \inf_{\omega \in \Omega} \expec{d(X, \omega)^2}.
\end{equation}
While the existence of the variance in Euclidean spaces implies the existence of a mean, this is not necessarily the case in general metric spaces. Furthermore, on its own, a metric space offers very little to define parametric models. We now present the additional structure that will be used in this work to construct models for time series of random objects following the presentation in \citet{burago_course_2001} and \citet{auscher_probability_2003}.

We call a map $\gamma : [0,1] \rightarrow \Omega$ a \textit{path} if it continuously maps the unit interval to $\Omega$. A path $\gamma$ such that $d(\gamma(r), \gamma(t)) = d(\gamma(r), \gamma(s)) + d(\gamma(s), \gamma(t))$ for every $r < s < t \in [0, 1]$ is called a \textit{geodesic}.
Given two elements $\omega, \omega' \in \Omega$, a path $\gamma$ is said to \textit{connect} $\omega$ and $\omega'$ if $\gamma(0) = \omega$ and $\gamma(1) = \omega'$. The set of all such paths is denoted by $\Gamma(\omega, \omega')$. The distance function $d$ induces a \textit{length} on the set of paths, defined for each $\gamma$ by
\begin{equation*}
    L_p(\gamma) = \sup \eset{ \sum_{i=1}^k d(\gamma(t_{i-1}), \gamma(t_i)) \mid 0 = t_0 \leq \ldots \leq t_k = 1, k \geq 1 }.
\end{equation*}
By definition, $d(\gamma(0), \gamma(1)) \leq L_p(\gamma)$, and hence $d(\omega, \omega') \leq \inf_{\gamma \in \Gamma(\omega, \omega')} L_p(\gamma)$ for every $\omega, \omega' \in \Omega$.
A metric space in which the previous inequality always holds as an equality is called a \textit{length space}. Furthermore, if there exists a geodesic $\gamma$ connecting each pair $\omega, \omega' \in \Omega$, then we see that the infimum is attained by $\gamma$, and $\Omega$ is called a \textit{geodesic space}.

\begin{figure}[t!]
    \centering
    \includegraphics[width=0.5\textwidth]{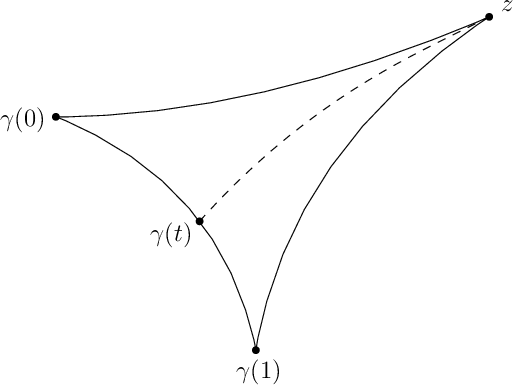}
    \caption{Illustration of the NPC inequality: the distance from any point on a side of the triangle to the opposite vertex is shorter than it would be in an equivalent Euclidean triangle.    }
    \label{fig-npc-triangle}
\end{figure}

A class of metric spaces of special interest are \textit{Hadamard spaces}. A metric space $(\Omega, d)$ is called Hadamard space if it is complete and satisfies the \textit{Non-Positive Curvature (NPC) inequality}: for each pair $\omega_0, \omega_1 \in \Omega$, there exists an $\omega_{1/2} \in \Omega$ such that for every $z$,
\begin{equation}\label{eq-npc-ineq-midpoint}
    d(z, \omega_{1/2})^2 \leq \frac{1}{2} d(z, \omega_0)^2 + \frac{1}{2} d(z, \omega_1)^2 - \frac{1}{4}d(\omega_0, \omega_1)^2.
\end{equation}
As illustrated in Figure \ref{fig-npc-triangle}, this inequality means that triangles in Hadamard spaces are \textit{thin}. This interpretation can be used equivalently to define of Hadamard spaces via \textit{comparison triangles}, see Chapter 4 of \citet{burago_course_2001}. The following proposition from \citet{auscher_probability_2003} shows that Hadamard spaces are geodesic spaces, and that (\ref{eq-npc-ineq-midpoint}) holds along geodesics.
\begin{proposition}[Proposition 2.3 in \citet{auscher_probability_2003}]\label{prop-sq-dist-convex}
    If $(\Omega, d)$ is a Hadamard space then it is a geodesic space. Moreover, for any pair of points  $\omega_0, \omega_1 \in \Omega$ there exists a unique geodesic connecting them, denoted $\gamma_{\omega_0}^{\omega_1}$. For $t \in [0, 1]$ the intermediate point $\gamma(t)$ depends continuously on the endpoints $\omega_0, \omega_1$. Finally, for any $z \in \Omega$ and $t \in [0, 1]$,
    \begin{equation}\label{eq-npc-ineq}
        d(z, \gamma(t))^2 \leq (1-t) d(z, \gamma(0))^2 + t d(z, \gamma(1))^2 - t(1-t) d(\gamma(0), \gamma(1))^2.
    \end{equation}
\end{proposition}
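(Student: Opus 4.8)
The plan is to derive every claim from the single midpoint inequality (\ref{eq-npc-ineq-midpoint}) together with completeness, in four stages: unique midpoints, bisection construction of the geodesic, the full inequality (\ref{eq-npc-ineq}), and finally continuous dependence. First I would establish that every pair $\omega_0,\omega_1$ has a \emph{unique} midpoint. Existence is exactly the NPC axiom; for the distances, taking $z=\omega_0$ (resp.\ $z=\omega_1$) in (\ref{eq-npc-ineq-midpoint}) gives $d(\omega_0,m)\leq\tfrac12 d(\omega_0,\omega_1)$ and $d(m,\omega_1)\leq\tfrac12 d(\omega_0,\omega_1)$, and the triangle inequality forces equality, so any midpoint $m$ satisfies $d(\omega_0,m)=d(m,\omega_1)=\tfrac12 d(\omega_0,\omega_1)$. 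If $m,m'$ are two midpoints, applying (\ref{eq-npc-ineq-midpoint}) with $z=m'$ yields $d(m,m')^2\leq 0$, hence $m=m'$.

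Next I would build the geodesic by iterated bisection. Setting $\gamma(0)=\omega_0$, $\gamma(1)=\omega_1$, I define $\gamma$ at the dyadic rationals by repeatedly taking midpoints, so that $\gamma\bigl(\tfrac{2k+1}{2^{n+1}}\bigr)$ is the midpoint of $\gamma\bigl(\tfrac{k}{2^n}\bigr)$ and $\gamma\bigl(\tfrac{k+1}{2^n}\bigr)$. A short induction using the midpoint distances and the triangle inequality shows $d(\gamma(s),\gamma(t))=|s-t|\,d(\omega_0,\omega_1)$ for all dyadic $s,t$, so $\gamma$ is Lipschitz on the dyadics. Since $\Omega$ is complete, $\gamma$ extends uniquely to a continuous constant-speed map on $[0,1]$ with $d(\gamma(s),\gamma(t))=|s-t|\,d(\omega_0,\omega_1)$, which is the required geodesic, proving $\Omega$ is geodesic. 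Uniqueness then follows because any constant-speed geodesic joining $\omega_0,\omega_1$ must realize the unique midpoint at $t=\tfrac12$, and recursively at every dyadic rational, hence agrees with $\gamma$ on a dense set and therefore everywhere by continuity.

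To upgrade the midpoint inequality to the full inequality (\ref{eq-npc-ineq}), I would fix $z$, write $f(t)=d(z,\gamma(t))^2$ and $L=d(\omega_0,\omega_1)$. Applying (\ref{eq-npc-ineq-midpoint}) to the pair $\gamma(a),\gamma(b)$ — whose midpoint is $\gamma\bigl(\tfrac{a+b}{2}\bigr)$ by the isometry property — gives the strengthened midpoint inequality $f\bigl(\tfrac{a+b}{2}\bigr)\leq\tfrac12 f(a)+\tfrac12 f(b)-\tfrac14(b-a)^2L^2$. The key observation is that the auxiliary function $h(t)=f(t)-t^2L^2$ is then midpoint convex, and being continuous it is convex; the bound $h(t)\leq(1-t)h(0)+th(1)$ rearranges to exactly (\ref{eq-npc-ineq}). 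Continuous dependence on the endpoints then follows from (\ref{eq-npc-ineq}) itself: for two geodesics $\gamma,\sigma$ sharing an initial point, two applications of (\ref{eq-npc-ineq}) collapse (after substitution and cancellation) to $d(\gamma(t),\sigma(t))\leq t\,d(\gamma(1),\sigma(1))$, and symmetrically $d(\gamma(t),\sigma(t))\leq(1-t)\,d(\gamma(0),\sigma(0))$ when they share a terminal point. Interpolating through the geodesic $\gamma_{\omega_0}^{\omega_1'}$ and using the triangle inequality yields the Lipschitz estimate $d\bigl(\gamma_{\omega_0}^{\omega_1}(t),\gamma_{\omega_0'}^{\omega_1'}(t)\bigr)\leq(1-t)\,d(\omega_0,\omega_0')+t\,d(\omega_1,\omega_1')$, giving joint continuity in the endpoints.

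I expect the main obstacle to be the bisection construction: verifying that the dyadically defined midpoints assemble into a \emph{globally} isometric map, rather than a mere collection of local midpoints, requires propagating equality through the triangle inequality at every scale, and it is precisely completeness that licenses the passage from the dense dyadic set to all of $[0,1]$. By contrast, once (\ref{eq-npc-ineq}) is in hand the continuity statement is essentially a formal consequence.
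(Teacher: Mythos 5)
This proposition is imported by citation --- the paper gives no proof of its own, deferring entirely to Proposition 2.3 of the cited Sturm article --- so there is nothing in the paper to compare against step by step. Your blind derivation is correct and is essentially the standard argument from that source: unique midpoints from the NPC axiom with $z=\omega_0,\omega_1$ and then $z=m'$, dyadic bisection plus completeness to build the constant-speed geodesic, midpoint convexity of $t\mapsto d(z,\gamma(t))^2 - t^2 d(\omega_0,\omega_1)^2$ upgraded to convexity by continuity to get \eqref{eq-npc-ineq}, and the two-sided application of \eqref{eq-npc-ineq} collapsing to $d(\gamma(t),\sigma(t))\le (1-t)d(\gamma(0),\sigma(0))+t\,d(\gamma(1),\sigma(1))$ for continuous dependence (which is in fact the paper's Geodesic Comparison Inequality in weakened form). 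The only point worth making explicit is that your uniqueness claim is for the constant-speed parametrization --- the paper's definition of geodesic admits reparametrizations --- and that in the midpoint-uniqueness step you are identifying the metric midpoint with the NPC-axiom midpoint, which your $z=m'$ computation does justify.
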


Since (\ref{eq-npc-ineq-midpoint}) is a special case of (\ref{eq-npc-ineq}), we will also refer to the latter as the NPC inequality. Hadamard spaces and the NPC inequality provide a rich context for the study of random objects.  One important result is that for any $X \in L^1(\Omega)$, the function $\omega \mapsto \expec{d(X, \omega)^2 - d(X, z)^2}$ is continuous and uniformly convex, and hence by completeness of the space, has a unique minimizer, see Proposition 4.3 in \citet{auscher_probability_2003}. Since $z$ only enters through an additive term which does not depend on $\omega$, the minimizer of this function is independent of $z$. This implies an alternative definition of the Fr\'echet mean for any Hadamard-value random variable in $L^1(\Omega)$ via an arbitrary $z \in \Omega$,
\begin{equation*}
    \expec{X} = \argmin_{\omega \in \Omega} \expec{d(X, \omega)^2 - d(X, z)^2}.
\end{equation*}
We mention some further useful results in \ref{sec-app-hadamard} and refer the reader to \citet{auscher_probability_2003} for a thorough review of the subject.

We now present a few examples from \citet{auscher_probability_2003} of Hadamard spaces and ways of building Hadamard spaces out of existing ones. The most well-known case of Hadamard space are Hilbert spaces. Since Functional Data Analysis (FDA) is typically carried out in the $L_2$ Hilbert spaces (see \citet{ramsay_functional_2005}), considering Hadamard spaces allows to approach FDA tasks from a random object perspective.

\begin{example}[Hilbert spaces]
    Let $\c{H}$ be a Hilbert space with induced norm $\norm{\cdot}_{\mathcal{H}}$, then $(\c{H}, d)$ with $d(x, y) = \norm{x - y}_{\cal{H}}$ is a Hadamard space. In Hilbert spaces, the Fr\'echet mean of $X$ corresponds to the Bochner integral with respect to the probability measure $P$ of $X$, $\expec{X} = \int X dP$ (see \citet{hsing_theoretical_2015}). In Hilbert spaces, the NPC inequality (\ref{eq-npc-ineq}) holds to equality.
\end{example}



\begin{example}[Constructed Spaces]\label{ex-constructed} Let $(\Omega, d)$ be a Hadamard space. Then 
    \begin{enumerate}
        \item{Any subset $O \subset \Omega$ is a Hadamard space if and only if it is closed and convex.}
        \item{Let $\Theta$ be an arbitrary set and $\omega : \Theta \rightarrow \Omega$ be a bijection. Then, $\Theta$ is a Hadamard space equipped with the distance $d_\omega(\theta, \theta') = d(\omega(\theta), \omega(\theta'))$. Furthermore, $\expec{X} = \omega^{-1}\left(\expec{\omega(X)}\right)$ holds for any $X \in L^1(\Theta)$.}
    \end{enumerate}
\end{example}

One specific example of Hadamard space of particular interest is the space of one-dimensional density functions over an interval equipped with the 2-Wasserstein distance.

\begin{example}[2-Wasserstein Space]\label{ex-wasserstein}
    Let $\W_2(I)$ be the space of probability measures on $I \subset \R$ with finite second moment. This space, endowed with the 2-Wasserstein distance, is a metric space, see \citet{panaretos_invitation_2020}. Consider the subset $\c{D}(I) \subset \W_2(I)$ of distributions having a density with respect to the Lebesgues measure. For two distributions $\mathbb{P}, \mathbb{Q} \in \c{D}(I)$ with quantile functions $F^{-1}_{\mathbb{P}}, F^{-1}_{\mathbb{Q}}$, the 2-Wasserstein distance between $\mathbb{P}$ and $\mathbb{Q}$ is given by
    \begin{equation*}
        d_{\W_2}(\mathbb{P}, \mathbb{Q}) = \norm{F^{-1}_{\mathbb{P}} - F^{-1}_{\mathbb{Q}}}_{L_2[0,1]}.
    \end{equation*}
    The space of quantile functions being a closed and convex subspace of $L_2[0,1]$, it is also a Hadamard space. Hence, $(\c{D}(I), d_{\W_2})$ falls under the second case described in Example \ref{ex-constructed} and is also a Hadamard space.
\end{example}

Another useful example of a constructed Hadamard space is that of symmetric positive definite (SPD) matrices together with the \textit{Log-Cholesky} distance.

\begin{example}[Log-Cholesky distance]\label{ex-logchol}
    Let $\mathcal{S}_p^+$ be the space of SPD matrices of dimension $p$ and $\mathcal{L}_p^+$ be the space of $p \times p$ lower-triangular matrices with positive diagonal elements. Given a matrix $M \in \mathcal{S}_p^+$, the Cholesky decomposition of $M$ is well defined, meaning that there exists a lower-triangular matrix with positive diagonal elements $L \in \mathcal{L}_p^+$ such that $M = LL^\top$. Let $\floor{M}$ be the $p \times p$ matrix such that $\floor{M}_{ij} = M_{ij}$ if $i < j$ and $0$ otherwise and $D(M)$ be the $p \times p$ diagonal matrix with diagonal entries $D(M)_{ii} = M_{ii}$. While simply using the Froebenius distance between Cholesky factors of SPD matrices yields a valid distance, \citet{lin_riemannian_2019} argues that it leads to an unwanted \textit{swelling} effect in geodesics and proposes another distance $d_{\mathcal{S}_p^+}$ treating the diagonal and strictly lower triangular parts of $L$ differently. Let $M_1, M_2 \in \mathcal{S}_p^+$ with Cholesky factors $L_1$ and $L_2$, then the distance $d_{\mathcal{S}_p^+}(M_1, M_2)$ is given by
    \begin{equation*}
        d_{\mathcal{S}_p^+}(M_1, M_2)^2 = \norm{\floor{L_1} - \floor{L_2}}_{F}^2 + \norm{\log D(L_1) - \log D(L_2)}_{F}^2,
    \end{equation*}
    where $\norm{\cdot}_F$ is the Froebenius norm. As a case of Example \ref{ex-constructed}, this is a Hadamard space, which is also shown in \citet{lin_riemannian_2019}, together with other properties of the space.
\end{example}
\section{The GAR(1) Model}\label{sec-model}

\subsection{Model and stationary solution}

Let us consider first a time series $\Xts$ in $\R$ with constant mean $\expec{X_t} = \mu$ for all $t \in \N$. Then, $\Xts$ follows a first-order autoregressive model, denoted AR(1), with concentration parameter $\varphi$ if it satisfies the following relation
\begin{equation}\label{eq-std-ar1}
    X_{t+1} - \mu = \varphi(X_t - \mu) + \varepsilon_{t+1},
\end{equation}
where the noise terms $\eset{\varepsilon_t}_{t \in \N}$ are i.i.d random variables with mean 0. Without the structure of a vector space, this model cannot be directly formulated in general metric spaces. A key insight towards a more general definition of AR(1) models is that (\ref{eq-std-ar1}) can be rewritten as
\begin{equation*}
    X_{t+1} = (1 - \varphi) \mu + \varphi X_t + \varepsilon_{t+1}.
\end{equation*}
This shows that each random variable of the time series can be written as a weighted sum of the overall mean of the time series and the previous observation, perturbed by a centered random noise. For $\varphi \in [0,1]$, this weighted sum corresponds to the point along the geodesic from $\mu$ to $X_t$ at $\varphi$. This interpretation can be used to define an autoregressive process only using geodesics.

Let now $(\Omega, d)$ be a Hadamard space. In this context, we consider a broad class of noise models represented by random maps $\varepsilon : \Omega \rightarrow \Omega$. We say that a random map $\varepsilon$ is \textit{unbiased} if for all $\omega \in 
\Omega$, the random variable $\varepsilon(\omega)$ is in $L^1(\Omega)$ and $\expec{\varepsilon(\omega)} = \omega$. Note that the expectation is the Fr\'echet mean in $\Omega$, hence by (\ref{eq-frechet-def}), the previous statement can be written as
\begin{equation} \label{eq-unbiased}
    \expec{d(\varepsilon(\omega), \omega)^2 - d(\varepsilon(\omega), \omega')^2} < 0\qquad\t{ for all } \omega, \omega' \in \Omega\,\t{with}\,\omega \neq \omega'.
\end{equation}

\begin{figure}[t!]
    \centering
    \includegraphics[width=0.5\textwidth]{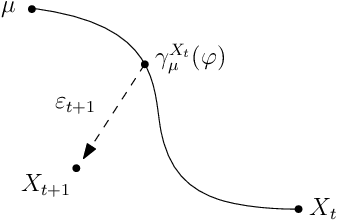}
    \caption{Illustration of the iterated equation (\ref{eq-iterated}).}
    \label{fig-gar}
\end{figure}

We say that a sequence of random variables $\Xts \subset L^1(\Omega)$ with common mean $\mu$ follows the \textit{geodesic autoregressive} model of order 1, GAR(1), with concentration parameter $\varphi \in [0,1]$ if it satisfies the following iterated system of equations 
\begin{equation}\label{eq-iterated}
    X_{t+1} = \varepsilon_{t+1} (\gamma_\mu^{X_t}(\varphi)),
\end{equation}
where $\eset{\varepsilon_t}_{t\in \N}$ are i.i.d unbiased noise maps and $\gamma_\mu^{X_t}$, we recall, is the (random) geodesic connecting $\mu$ to $X_t$. The data generating process is illustrated in Figure \ref{fig-gar}.


    This relation opens the question of whether the condition of a shared Fr\'echet mean $\expec{X_t} = \mu$ and equation (\ref{eq-iterated}) can mutually be fulfilled. Using that $\eset{\varepsilon_t}_{t \in \N}$ are unbiased and assuming that $\expec{X_t} = \mu$, the condition becomes
\begin{equation*}
    \expec{\gamma_{\expec{X_t}}^{X_t}(\varphi) } = \expec{X_t}.
\end{equation*}
Unfortunately, this condition does not hold in every Hadamard space. Thus, we will assume the following.

\begin{assumption}\label{ass-mean-preserving}
    For every $X \in L^1(\Omega)$ and $\varphi \in [0,1]$, $\expec{\gamma_{\expec{X}}^X(\varphi)} = \expec{X}$.
\end{assumption}
Assumption \ref{ass-mean-preserving} holds for some of the examples mentioned in the previous section. For any Hilbert space $\c{H}$, this condition holds by linearity of the expectation since for any $X \in L^1(\c{H})$,
\begin{equation*}
    \expec{\gamma_{\expec{X}}^X(\varphi)} = \expec{(1 - \varphi)\expec{X} + \varphi X} = (1 - \varphi)\expec{X} + \varphi \expec{X} = \expec{X}.
\end{equation*}
Furthermore, if the condition holds for a Hadamard space $(\Omega, d)$, then it also holds for a Hadamard space constructed by taking the image of a bijection $\omega$ as described in the second part of Example \ref{ex-constructed}.
\begin{lemma} \label{lem-ass-constructed}
    Let $(\Omega, d)$ be a Hadamard space and $(\Theta, d_\omega)$ be a constructed Hadamard space based on the bijection $\omega$. If $(\Omega, d)$ satisfies Assumption \ref{ass-mean-preserving}, then so does $(\Theta, d_\omega)$.
\end{lemma}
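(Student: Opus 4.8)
The plan is to exploit the fact that $\omega$ is, by the very definition of $d_\omega$, an isometry between $(\Theta, d_\omega)$ and $(\Omega, d)$, so that the two geometric ingredients entering Assumption \ref{ass-mean-preserving} — geodesics and Fr\'echet means — transfer cleanly between the spaces via $\omega$ and $\omega^{-1}$. Once this transfer is established, the claim for $\Theta$ reduces to the assumed identity for $\Omega$ by a direct substitution.

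First I would record that $\omega$ is an isometric bijection, hence a homeomorphism, since $d_\omega(\theta, \theta') = d(\omega(\theta), \omega(\theta'))$ immediately yields continuity of both $\omega$ and $\omega^{-1}$. I would then show that geodesics correspond under $\omega$: if $\gamma$ is the unique geodesic in $\Omega$ from $\omega(\theta_0)$ to $\omega(\theta_1)$, then $\omega^{-1}\circ\gamma$ is a path in $\Theta$ from $\theta_0$ to $\theta_1$ whose defining additive distance identity is inherited verbatim from that of $\gamma$, because $\omega$ preserves distances. By Proposition \ref{prop-sq-dist-convex}, geodesics in a Hadamard space are unique, so $\omega^{-1}\circ\gamma$ must be the geodesic $\gamma_{\theta_0}^{\theta_1}$ in $\Theta$; equivalently $\gamma_{\theta_0}^{\theta_1}(t) = \omega^{-1}\bigl(\gamma_{\omega(\theta_0)}^{\omega(\theta_1)}(t)\bigr)$ for every $t \in [0,1]$.

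With this correspondence in hand, fix $X \in L^1(\Theta)$ and $\varphi \in [0,1]$, and set $Y = \omega(X)$. Since $\omega$ is an isometry, $Y \in L^1(\Omega)$, and the Fr\'echet-mean transfer formula of Example \ref{ex-constructed} gives $\omega(\expec{X}) = \expec{Y}$. Applying the geodesic correspondence at the (random) endpoints $\expec{X}$ and $X$ then yields $\gamma_{\expec{X}}^X(\varphi) = \omega^{-1}\bigl(\gamma_{\expec{Y}}^Y(\varphi)\bigr)$. A short integrability check — bounding $d(\gamma_{\expec{Y}}^Y(\varphi), \expec{Y})$ by $d(\expec{Y}, Y)$ via the geodesic additivity — shows $\gamma_{\expec{Y}}^Y(\varphi) \in L^1(\Omega)$, so the transfer formula applies once more to give $\expec{\gamma_{\expec{X}}^X(\varphi)} = \omega^{-1}\bigl(\expec{\gamma_{\expec{Y}}^Y(\varphi)}\bigr)$. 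Invoking Assumption \ref{ass-mean-preserving} in $(\Omega, d)$ for $Y$ gives $\expec{\gamma_{\expec{Y}}^Y(\varphi)} = \expec{Y}$, and pulling back produces $\expec{\gamma_{\expec{X}}^X(\varphi)} = \omega^{-1}(\expec{Y}) = \expec{X}$, which is precisely the assertion.

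I expect the only genuinely substantive step to be the geodesic correspondence: one must argue that an isometry sends geodesics to geodesics and then lean on the uniqueness guaranteed by the Hadamard structure to identify $\gamma_{\theta_0}^{\theta_1}$ exactly with $\omega^{-1}\circ\gamma_{\omega(\theta_0)}^{\omega(\theta_1)}$. Everything afterwards is bookkeeping with the two transfer identities together with the routine verification that the maps involved remain in $L^1$.
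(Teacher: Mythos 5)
Your proof is correct. The paper does not actually print a proof of Lemma \ref{lem-ass-constructed}, but your argument --- transferring geodesics through the isometry $\omega$ (identified via the uniqueness of geodesics in Proposition \ref{prop-sq-dist-convex}) and then applying the Fr\'echet-mean transfer formula $\expec{X} = \omega^{-1}\left(\expec{\omega(X)}\right)$ from Example \ref{ex-constructed} twice --- is exactly the route the paper's setup is designed for, and your $L^1$ checks close the only loose ends.
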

Lemma \ref{lem-ass-constructed} shows that Assumption \ref{ass-mean-preserving} holds for a large class of Hadamard spaces, in particular for the subspace $\c{D}(I)$ of $\W_2(I)$ of distributions having a density function, as described in Example \ref{ex-wasserstein}.

To show the existence of a stationary solution to Equation (\refeq{eq-iterated}), we use the framework of \textit{iterated random function systems} presented in \citet{wu_limit_2004}. Let us first introduce some notation. Given the i.i.d noise maps $\eset{\varepsilon_t}_{t\in \N}$, define for all $t \in \N$ the random functions $F_t : \Omega \rightarrow \Omega$,
\begin{equation*}
    F_t(x) = \varepsilon_{t}(\gamma_\mu^x(\varphi)).
\end{equation*}
Then, Equation (\ref{eq-iterated}) can be rewritten as an iterated random function system,
\begin{equation*}
    X_{t+1} = F_{t+1}(X_t).
\end{equation*}
Further, for any $t \in \N$ and $x \in \Omega$, the following random variable will be useful in expressing the condition of existence of a stationary solution,
\begin{equation}\label{eq-f-def}
    X_t(x) =  F_t \circ F_{t-1} \circ \ldots \circ F_1(x),
\end{equation}
then, $X_t = X_t(X_0)$, and the construct also allows to study a coupled version $X_t(X_0')$ of $X_t$ for $X_0' \overset{\mathcal{D}}{=} X_0$. The following theorem provides a sufficient condition for the existence of a stationary solution to (\ref{eq-f-def}) based on a \textit{geometric-moment contracting} condition on the iteration maps $\eset{F_t}_{t \in \N}$.

\begin{theorem}[Theorem 2 of \citet{wu_limit_2004}]\label{thm-wu}
    Suppose there exists $x_0 \in \Omega, \alpha > 0, r \in (0, 1)$ and $C > 0$ such that
    \begin{equation}\label{eq-geom-cond}
        \expec{d(X_t(x), X_t(x_0))^\alpha} \leq C r^t d(x, x_0)^\alpha
    \end{equation} 
    holds for all $x \in \Omega, t \in \N$. Then, for all $x \in \Omega$
    \begin{equation*}
        X^\star_t = \lim_{m \rightarrow \infty} F_t \circ F_{t-1} \circ \ldots \circ F_{t - m + 1}(x)
    \end{equation*}
    exists and does not depend on $x$. Moreover, $\eset{X^\star_t}_{t \in \N}$ is a stationary solution of Equation (\ref{eq-iterated}).
\end{theorem}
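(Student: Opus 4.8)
The plan is to apply the classical backward-iteration argument for iterated random function systems, using the geometric-moment contraction (\ref{eq-geom-cond}) as the engine for convergence. Fix $t \in \N$ and $x \in \Omega$, and denote by
\[
Z_m = F_t \circ F_{t-1} \circ \ldots \circ F_{t-m+1}(x)
\]
the $m$-fold \emph{backward} iterate appearing in the statement. The conceptual starting point is that although the forward iterates $X_m(x)$ from (\ref{eq-f-def}) do not converge (they keep wandering), the backward iterates $Z_m$ do, precisely because the maps $\eset{F_t}_{t\in\N}$ are i.i.d.: for each fixed $m$ one has the distributional identity $Z_m \overset{\mathcal{D}}{=} X_m(x)$, so the bound (\ref{eq-geom-cond}) transfers to the $Z_m$, while the backward family is nested in a way that makes its increments summable. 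I would first show $\eset{Z_m}_m$ is almost surely Cauchy, then identify the limit as $X_t^\star$ using completeness of $\Omega$.

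For the increment bound, write $Z_{m+1} = H_m(F_{t-m}(x))$ and $Z_m = H_m(x)$, where $H_m = F_t \circ \ldots \circ F_{t-m+1}$ is a composition of $m$ maps, independent of $F_{t-m}$ and distributed as $X_m(\cdot)$. Conditioning on $F_{t-m}$ and applying a two-point version of (\ref{eq-geom-cond}) — obtained from the stated one-point version via the triangle inequality and the $c_r$-inequality, giving $\expec{d(X_m(y), X_m(x))^\alpha} \le C' r^m (d(y,x_0)^\alpha + d(x,x_0)^\alpha)$ for deterministic pairs — yields
\[
\expec{d(Z_{m+1}, Z_m)^\alpha} \le C' r^m \left( \expec{d(F_{t-m}(x), x_0)^\alpha} + d(x, x_0)^\alpha \right) \le C'' r^m,
\]
where finiteness of $\expec{d(F_{t-m}(x), x_0)^\alpha} = \expec{d(F_1(x), x_0)^\alpha}$ follows from stationarity and integrability of the noise. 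Since $\sum_m C'' r^m < \infty$, I obtain almost sure Cauchyness: for $\alpha \le 1$ directly from subadditivity of $d^\alpha$ and monotone convergence, and for $\alpha > 1$ after a Lyapunov/Jensen step bounding $\expec{d(Z_{m+1}, Z_m)}$ by $(C'' r^m)^{1/\alpha}$. Completeness of the Hadamard space then furnishes the limit $X_t^\star = \lim_m Z_m$.

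Independence of the limit from the initial point $x$ is the easiest step: for any $x' \in \Omega$, the same distributional identity and two-point bound give $\expec{d(Z_m(x), Z_m(x'))^\alpha} \le C' r^m (d(x,x_0)^\alpha + d(x',x_0)^\alpha) \to 0$, summably, so $d(Z_m(x), Z_m(x')) \to 0$ almost surely and the two limits coincide. Stationarity of $\eset{X_t^\star}_{t\in\N}$ then follows because, for each $t$, $X_t^\star$ is one and the same measurable functional applied to the block of i.i.d.\ maps $(\ldots, F_{t-1}, F_t)$; shifting $t$ shifts the block, and the i.i.d.\ law is shift-invariant, so all finite-dimensional distributions are preserved.

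The remaining, and to my mind the most delicate, step is verifying that $\eset{X_t^\star}$ actually solves (\ref{eq-iterated}), i.e.\ $X_{t+1}^\star = F_{t+1}(X_t^\star)$. From the construction, $F_{t+1}$ applied to the backward iterate at time $t$ is exactly the $(m{+}1)$-fold backward iterate at time $t+1$, which converges to $X_{t+1}^\star$; since the time-$t$ iterate converges to $X_t^\star$, the identity follows once I can pass the limit through $F_{t+1}$. This requires continuity of $F_{t+1}(\cdot) = \varepsilon_{t+1}(\gamma_\mu^{\cdot}(\varphi))$, whose geodesic part $x \mapsto \gamma_\mu^x(\varphi)$ is continuous (indeed non-expansive up to the factor $\varphi$) by Proposition \ref{prop-sq-dist-convex}; I would therefore carry out the limit passage either under an assumed almost-sure continuity of the noise maps or, more robustly, by checking that the combined almost-sure and $L^\alpha$ convergence of the iterates forces $F_{t+1}$ of them to converge to $F_{t+1}(X_t^\star)$ in the same sense. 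Establishing this limit exchange cleanly, rather than the contraction estimate itself, is where I expect the real work to lie.
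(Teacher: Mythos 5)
The paper does not prove this theorem itself --- it is imported verbatim as Theorem 2 of \citet{wu_limit_2004} --- so there is no internal proof to compare against; your reconstruction is the canonical backward-iteration argument and is essentially the proof given in that reference (distributional identity between backward and forward iterates, a two-point contraction derived from (\ref{eq-geom-cond}) via the triangle and $c_r$-inequalities, geometric summability of the increments, almost-sure Cauchyness, and completeness of $\Omega$). Your argument is correct, and the two hypotheses you flag as needed but left implicit in the paper's restatement are real: the integrability $\expec{d(F_1(x_0), x_0)^\alpha} < \infty$ is an explicit assumption in Wu and Shao and is supplied here by the integrability conditions on the noise maps, while the almost-sure continuity of $x \mapsto F_{t+1}(x)$, required to pass the limit through $F_{t+1}$ and verify that $\eset{X_t^\star}$ solves (\ref{eq-iterated}), follows in the paper's application from the continuity of geodesics in their endpoints (Proposition \ref{prop-sq-dist-convex}) together with the Lipschitz assumption on the noise maps.
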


As noted in \citet{wu_limit_2004}, if condition \refeq{eq-geom-cond} holds for some $\alpha \geq 1$, then Hölder's inequality can be directly use to show that it also holds for any $\beta \in (0, \alpha)$. While the GAR(1) model can be defined for $\varphi = 1$, it is unlikely that the process will have a stationary solution in this case. This is due to the fact that the noise maps $\eset{\varepsilon_t}_{t \in \N}$ are unbiased, and the iterated system of equations will not converge to a stationary solution if the noise maps are not contracting.

If we assume that the noise maps $\eset{\varepsilon_t}_{t\in\N}$ are Lipschitz with random Lipschitz constants $K_t \in L^\alpha(\R_{+})$ with $K := \expec{K_t^\alpha}$, we have by using the Geodesic Comparison Inequality (see \ref{prop-geo-comp}) on $d(\gamma_\mu^x (\varphi), \gamma_\mu^{x_0} (\varphi))^2$
\begin{align*}
    \expec{d(X_1(x), X_1(x_0))^\alpha}
    &= \expec{d(\varepsilon_1(\gamma_\mu^x(\varphi)), \varepsilon_1(\gamma_\mu^{x_0}(\varphi)))^\alpha}\\
    &\leq K d(\gamma_\mu^x (\varphi), \gamma_\mu^{x_0} (\varphi))^\alpha\\
    &\leq K \varphi^{\alpha/2} d(x, x_0)^\alpha.
\end{align*}
By induction, this implies $\expec{d(X_t(x), X_t(x_0))^\alpha} \leq \left(K \varphi^{\alpha/2}\right)^t d(x, x_0)^\alpha$, which shows that condition (\ref{eq-geom-cond}) holds if $r = K \varphi^{\alpha/2} < 1$.
\subsection{Identifiability}

Under the conditions of Theorem \ref{thm-wu}, Equation (\ref{eq-iterated}) has a stationary solution and the model features two quantities of interest: the time-invariant Fr\'echet mean of the time series $\mu \in \Omega$, and the concentration parameter $\varphi \in [0, 1]$. Before considering the estimation of these quantities, we show that both are identifiable. The identifiability of the Fr\'echet mean follows directly from the stationarity of the time series and the definition and existence of the Fr\'echet mean in a Hadamard space. 

\begin{theorem}
    Let $\Xts \subset \Omega$ and assume that $\Xts$ satisfies the conditions of Theorem \ref{thm-wu}. Then, the Fr\'echet mean $\mu = \expec{X_t}$ is identifiable.
\end{theorem}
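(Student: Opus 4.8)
The plan is to show that the parameter $\mu$ is a deterministic functional of the law of the process $\Xts$, so that any two model configurations inducing the same process law must agree on $\mu$. Identifiability of $\mu$ is precisely this statement, and it reduces to the existence and uniqueness of the Fr\'echet mean of a single marginal distribution.

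The argument proceeds in three short steps. First, I would invoke Theorem \ref{thm-wu}: under the stated conditions the iterated system admits a stationary solution, so all marginals $X_t$ share one common law $P$ on $\Omega$. Second, since $(\Omega,d)$ is a Hadamard space and each $X_t\in L^1(\Omega)$ (this membership is built into the definition of the GAR(1) model), the map $\omega\mapsto\expec{d(X_t,\omega)^2-d(X_t,z)^2}$ is uniformly convex and continuous, and hence has a unique minimizer by completeness --- this is exactly the existence and uniqueness result of Proposition 4.3 of \citet{auscher_probability_2003} recalled in Section \ref{sec-prelim}. Consequently the assignment $P\mapsto\expec{X_t}$ is single-valued. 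Third, by the very definition of the model the common mean $\mu$ equals this Fr\'echet mean, so $\mu$ is a function of $P$ alone; since $P$ is determined by the distribution of the process, $\mu$ is identifiable.

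I do not expect a genuine obstacle here, as the content collapses to invoking uniqueness of the Fr\'echet mean in a Hadamard space. The only points requiring care are bookkeeping rather than mathematical: one must confirm that the stationary solution produced by Theorem \ref{thm-wu} indeed lies in $L^1(\Omega)$, so that the $L^1$ formulation of the Fr\'echet mean applies and the minimizer is independent of the auxiliary point $z$. It is also worth emphasizing what the theorem does \emph{not} claim: identifiability of $\mu$ is completely decoupled from that of the concentration parameter $\varphi$, which this statement leaves untouched and which would require a separate argument matching, for instance, an autocovariance-type functional of the process.
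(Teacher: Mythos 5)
Your proof is correct and follows essentially the same route as the paper, which justifies this theorem only by the one-line remark that identifiability "follows directly from the stationarity of the time series and the definition and existence of the Fr\'echet mean in a Hadamard space" --- i.e.\ stationarity from Theorem \ref{thm-wu} gives a common marginal law, and Proposition 4.3 of \citet{auscher_probability_2003} gives the unique minimizer. Your added check that the stationary solution lies in $L^1(\Omega)$ is sensible bookkeeping that the paper leaves implicit.
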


As for the concentration parameter, we can consider the mean squared error  
\begin{equation}\label{eq-def-L}
    L(u) = \expec{d\left(X_{t+1}, \gamma_{\mu}^{X_t}(u)\right)^2}.
\end{equation}
Then, assuming that the noise maps are unbiased, we can show that this loss is uniquely minimized by the true concentration parameter $\varphi$.

\begin{theorem}\label{thm-phi-id}
    Let $\Xts \subset \Omega$, assume that $\Xts$ is in $L^2(\Omega)$ and satisfies Equation (\ref{eq-iterated}) with true concentration parameter $\varphi \in [0,1]$. Assume further that the noise maps $\eset{\varepsilon_t}_{t\in \N}$ are unbiased. Then, $\varphi$ is the unique minimizer of $L$.
\end{theorem}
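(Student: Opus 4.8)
The plan is to condition on $X_t$ and reduce the statement to the defining minimization property of the Fr\'echet mean, exploiting that each noise map $\varepsilon_{t+1}$ is unbiased and independent of $X_t$. First I would observe that the stationary solution $X_t$ is a measurable function of the noise maps up to time $t$, hence independent of $\varepsilon_{t+1}$; therefore the conditional law of $X_{t+1} = \varepsilon_{t+1}(\gamma_\mu^{X_t}(\varphi))$ given $X_t = x$ is exactly the law of $\varepsilon_{t+1}(\gamma_\mu^x(\varphi))$ with a fresh noise map. Writing $\omega^\star(x) = \gamma_\mu^x(\varphi)$, this lets me express $L(u) = \expec{\psi_{X_t}(u)}$ where $\psi_x(u) = \expec{d(\varepsilon_{t+1}(\omega^\star(x)), \gamma_\mu^x(u))^2}$ is a deterministic function of $x$ and $u$. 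Finiteness of $L$ follows from $X_t, X_{t+1} \in L^2(\Omega)$ together with $d(\mu, \gamma_\mu^{X_t}(u)) \leq d(\mu, X_t)$.

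The core of the argument is a pointwise analysis of $\psi_x$ for fixed $x$. Unbiasedness gives $\expec{\varepsilon_{t+1}(\omega^\star(x))} = \omega^\star(x)$, so $\omega^\star(x)$ is the Fr\'echet mean of $Y = \varepsilon_{t+1}(\omega^\star(x))$. By the defining property of the Fr\'echet mean in a Hadamard space, the map $\omega' \mapsto \expec{d(Y, \omega')^2 - d(Y, z)^2}$ is uniquely minimized at $\omega' = \omega^\star(x)$; this recentered form only requires $Y \in L^1(\Omega)$, and since the additive $z$-term cancels in differences it suffices to control $\psi_x(u) - \psi_x(\varphi)$. Restricting $\omega'$ to the geodesic by taking $\omega' = \gamma_\mu^x(u)$ then shows $\psi_x(u) \geq \psi_x(\varphi)$ for all $u$, because $\gamma_\mu^x(\varphi) = \omega^\star(x)$ attains the minimum. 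Moreover, when $x \neq \mu$ the constant-speed geodesic $u \mapsto \gamma_\mu^x(u)$ places $\gamma_\mu^x(u)$ at distance $u\,d(\mu,x)$ from $\mu$ and is thus injective, so $\gamma_\mu^x(u) \neq \omega^\star(x)$ for $u \neq \varphi$, and uniqueness of the Fr\'echet mean upgrades this to the strict inequality $\psi_x(u) > \psi_x(\varphi)$.

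Finally I would integrate over $X_t$. Since $\psi_{X_t}(u) - \psi_{X_t}(\varphi) \geq 0$ pointwise and is strictly positive on the event $\eset{X_t \neq \mu}$ whenever $u \neq \varphi$, taking expectations yields $L(u) - L(\varphi) \geq 0$, with equality for some $u \neq \varphi$ only if $\Prob{X_t \neq \mu} = 0$. The main obstacle is therefore the strict uniqueness claim rather than the inequality itself: one must rule out the degenerate case $X_t = \mu$ almost surely, in which $L$ is identically zero and every $u$ minimizes it. This nondegeneracy holds as soon as the noise maps have positive dispersion --- the only case of interest --- since then $X_{t+1}$, and hence the stationary $X_t$, cannot be almost surely equal to its Fr\'echet mean $\mu$. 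I would also take care with the measurability of $x \mapsto \psi_x(u)$ and the conditioning step, both of which follow from the continuous dependence of $\gamma_\mu^x(u)$ on its endpoints established in Proposition \ref{prop-sq-dist-convex}.
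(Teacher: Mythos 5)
Your proposal is correct and follows essentially the same route as the paper's proof: condition on $X_t$, observe that $\gamma_\mu^{X_t}(\varphi)$ is the conditional Fr\'echet mean of $X_{t+1}$ by unbiasedness of $\varepsilon_{t+1}$, and conclude $L(u) > L(\varphi)$ for $u \neq \varphi$ via the tower property. You are in fact slightly more careful than the paper, which passes directly to the strict inequality without noting that it fails pointwise on the event $\eset{X_t = \mu}$ (where the geodesic degenerates), so your explicit treatment of the nondegeneracy condition $\Prob{X_t \neq \mu} > 0$ is a welcome refinement rather than a deviation.
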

\begin{proof}
    Since $\gamma_{\mu}^{X_t}(\varphi)$ is the Fr\'echet mean of $X_{t+1}$ given $X_t$ and $\varepsilon_{t+1}$ is unbiased, for all $\varphi' \in [0,1]$, with $\varphi' \neq \varphi$, we have 
    \begin{align*}
        L(\varphi') 
        &= \expec{d\left(X_{t+1}, \gamma_{\mu}^{X_t}(\varphi')\right)^2}
        = \expec{d\left(\varepsilon_{t+1}(\gamma_{\mu}^{X_t}(\varphi)), \gamma_{\mu}^{X_t}(\varphi')\right)^2}\\
        &= \expec{\expec{d\left(\varepsilon_{t+1}(\gamma_{\mu}^{X_t}(\varphi)), \gamma_{\mu}^{X_t}(\varphi')\right)^2\mid X_t}}\\
        &> \expec{\expec{d\left(\varepsilon_{t+1}(\gamma_{\mu}^{X_t}(\varphi)), \gamma_{\mu}^{X_t}(\varphi)\right)^2\mid X_t}}
        = L(\varphi)
    \end{align*}
    where the inequality follows from the unbiasedness of $\varepsilon_{t+1}$ as defined in Equation \eqref{eq-unbiased}.
\end{proof}

\section{Estimation of model parameters}\label{sec-estimation}

Now that the identifiability of the Fr\'echet mean and the concentration parameter have been established, we show that empirical estimation of the associated risks produces consistent estimators. Furthermore, we show that the Fr\'echet mean can be estimated at a $\sqrt{T}$-rate.

\subsection{Fr\'echet mean}

For simplicity, we assume that the $\Xts$ are $L^2(\Omega)$. Then, the Fr\'echet function $M(\omega) = \expec{d(X, \omega)^2}$ has a natural empirical version based on the observations $X_1, \ldots, X_T$.
\begin{equation}\label{eq-sample-frechet-fn}
    M_T(\omega) = \frac{1}{T}\sum_{t=1}^T d(X_t, \omega)^2.
\end{equation}
We define the estimator $\hat\mu_T$ as the minimizer of $M_T$, which is well-defined by strict convexity of the squared distance in Hadamard spaces. The asymptotic behavior of $\hat\mu_T$ is described by the theory of M-estimation, see for instance \citet{van_der_vaart_weak_1996}, where consistency and rates of convergence are readily available for i.i.d data. Here, we adapt results on iterated random function system from \citet{wu_limit_2004} to verify the general assumptions for M-estimation presented in \citet{van_der_vaart_weak_1996}. One assumption which is standard in the study of random objects, concerns the covering number of $(\Omega, d)$.

\begin{assumption} \label{ass-cov-num}
    Let $B(\mu, \delta)$ be the ball in $\Omega$ of size $\delta$ centered in $\mu$ and $N(\varepsilon, B_\delta(\mu))$ be the covering number of $B_\delta(\mu)$ using balls of size $\varepsilon$. Assume
    \begin{equation*}
        \int_0^1 \sqrt{1 + \log N(\varepsilon\delta, B_\delta(\mu))}\, \d \varepsilon = O(1)\qquad \t{ as } \delta \rightarrow 0.
    \end{equation*}
\end{assumption}

In the following theorem, we show that this assumption, together with the assumptions required for stationary of the sequence $\Xts$, are enough to obtain the $\sqrt{T}$ consistency of the mean estimator. Note that this result is of more general interest since it does not assume that the data follows our GAR(1) model but only requires control on the dependency of the sequence $\eset{X_t}_{t \in \N}$.

\begin{theorem}\label{thm-rate-mu}
    Let $(\Omega, d)$ be a Hadamard space and $\eset{X_t}_{t \in \N}$ be an $L^2(\Omega)$ sequence of random variables satisfying Equation (\ref{eq-geom-cond}) for some $\alpha \geq 1$. Suppose that Assumption \ref{ass-cov-num} holds around the Fr\'echet mean $\mu = \expec{X_t}$. Then, the minimizer $\hat\mu_T$ of $M_T$ is a consistent estimator of $\mu$ and satisfies
    \begin{equation}\label{eq-rate-mu-hat}
        \sqrt{T} d(\mu, \hat\mu_T) = O_P(1).
    \end{equation}
\end{theorem}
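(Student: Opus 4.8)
The plan is to treat $\hat\mu_T$ as an M-estimator minimizing $M_T$ and to apply the rate-of-convergence theorem for M-estimators of \citet{van_der_vaart_weak_1996} (Theorem 3.2.5), whose hypotheses are a quadratic separation of the population minimizer, preliminary consistency of $\hat\mu_T$, and a bound on the local modulus of continuity of the centered empirical process. In the Hadamard setting the separation hypothesis comes for free, and the weak dependence encoded in \eqref{eq-geom-cond} enters essentially only through the modulus bound. Throughout I write $g_\omega(x) = d(x,\omega)^2 - d(x,\mu)^2$, so that $g_\mu \equiv 0$ and $(M_T-M)(\omega) - (M_T-M)(\mu) = \frac1T\sum_{t=1}^T \bigl(g_\omega(X_t) - \expec{g_\omega(X_t)}\bigr)$. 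The separation is the variance inequality valid in any Hadamard space: for $\mu=\expec{X_t}$ and every $\omega$,
\[
    M(\omega) - M(\mu) = \expec{g_\omega(X_t)} \geq d(\mu,\omega)^2,
\]
a consequence of the uniform convexity of the Fréchet function (cf.\ \citet{auscher_probability_2003}). This gives the quadratic separation with constant one, globally, and shows $\mu$ is a well-separated minimizer.

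For consistency I would establish a uniform law of large numbers, $\sup_{d(\omega,\mu)\le R}|M_T(\omega)-M(\omega)|\to 0$ in probability on bounded neighborhoods of $\mu$. Pointwise convergence holds because \eqref{eq-geom-cond} makes the autocovariances of $g_\omega(X_t)$ geometrically summable, and uniformity follows from a chaining argument whose entropy input is Assumption \ref{ass-cov-num}. Since $\hat\mu_T$ cannot escape to infinity (the data concentrate near $\mu$, so $M_T(\hat\mu_T)$ stays bounded), combining the uniform law with the well-separation above yields $d(\hat\mu_T,\mu)\to 0$ in probability.

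The substantive step is the local modulus bound: I must exhibit $\phi_T$ with $\delta\mapsto\phi_T(\delta)/\delta^\beta$ decreasing for some $\beta<2$ such that
\[
    \expec{\sup_{d(\omega,\mu)\le\delta}\left|\frac{1}{\sqrt T}\sum_{t=1}^T\bigl(g_\omega(X_t)-\expec{g_\omega(X_t)}\bigr)\right|} \lesssim \phi_T(\delta),
\]
after which the rate equation $r_T^2\,\phi_T(r_T^{-1})\lesssim\sqrt T$ fixes the rate. The triangle inequality gives $|g_\omega(x)|\le d(\mu,\omega)\bigl(d(x,\omega)+d(x,\mu)\bigr)$ and, analogously, $|g_\omega(x)-g_{\omega'}(x)|\lesssim d(\omega,\omega')\bigl(d(x,\omega)+d(x,\omega')\bigr)$, so the class $\{g_\omega : d(\omega,\mu)\le\delta\}$ has an envelope of $L^2$-size $O(\delta)$ and covering numbers controlled by Assumption \ref{ass-cov-num}. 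In the i.i.d.\ case Dudley's entropy integral would give $\phi_T(\delta)\sim\delta$ and hence $r_T\sim\sqrt T$; the task is to reproduce this maximal inequality under dependence. For that I would invoke the functional-dependence framework of \citet{wu_limit_2004}: the geometric-moment contraction \eqref{eq-geom-cond} forces the dependence measures of the Lipschitz functionals $g_\omega(X_t)$ (and of the increments $g_\omega-g_{\omega'}$) to decay geometrically, yielding Rosenthal-type $L^\alpha$ moment bounds on the partial sums of the same order $O(\sqrt T\,\norm{g_\omega}_\alpha)$ as under independence. Feeding these per-function moment bounds into the chaining argument over $\{g_\omega:d(\omega,\mu)\le\delta\}$ reproduces $\phi_T(\delta)\sim\delta$ with a constant depending only on $r,\alpha,C$, and \citet{van_der_vaart_weak_1996} then delivers $\sqrt T\,d(\mu,\hat\mu_T)=O_P(1)$.

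The hard part is this last step: passing from \eqref{eq-geom-cond}, which couples the flow $X_t(x)$ only from fixed starting points, to a maximal inequality uniform over the infinite class $\{g_\omega\}$. The two delicate points are (i) turning the coupling bound into moment and dependence estimates valid simultaneously for all Lipschitz $g_\omega$ with constants controlled solely by $r,\alpha,C$, and (ii) carrying the classical chaining through with these dependent-data moment bounds replacing symmetrization; a blocking decomposition of the sequence into nearly independent segments is the natural device, and verifying that the block length interacts correctly with the entropy integral is where the estimate must be handled with care.
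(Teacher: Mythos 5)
Your overall architecture (variance inequality for quadratic separation, chaining over the class $\{g_\omega\}$ with the entropy from Assumption \ref{ass-cov-num}, then Theorem 3.2.5 of \citet{van_der_vaart_weak_1996}) is exactly the paper's, and you correctly isolate the one genuinely hard step: the maximal inequality for $\sup_{d(\omega,\mu)\le\delta}|T^{-1/2}\sum_t (g_\omega(X_t)-\expec{g_\omega(X_t)})|$ under the dependence encoded in \eqref{eq-geom-cond}. But that step is precisely where your argument is not carried out, and the mechanism you propose for it --- Rosenthal-type $L^\alpha$ moment bounds on partial sums, fed into chaining via a blocking decomposition --- does not straightforwardly deliver $\phi_T(\delta)\sim\delta$. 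The clean bound $\expec{\sup_\omega|\cdot|}\lesssim \delta\int_0^1\sqrt{1+\log N(\varepsilon\delta,B_\delta(\mu))}\,\d\varepsilon$ that you need comes from the \emph{sub-Gaussian} chaining inequality (Corollary 2.2.8 in \citet{van_der_vaart_weak_1996}); with only polynomial ($L^\alpha$, finite $\alpha$) control of the increments, Dudley-type chaining produces a different entropy functional and in general extra factors, and the interaction of block length with the entropy integral that you flag as needing care is exactly where such an argument tends to lose the parametric rate. Asserting that the i.i.d.\ computation ``reproduces'' under dependence is the gap.

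What the paper does instead is to upgrade the per-increment control from moments to a genuine sub-Gaussian tail: writing $H^\omega_{\omega_0}=T^{-1/2}\sum_t(g^\omega_{\omega_0}(X_t)-\expec{g^\omega_{\omega_0}(X)})$, it solves Poisson's equation $h-\expec{h(X_{t+1})\mid X_t}=g^\omega_{\omega_0}-\expec{g^\omega_{\omega_0}(X)}$ (Gordin's method), decomposes $\sqrt{T}H^\omega_{\omega_0}$ into a negligible boundary term plus a sum of martingale differences $D_t$, and shows $|D_t|\le \tilde C\, d(\omega,\omega_0)$ almost surely. The geometric input making this work is Reshetnyak's quadruple comparison, which gives the Lipschitz bound $|g^\omega_{\omega_0}(x)-g^\omega_{\omega_0}(x')|\le 2d(\omega,\omega_0)d(x,x')$; combined with \eqref{eq-geom-cond} this makes the Poisson series geometrically convergent with all constants proportional to $d(\omega,\omega_0)$. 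Azuma--Hoeffding then yields $\P{|H^\omega_{\omega_0}|\ge\lambda}\le c_1\exp\{-\lambda^2/(c_2 d(\omega,\omega_0)^2)\}$, which is exactly the hypothesis of the sub-Gaussian chaining corollary, and the rest of your outline goes through verbatim. So the fix is not blocking plus Rosenthal but a bounded-martingale-difference construction; without it (or some other route to a sub-Gaussian or Bernstein--Orlicz bound on the increments), the modulus bound $\phi_T(\delta)\sim\delta$, and hence the $\sqrt{T}$ rate, is not established.
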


The proof of Theorem \ref{thm-rate-mu} can be found in \ref{sec-proofs}.

\subsection{Concentration parameter}

Similarly to the Fr\'echet mean, we construct an estimator of the concentration parameter by minimizing an empirical version of $L$ in Equation (\ref{eq-def-L}). We estimate the expectation with the available sample and replace the Fr\'echet mean $\mu$ by the estimator $\hat\mu_T$, giving the following risk function
\begin{equation}\label{eq-def-Ln}
    L_T(u) = \frac{1}{T-1}\sum_{t=1}^{T-1} d(X_{t+1}, \gamma_{\hat\mu_T}^{X_t}(u))^2.
\end{equation}

We prove the consistency of the resulting estimator based on results from \citet{newey_uniform_1991} relying on the compactness of the domain $[0,1]$ and continuity results about $L$ and $L_T$. The consistency result is the following.

\begin{theorem}\label{thm-phi-consistent}
    Assume that $(\Omega, d)$ and $\Xts$ satisfy the conditions of Theorem \ref{thm-rate-mu}. Then, the minimizer $\hat\varphi_T$ of $L_T$ is a consistent estimator of $\varphi$.
\end{theorem}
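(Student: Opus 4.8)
The plan is to treat $\hat\varphi_T$ as a standard extremum estimator over the compact set $[0,1]$ and invoke a consistency argument of the Newey--McFadden type: since Theorem \ref{thm-phi-id} already shows that $L$ is uniquely minimized at $\varphi$, it suffices to establish that $L$ is continuous and that $L_T$ converges to $L$ uniformly in probability on $[0,1]$; the minimizer $\hat\varphi_T$ then converges in probability to $\varphi$. Continuity of $L$ will come for free from the Lipschitz bound described below, so the entire burden of the proof is the uniform convergence
\[
    \sup_{u \in [0,1]} \abs{L_T(u) - L(u)} = o_P(1).
\]
To separate the two sources of error, namely estimating the expectation from a dependent sample and plugging in $\hat\mu_T$ for $\mu$, I would introduce the intermediate risk $\tilde L_T(u) = \frac{1}{T-1}\sum_{t=1}^{T-1} d(X_{t+1}, \gamma_\mu^{X_t}(u))^2$ built from the true mean, and bound $\sup_u\abs{L_T(u)-L(u)}$ by $\sup_u\abs{L_T(u) - \tilde L_T(u)} + \sup_u\abs{\tilde L_T(u) - L(u)}$.

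For the plug-in term I would use the elementary identity $\abs{d(a,b)^2 - d(a,c)^2} \le \left(d(a,b)+d(a,c)\right) d(b,c)$ with $a = X_{t+1}$, $b = \gamma_{\hat\mu_T}^{X_t}(u)$, and $c = \gamma_\mu^{X_t}(u)$. Reparametrizing $\gamma_\mu^{X_t}(u) = \gamma_{X_t}^\mu(1-u)$ so that the two geodesics share the starting point $X_t$, the Geodesic Comparison Inequality (see \ref{prop-geo-comp}) gives $d(\gamma_{\hat\mu_T}^{X_t}(u), \gamma_\mu^{X_t}(u)) \le \sqrt{1-u}\, d(\hat\mu_T,\mu) \le d(\hat\mu_T,\mu)$ uniformly in $u$. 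The remaining factor $d(a,b)+d(a,c)$ is controlled by a sum of distances of the form $d(X_{t+1},\mu)+d(X_t,\mu)+d(\hat\mu_T,\mu)$, whose empirical average is $O_P(1)$ under the $L^2$ assumption. Hence $\sup_u\abs{L_T(u)-\tilde L_T(u)} \le d(\hat\mu_T,\mu)\cdot O_P(1) = o_P(1)$ by the consistency of $\hat\mu_T$ from Theorem \ref{thm-rate-mu}.

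For the uniform law term, I would first note that the stationary solution of Theorem \ref{thm-wu} is a measurable causal function of the i.i.d.\ innovations $\eset{\varepsilon_s}$, so $\eset{(X_t,X_{t+1})}$ is stationary and ergodic; Birkhoff's ergodic theorem then yields the pointwise convergence $\tilde L_T(u) \to L(u)$ almost surely for each fixed $u$, the integrability of $d(X_{t+1},\gamma_\mu^{X_t}(u))^2$ being guaranteed by the $L^2$ assumption. Uniformity over $u$ follows from stochastic equicontinuity: because points on a geodesic satisfy $d(\gamma_\mu^{X_t}(u),\gamma_\mu^{X_t}(u')) = \abs{u-u'}\, d(\mu,X_t)$, the same difference-of-squares identity shows that $u \mapsto d(X_{t+1},\gamma_\mu^{X_t}(u))^2$ is Lipschitz with a random constant bounded by a product of two $L^2$ distances, hence integrable by Cauchy--Schwarz. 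The empirical average of these Lipschitz constants is $O_P(1)$, which makes $\tilde L_T$ stochastically equicontinuous and, combined with the pointwise convergence, yields $\sup_u\abs{\tilde L_T(u) - L(u)} = o_P(1)$ along the lines of \citet{newey_uniform_1991}.

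I expect the uniform law term to be the main obstacle. The weak dependence of the sequence rules out a direct appeal to an i.i.d.\ uniform law of large numbers, so one must justify ergodicity through the causal representation inherited from the framework of \citet{wu_limit_2004} and then upgrade pointwise to uniform convergence via the Lipschitz equicontinuity argument. The delicate part is keeping the moment bookkeeping consistent throughout, since the plug-in error couples $L_T$ to the separately estimated mean $\hat\mu_T$, and all the bounds must be shown to hold uniformly in $u$ with integrable envelopes under only the $L^2$ assumption.
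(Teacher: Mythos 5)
Your proposal is correct and follows the same overall skeleton as the paper: reduce consistency to the argmax/extremum-estimator argument (unique minimizer from Theorem \ref{thm-phi-id}, continuity of $L$, compactness of $[0,1]$, and $\sup_u\abs{L_T(u)-L(u)}=o_P(1)$), and establish the uniform convergence by splitting off the plug-in error from substituting $\hat\mu_T$ for $\mu$ and then proving a uniform law for the remaining empirical risk via pointwise convergence plus a Lipschitz-in-$u$ equicontinuity bound in the style of \citet{newey_uniform_1991}. The differences are in how the two ingredients of the uniform law are obtained. The paper proves pointwise convergence by checking Dini continuity of the summand and invoking the CLT for additive functionals of iterated random function systems (Theorem 3 of \citet{wu_limit_2004}), which yields an $O_P(T^{-1/2})$ rate; you instead observe that the stationary solution is a causal Bernoulli shift of the i.i.d.\ innovations and apply Birkhoff's ergodic theorem, which gives only $o_P(1)$ but is all that consistency requires and is arguably more elementary. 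For the equicontinuity and plug-in bounds the paper repeatedly appeals to Lipschitz continuity of the squared distance on a \emph{bounded} metric space --- a hypothesis not actually listed in the theorem --- whereas your difference-of-squares identity with $L^2$ envelopes and Cauchy--Schwarz avoids that boundedness assumption entirely, so your bookkeeping is in this respect cleaner than the paper's. Your geodesic-comparison bound $d(\gamma_{\hat\mu_T}^{X_t}(u),\gamma_\mu^{X_t}(u))\le\sqrt{1-u}\,d(\hat\mu_T,\mu)$ (after reparametrizing both geodesics to start at $X_t$) is also the correct form; the paper states the factor as $\varphi_0$, which is immaterial since both are at most $1$. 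No gaps.
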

\begin{proof}
    By Proposition \ref{prop-uniform}, we have that $\norm{L_T - L}_\infty = o_P(1)$. Together with the identifiability result in Theorem \ref{thm-phi-id}, we have that $L$ has a unique minimizer. By Corollary 3.2.3 in \citet{van_der_vaart_weak_1996}, any sequence of minimizers $\hat\varphi_T$ of $L_T$ satisfies $\abs{\hat\varphi_T - \varphi} = o_P(1)$.
\end{proof}

Furthermore, we show in Lemma \ref{lem-L-strongly-convex} in the Appendix that $L$, and hence $L_T$, is strongly convex. This makes it possible to use generic convex solvers to find the minimizer of $L_T$.


\section{Test for serial independence}\label{sec-hypothesis-test}


One hypothesis test of interest is whether the random variables $\Xts$ are independent, which corresponds in the GAR(1) model to testing $H_0 : \varphi = 0\,\t{vs.}\,H_1 : \varphi > 0$. Since no strong results are available about the asymptotic distribution of $\hat\varphi_T$, another test statistic must be considered. To that end, let us consider the statistic
\begin{equation}\label{eq-def-Dn}
    D_T = \frac{1}{T-1}\sum_{t=1}^{T-1} d(X_t, X_{t+1})^2.
\end{equation}
We proceed to show that $D_T$ is asymptotically normal with mean and variance depending on the value of the concentration parameter $\varphi$, with smaller values of $D_T$ taken for larger values of $\varphi$. This allows us to build a test that asymptotically has correct level and power.

\begin{theorem}\label{thm-h0-clt}
    Let $X_1, X_2, X_3$ be i.i.d copies of $\varepsilon(\mu)$, then, under $H_0$ and as $T \rightarrow \infty$,
    \begin{equation*}
        \sqrt{T}(D_T - \expec{d(X_1, X_2)^2}) \rightarrow N(0, \sigma_0^2),
    \end{equation*}
    where $\sigma_0^2 = \Var{d(X_1, X_2)^2} + 2\Cov{d(X_1, X_2)^2, d(X_1, X_3)^2}$.
\end{theorem}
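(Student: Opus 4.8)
The plan is to recognize that under $H_0$ the statistic $D_T$ is a normalized sum of a stationary, one-dependent sequence, and then to invoke a central limit theorem for such sequences. First I would observe that when $\varphi = 0$ the iteration (\ref{eq-iterated}) reduces to $X_{t+1} = \varepsilon_{t+1}(\gamma_\mu^{X_t}(0)) = \varepsilon_{t+1}(\mu)$, since $\gamma_\mu^{X_t}(0) = \mu$, so the stationary solution consists of i.i.d.\ copies of $\varepsilon(\mu)$, as assumed in the statement. Setting $Y_t = d(X_t, X_{t+1})^2$, the statistic is $D_T = \frac{1}{T-1}\sum_{t=1}^{T-1} Y_t$, and the sequence $\eset{Y_t}$ is stationary with $\expec{Y_t} = \expec{d(X_1, X_2)^2}$. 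The crucial structural observation is one-dependence: since $Y_t$ is a function of $(X_t, X_{t+1})$ only, whenever $|s-t| \geq 2$ the variables $Y_s$ and $Y_t$ depend on disjoint subcollections of the independent $\eset{X_i}$ and are therefore independent.

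Next I would appeal to the central limit theorem for stationary $m$-dependent sequences (Hoeffding--Robbins), applied with $m = 1$. The only nontrivial hypothesis to verify is the second-moment condition $\expec{Y_1^2} = \expec{d(X_1, X_2)^4} < \infty$; I would either impose this as an $L^4$ integrability requirement on $\varepsilon(\mu)$ or deduce it from the standing moment assumptions on the noise maps. Granting this, the $m$-dependent CLT yields
\begin{equation*}
    \sqrt{T-1}\left(\frac{1}{T-1}\sum_{t=1}^{T-1} Y_t - \expec{Y_1}\right) \rightarrow N(0, \sigma_0^2),\qquad \sigma_0^2 = \Varp{Y_1} + 2\Covp{Y_1, Y_2},
\end{equation*}
the limiting variance being the standard long-run variance for a one-dependent sequence, obtained from $\Varp{\sum_{t=1}^n Y_t} = n\Varp{Y_1} + 2(n-1)\Covp{Y_1, Y_2}$.

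It then remains to rewrite the variance in the stated form and adjust the normalization. I would note that $\Varp{Y_1} = \Var{d(X_1, X_2)^2}$ and $\Covp{Y_1, Y_2} = \Cov{d(X_1, X_2)^2, d(X_2, X_3)^2}$; by exchangeability of the i.i.d.\ triple $X_1, X_2, X_3$ together with the symmetry of $d$, this covariance equals $\Cov{d(X_1, X_2)^2, d(X_1, X_3)^2}$, matching $\sigma_0^2$ in the statement. Finally, writing $\sqrt{T}(D_T - \expec{Y_1}) = \sqrt{T/(T-1)}\cdot\sqrt{T-1}(D_T - \expec{Y_1})$ and using $\sqrt{T/(T-1)} \to 1$ with Slutsky's theorem converts the normalization from $\sqrt{T-1}$ to $\sqrt{T}$.

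The main obstacle is not conceptual but lies in justifying the $m$-dependent CLT in a way consistent with the rest of the paper: concretely, establishing the fourth-moment bound $\expec{d(X_1, X_2)^4} < \infty$ from the model's noise assumptions, and---if a self-contained argument is preferred over citing Hoeffding--Robbins---carrying out the big-block/small-block decomposition that separates the sum into asymptotically independent blocks (inserting single-index gaps to break the one-dependence) and controlling the negligible contribution of the gaps. That blocking argument, rather than the variance bookkeeping, is where the real work sits.
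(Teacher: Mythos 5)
Your proposal follows essentially the same route as the paper: under $H_0$ the $X_t$ are i.i.d., $Y_t = d(X_t,X_{t+1})^2$ forms a stationary $1$-dependent sequence, and the Hoeffding--Robbins CLT for $m$-dependent sequences with $m=1$ gives the result with the stated long-run variance. Your additional care about the fourth-moment condition $\expec{d(X_1,X_2)^4}<\infty$ and the exchangeability step identifying $\Cov{Y_1,Y_2}$ with $\Cov{d(X_1,X_2)^2, d(X_1,X_3)^2}$ is sound and in fact more explicit than the paper's two-line argument.
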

\begin{proof}
    Under $H_0$, the sequence $\Xts$ is formed of i.i.d random variables. We consider the centered summands, $Y_t = d(X_t, X_{t+1})^2 - \expec{d(X_1, X_2)^2}$. The sequence $\eset{Y_t}_{t \in \N}$ is then $m$-dependent with $m = 1$. We obtain the desired result by the Central Limit Theorem for $m$-dependent sequences, see Theorem 2 in \citet{hoeffding_central_1948}.
\end{proof}

To study the behavior of this test statistic under $H_1 : \varphi \neq 0$, we base our analysis on Theorem 3 of \citet{wu_limit_2004} which provides conditions for the asymptotic normality of sums of the form of $D_T$. Under the assumptions required for the existence of a stationary solution, we obtain the result.
\begin{theorem}
    Assume that $\Xts$ satisfies the conditions of Theorem \ref{thm-wu} with $\varphi > 0$, then there exists a $\sigma_{\varphi} \geq 0$ such that
    \begin{equation*}
        \sqrt{T}(D_T - \expec{d(X_1,
         X_2)^2}) \rightarrow N(0, \sigma_{\varphi}^2).
    \end{equation*}
\end{theorem}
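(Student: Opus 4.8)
The plan is to view $D_T$ as a normalized partial sum of a stationary functional of the iterated random function system and to apply Theorem 3 of \citet{wu_limit_2004}. Under the conditions of Theorem \ref{thm-wu} the iteration (\ref{eq-iterated}) admits a stationary solution $\Xts$, and I would write $Y_t = d(X_t, X_{t+1})^2$ so that $D_T = \frac{1}{T-1}\sum_{t=1}^{T-1} Y_t$, with $\expec{Y_t} = \expec{d(X_1, X_2)^2}$ by stationarity. Each $Y_t$ is a measurable functional of the pair $(X_t, X_{t+1})$, and since $X_{t+1} = F_{t+1}(X_t)$ is driven causally by the innovations $\eset{\varepsilon_s}_{s \leq t+1}$, the sequence $\eset{Y_t}_{t \in \N}$ is a causal functional of the i.i.d.\ noise maps, placing it squarely within the scope of \citet{wu_limit_2004}.

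The core of the argument is to verify the two hypotheses of Theorem 3 of \citet{wu_limit_2004}: a second-moment bound $\expec{Y_t^2} < \infty$ and a geometric decay of the dependence of $Y_t$ on the remote past. For the moment condition, note that $\expec{Y_t^2} = \expec{d(X_t, X_{t+1})^4}$, which by the triangle inequality is controlled by $\expec{d(X_t, \mu)^4}$; I would therefore invoke (or strengthen the standing contraction exponent to $\alpha \geq 4$ so as to obtain) an $L^4$ bound on the stationary distribution, which in turn follows from an $L^4$ tail condition on the random Lipschitz constants $K_t$ of the noise maps, exactly as in the contraction computation preceding this section. For the dependence condition I would use the coupling $X_t(x)$ against $X_t(x_0)$ from Equation (\ref{eq-f-def}): the geometric-moment contraction (\ref{eq-geom-cond}) gives $\expec{d(X_t(x), X_t(x_0))^\alpha} \leq C r^t d(x, x_0)^\alpha$, so two trajectories differing only in their innovations before time $t - m$ couple at a geometric rate.

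To transfer this from the state sequence to $\eset{Y_t}$, I would use the elementary bound $\abs{d(x,y)^2 - d(x',y')^2} \leq \left(d(x,x') + d(y,y')\right)\left(d(x,y) + d(x',y')\right)$, obtained from the reverse triangle inequality, and then apply Cauchy--Schwarz so that the geometric contraction of the state couplings (the first factor) and the $L^4$ moments of the states (the second factor) combine to yield $\expec{\abs{Y_t - Y_t'}^2}^{1/2} \leq C' \rho^m$, where $\rho \in (0,1)$ and $Y_t'$ is a coupled version that resamples the innovations before time $t - m$. This shows that $\eset{Y_t}$ is short-range dependent with geometrically summable dependence coefficients, hence its autocovariances are absolutely summable and Theorem 3 of \citet{wu_limit_2004} applies, giving $\sqrt{T}(D_T - \expec{d(X_1, X_2)^2}) \rightarrow N(0, \sigma_\varphi^2)$ with long-run variance $\sigma_\varphi^2 = \Var{Y_0} + 2\sum_{k=1}^\infty \Cov{Y_0, Y_k}$. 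The series converges by the geometric decay, and $\sigma_\varphi^2 \geq 0$ by construction, which accounts for the possibly degenerate case $\sigma_\varphi = 0$ in the statement.

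I expect the principal obstacle to be the transfer of the contraction from the state sequence to the squared-distance functional while simultaneously securing the moment requirements. The contraction (\ref{eq-geom-cond}) is assumed only for \emph{some} $\alpha > 0$, which may be small, whereas the central limit theorem needs a genuine second moment on $Y_t = d(X_t, X_{t+1})^2$, effectively a fourth moment on the process; reconciling the available contraction exponent with the moment requirements of Theorem 3 of \citet{wu_limit_2004}, and carrying out the coupling bound carefully given that $d(x,y)^2$ is only locally (not globally) Lipschitz, is where the real work lies.
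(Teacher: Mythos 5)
Your proposal follows essentially the same route as the paper: the paper's entire justification is to invoke Theorem 3 of \citet{wu_limit_2004} for the stationary functional $d(X_t,X_{t+1})^2$ of the coupled pair process, which is exactly your plan, and your coupling/Dini-type verification mirrors the argument the paper carries out for an analogous functional in its proof of the uniform convergence of $L_T$. You go further than the paper in actually checking the hypotheses, and you correctly flag that the stated assumptions (contraction for \emph{some} $\alpha>0$) do not by themselves supply the second moment of $d(X_t,X_{t+1})^2$ — the paper implicitly leans on boundedness of $\Omega$ elsewhere to close this gap, whereas you propose strengthening the moment/contraction exponent, either of which suffices.
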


In general, it is not clear whether $\expect{\varphi = 0}{D_T} \neq \expect{\varphi = \varphi^\star}{D_T}$ for an arbitrary $\varphi^\star \neq 0$, and whether the test described above has asymptotic power. One possible way to avoid this issue is to require the following monotonicity condition on the noise maps.

\begin{assumption} \label{ass-noise-monotonicity}
    For all $x, y, z \in \Omega$, then the noise maps $\varepsilon$ satisfy the following monotonicity condition
    \begin{equation}\label{eq-noise-monotonous}
        d(x, z) < d(y, z) \Rightarrow \expec{d(\varepsilon(x), z)^2} < \expec{d(\varepsilon(y), z)^2}.
    \end{equation}
\end{assumption}

For any $\varphi > 0$, $d(\gamma_\mu^{X_t}(\varphi), X_t) = (1 - \varphi)d(\mu, X_t) < d(\mu, X_t)$. Together with Assumption \ref{ass-noise-monotonicity}, this gives
\begin{equation*}
    \expect{\varphi = \varphi^\star}{D_T} = \expec{d(\varepsilon_{t+1}(\gamma_\mu^{X_t}(\varphi^\star)), X_t)^2} < \expec{d(\varepsilon_{t+1}(\mu), X_t)^2} = \expect{\varphi = 0}{D_T},
\end{equation*}
which implies that the asymptotic power of the test is 1.

To construct a level $\alpha$ hypothesis test for $H_0 : \varphi = 0\,\t{vs.}\,H_1 : \varphi > 0$, one could reject $H_0$ if the absolute deviation of $D_T$ from its asymptotic mean exceeds a certain threshold $q_\alpha$ based on the result in Theorem \ref{thm-h0-clt}. However, the asymptotic mean and variance of $D_T$ required for this test depend on the underlying data distribution and are unknown. Alternatively, one could attempt to center $D_T$, for instance by considering the randomized statistic $\tilde D_T = \frac{1}{T-1} \sum_{t=1}^{T-1} d(X_t, X_{t+1})^2 - d(X_t, X_{\pi(t)})^2$, where $\pi$ is a random permutation. Similar theoretical arguments as for $D_T$ shows that $\tilde D_T$ is asymptotically normal under $H_0$, with zero mean and a variance estimable from data. This enables normalization of $\tilde D_T$ and the construction of a test based on the asymptotic approximation.

Instead, we use a permutation procedure to compute approximate $p$-values under $H_0$ for better finite sample properties. Specifically, let $B \in \N$ be the number of permutations used for constructing the approximate $p$-value and let $\pi_1, \ldots, \pi_B$ be random permutations of $\eset{1, \ldots, T}$. For each permutation $\pi$, we denote by $D_T^\pi$ the test statistic computed based on the permuted sample $\left(X_{\pi(1)}, \ldots, X_{\pi(T)}\right)$ and define the approximated $p$-value, $\hat p_B = \frac{1}{B}\sum_{b=1}^B \mathbbm{1}\eset{D_T \geq D^{\pi_b}_T}$. Under the null hypothesis of independence, we have that $D_T^\pi \overset{\mathcal{D}}{=} D_T$, and shuffling the observations under the alternative allows to loosen the dependency between consecutive observations, giving an approximate sample under $H_0$. The resulting level $\alpha$ test is then constructed by rejecting $H_0$ if $\hat p_B \leq \alpha$, see \citet{hemerik_exact_2018}.


\section{Numerical experiments}\label{sec-numerical}

In the following, we illustrate our theoretical results with numerical experiments taking place in different Hadamard spaces. We empirically verify the convergence rate of $\hat\mu_T$ proved in Theorem \ref{thm-rate-mu}, verify the consistency of $\hat\varphi_T$ proved in Theorem \ref{thm-phi-consistent} and show that the test constructed via the bootstrapping procedure described in Section \ref{sec-hypothesis-test} has the desired size and increasing power as $T$ grows.

We study three scenarios of time series following the GAR(1) model. The first example is that of the real line $\R$ equipped with the standard Euclidean distance, with a multiplicative noise model. For the second example, we consider the space of density distributions over the real line equipped with the 2-Wasserstein distance, with a geodesic noise model. For the last example, we consider SPD matrices with the Log-Cholesky metric with a noise model based on a Lie group structure defined in \citet{lin_riemannian_2019}.

In each of these scenarios, we generate time series of different lengths $T \in \eset{40,80,160,320,640}$ and for different values of the concentration parameter $\varphi \in \eset{0, 0.1, 0.2, 0.3, 0.4, 0.5, 1}$. Naturally, $\varphi = 0$ and $\varphi = 1$ are special cases that we will consider with care in the evaluation of our results. For each combination of metric space, number of observations and concentration parameter, we generate 1000 datasets. For each dataset, we compute the estimators $\hat\mu_T$ and $\hat\varphi_T$, and run the permutation-based hypothesis test at level $\alpha = 0.05$. We report for each combination of a metric space, $T$ and $\varphi$, the average estimation errors $d(\hat\mu_T, \mu)$ and $\abs{\hat\varphi_T - \varphi}$, as well as properties of the hypothesis test, all calculated over the 1000 runs. 

Additional results can be found in \ref{sec-app-add-sims} where we compare our permutation test for $D_T$ to three other tests: another permutation test based on the $\hat\varphi_T$ estimator and the two tests developed in \citet{jiang_testing_2023}. This appendix shows that the test based on $D_T$ is better calibrated and achieves considerably higher power than all other evaluated in all simulation scenarios considered.

All simulations and analyses are done in Python. The code to reproduce the experiments and figures is available online\footnote{\url{https://github.com/matthieubulte/GAR}}.

\begin{figure}[t!]
    \begin{tikzpicture}
        \draw[dotted, rounded corners=4pt] (-4.2, 3.2) -- (9, 3.2) -- (9,-.5) -- (-4.2,-.5) -- cycle;
        \node (A) at (2.25, 1.5) {\includegraphics[width=11cm]{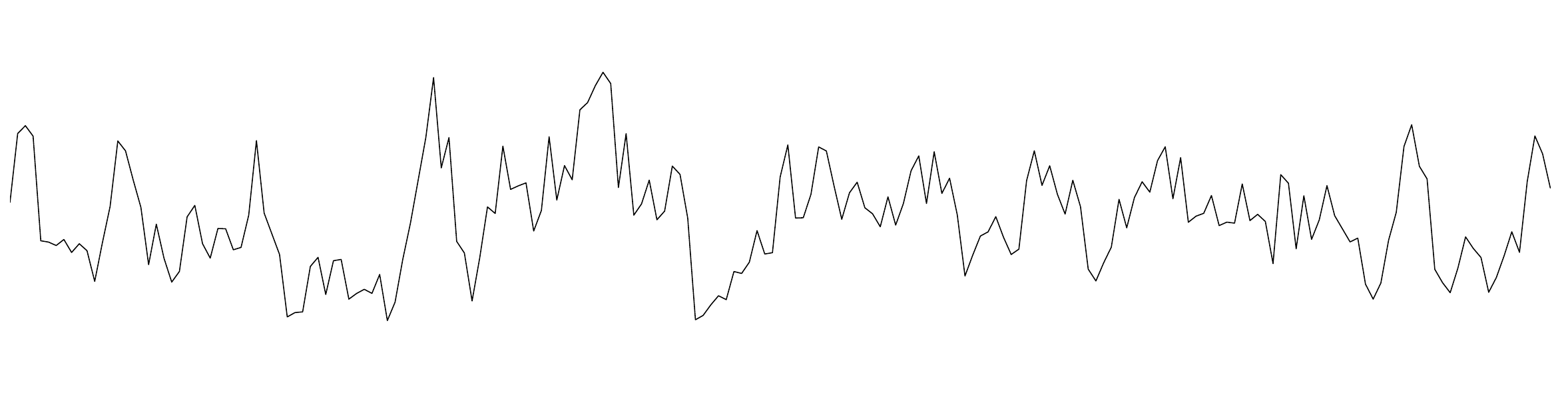}};
        \node[fill=white] (aa) at (-4, 3) {\small\textbf{a}};
        \draw (-3.8, 3.2) -- ++(0,-.4)  --  ++(-.4,0) {[rounded corners=4pt]-- ++(0,.4)} -- cycle;

        \node (x_start) at (-3.5, .05) {};
        \node (x_end) at (8, .05) {};
        \draw[->, line width=0.25mm, axis_color] (x_start) edge (x_end);
        \node (t) at (2.25, -.2) {Time};

        \draw[dotted, rounded corners=4pt] (-4.2, -.6) -- (0.1, -.6) -- (0.1,-4.6) -- (-4.2,-4.6) -- cycle;
        \node (B) at (-2.05, -2.65) {
            \includegraphics[width=4cm]{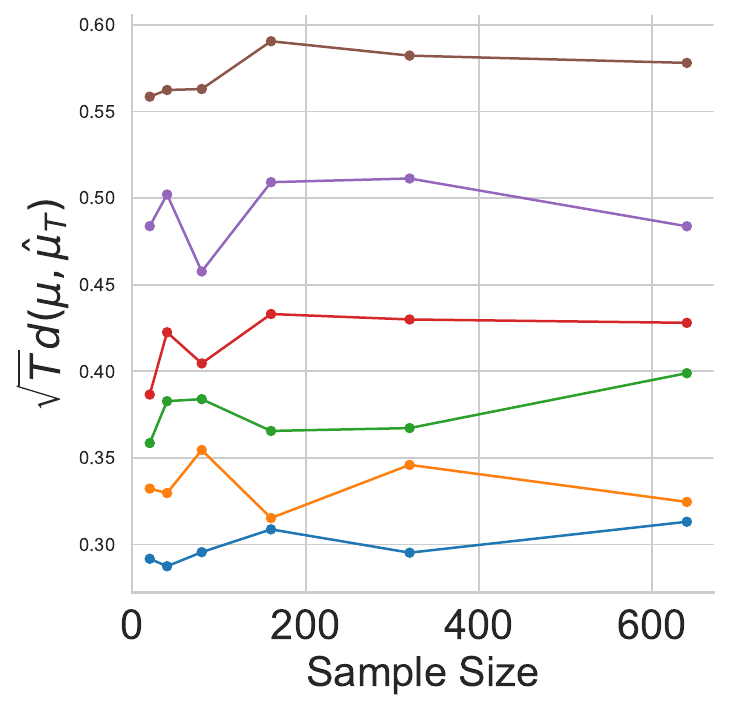}
        };
        \node[] (bb) at (-4, -.8) {\small\textbf{b}};
        \draw (-3.8, -.6) -- ++(0,-.4)  --  ++(-.4,0) {[rounded corners=4pt]-- ++(0,.4)} -- cycle;
        
        \draw[dotted, rounded corners=4pt] (.2, -.6) -- (4.5, -.6) -- (4.5,-4.6) -- (.2,-4.6) -- cycle;
        \node (C) at (2.35, -2.6) {\includegraphics[width=4cm]{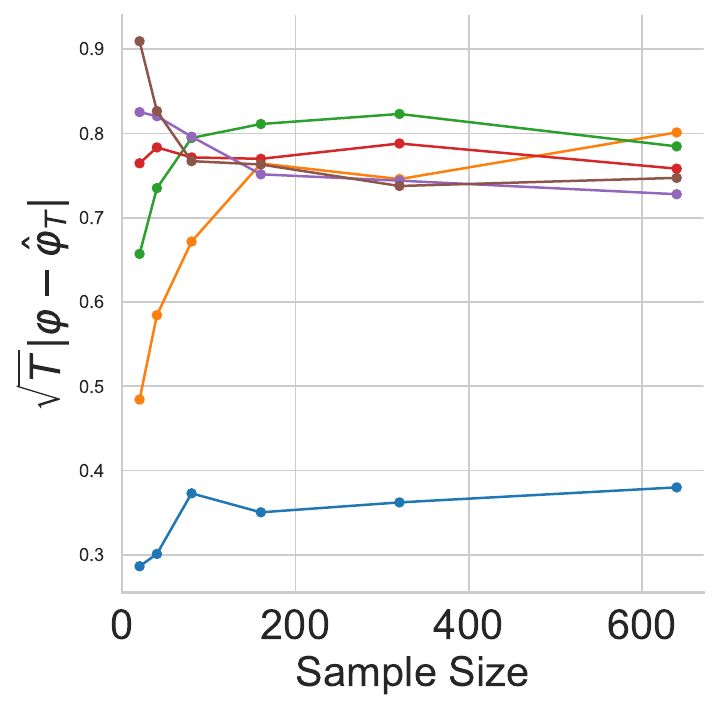}};
        \node[] (cc) at (.4, -.8) {\small\textbf{c}};
        \draw (.6, -.6) -- ++(0,-.4)  --  ++(-.4,0) {[rounded corners=4pt]-- ++(0,.4)} -- cycle;

        \draw[dotted, rounded corners=4pt] (4.6, -.6) -- (9, -.6) -- (9,-4.6) -- (4.6,-4.6) -- cycle;
        \node (D) at (6.75, -2.65) {\includegraphics[width=4cm]{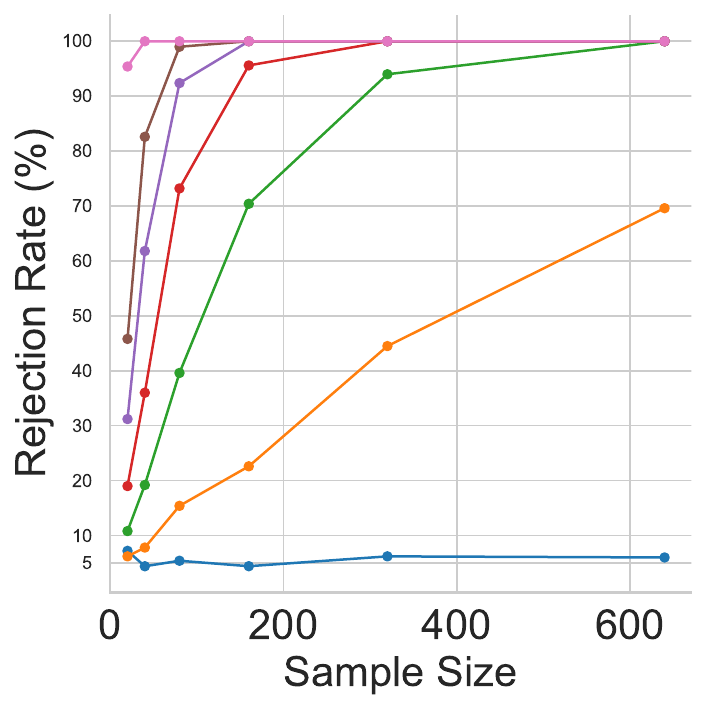}};
        \node[] (dd) at (4.8, -.8) {\small\textbf{d}};
        \draw (5, -.6) -- ++(0,-.4)  --  ++(-.4,0) {[rounded corners=4pt]-- ++(0,.4)} -- cycle;

        \draw[dotted, rounded corners=4pt] (-1.15, -4.7) -- ++(7, 0) -- ++(0,-.55) -- ++(-7,0) -- cycle;
        \node (L) at (2.35, -5.1) {\includegraphics[width=7cm]{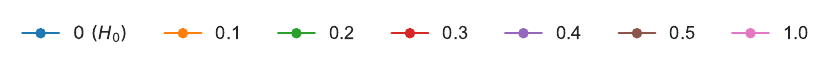}};
        \node[] (ll) at (2.35, -4.9) {\tiny$\varphi$};
    \end{tikzpicture}   

    \caption{Panel (a) depicts a trajectory of 200 time steps from the GAR(1) process described in Section \ref{subsec-num-r}. Panel (b) illustrates the $\sqrt{T}$ convergence of the mean estimator $\hat\mu_T$. Panel (c) shows the convergence of the concentration parameter estimator $\hat\varphi_T$.
    Panel (d) shows the rejection rate of the independence test described in Section \ref{sec-hypothesis-test} with target level 0.05 for different values of $\varphi$.
    }
    \label{fig:r}
\end{figure}

\subsection{$\R$ with multiplicative noise}\label{subsec-num-r}

In the first experimental setup, we investigate the simple case of $(\Omega, d)$ being the real line line $\R$ equipped with the Euclidean distance $d(x, y) = \abs{x - y}$. Here, the Fr\'echet mean corresponds to the ordinary mean and geodesics are given by straight lines, $\gamma_x^y(t) = (1 - t)x + ty$. Furthermore, Assumption \ref{ass-cov-num} is verified since $N(\varepsilon\delta, B_\delta(\mu)) = \varepsilon^{-1}$, thus the entropy integral is bounded and does not depend on $\delta$.

One can see that the unconstrained minimizer of Equation (\refeq{eq-def-Ln}) is the sample autocorrelation. Using the convexity of $L_T$, the estimator $\hat \varphi_T$ of the concentration parameter is still available in closed form by clipping the autocorrelation to positive numbers, giving
\begin{equation}\label{eq-est-r-varphi}
    \hat \varphi_T = \max \eset{0, \frac{\sum_{t=1}^{T-1} (X_{t+1} - \bar X)(X_{t} - \bar X)}{\sum_{t=1}^{T-1} (X_t - \bar X)^2}},
\end{equation}
where $\bar X$ is the sample mean of the time series.

We consider multiplicative noise maps $\varepsilon_i(x) = (1 + \eta_i)x$ where $\eta_i \sim N(0, \sigma^2)$. Then, the noise maps $\eset{\varepsilon_t}_{t \in \N}$ are unbiased and the condition of Theorem \ref{thm-wu} is satisfied for $\varphi < (1 + \sigma^2)^{-1/2}$ since
\begin{equation*}
    \expec{(X_t(x_0) - X_t(x))^2} = \left[\varphi^2(1 + \sigma^2)\right]^t(x_0 - x)^2.
\end{equation*}
In our simulation setup, we work with $\sigma^2 = 0.25^2$ which gives an upper bound $\varphi < (1 + \sigma^2)^{-1/2} \approx 0.97$.

The theoretical results presented in the previous sections are illustrated in Figure \ref{fig:r}. The $\sqrt{T}$ convergence of the Fr\'echet mean estimator holds for $\varphi < 1$ with a reasonably stable value at the tested sample sizes, see panel (b). Panel (c) indicates that the estimator $\hat\varphi_T$ also converges at the parametric $\sqrt{T}$ rate to $\varphi$ for $\varphi < 1$. We can see that for $\varphi = 0$, the error is lower which can be explained by the fact the the sample is i.i.d.\,in this case. In the non-i.i.d.\,case, the estimation error seems smaller for the larger values of $\varphi$ considered. Finally, the rejection rates for the independence test presented in Section \ref{sec-hypothesis-test} show that the test is well calibrated and achieves high power for moderate sample sizes, see panel (d). 
\subsection{Univariate distributions with a density}\label{subsec-wasserstein}

\begin{figure}[ t]
    \begin{tikzpicture}
        \draw[dotted, rounded corners=4pt] (-4.2, 3.2) -- (9, 3.2) -- (9,-.5) -- (-4.2,-.5) -- cycle;
        \node (A) at (2.25, 1.7) {\includegraphics[width=12.5cm]{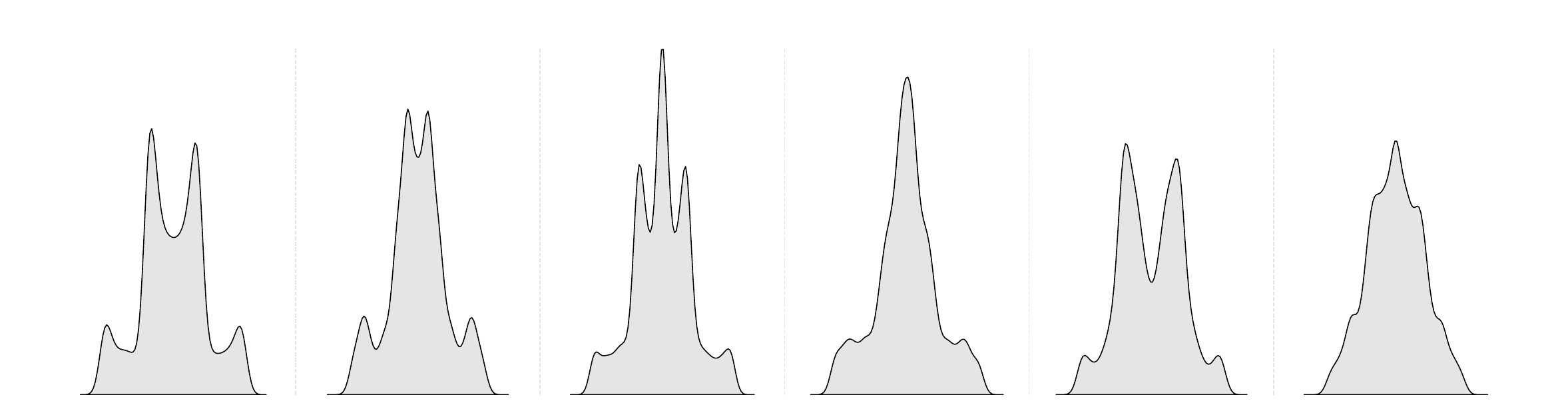}};
        \node[fill=white] (aa) at (-4, 3) {\small\textbf{a}};
        \draw (-3.8, 3.2) -- ++(0,-.4)  --  ++(-.4,0) {[rounded corners=4pt]-- ++(0,.4)} -- cycle;

        \node (x_start) at (-3.5, .05) {};
        \node (x_end) at (8, .05) {};
        \draw[->, line width=0.25mm, axis_color] (x_start) edge (x_end);
        \node (t) at (2.25, -.2) {Time};

        


        \draw[dotted, rounded corners=4pt] (-4.2, -.6) -- (0.1, -.6) -- (0.1,-4.6) -- (-4.2,-4.6) -- cycle;
        \node (B) at (-2.05, -2.65) {
            \includegraphics[width=4cm]{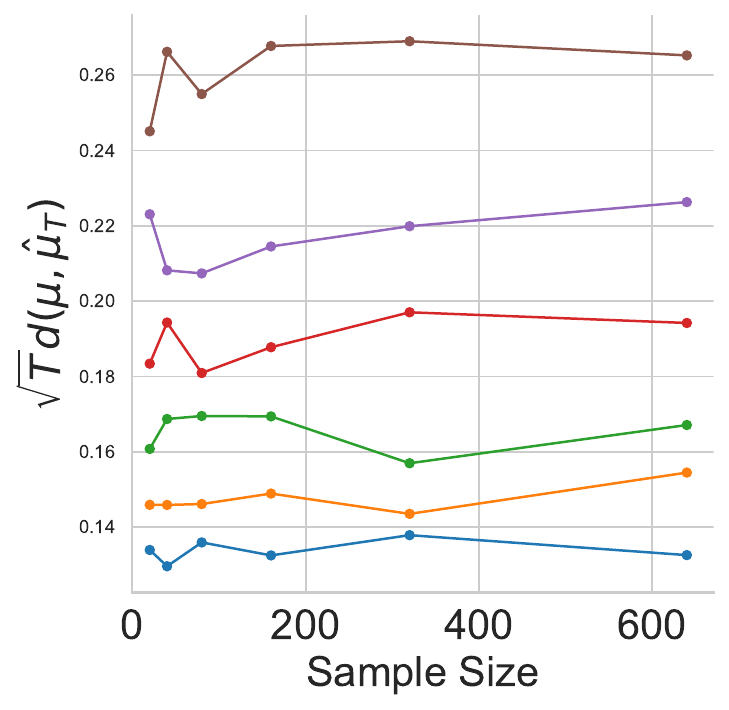}
        };
        \node[] (bb) at (-4, -.8) {\small\textbf{b}};
        \draw (-3.8, -.6) -- ++(0,-.4)  --  ++(-.4,0) {[rounded corners=4pt]-- ++(0,.4)} -- cycle;
        
        \draw[dotted, rounded corners=4pt] (.2, -.6) -- (4.5, -.6) -- (4.5,-4.6) -- (.2,-4.6) -- cycle;
        \node (C) at (2.35, -2.6) {\includegraphics[width=4cm]{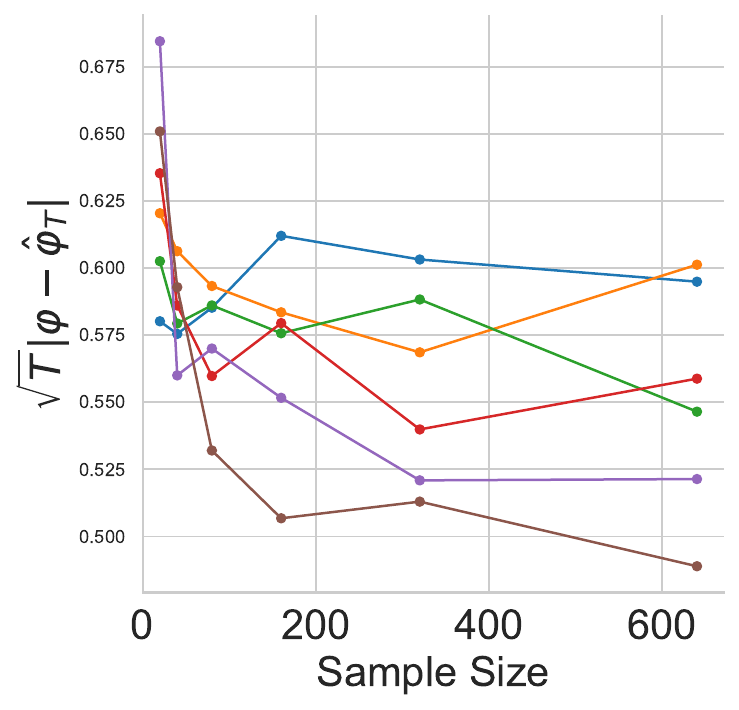}};
        \node[] (cc) at (.4, -.8) {\small\textbf{c}};
        \draw (.6, -.6) -- ++(0,-.4)  --  ++(-.4,0) {[rounded corners=4pt]-- ++(0,.4)} -- cycle;

        \draw[dotted, rounded corners=4pt] (4.6, -.6) -- (9, -.6) -- (9,-4.6) -- (4.6,-4.6) -- cycle;
        \node (D) at (6.75, -2.65) {\includegraphics[width=4cm]{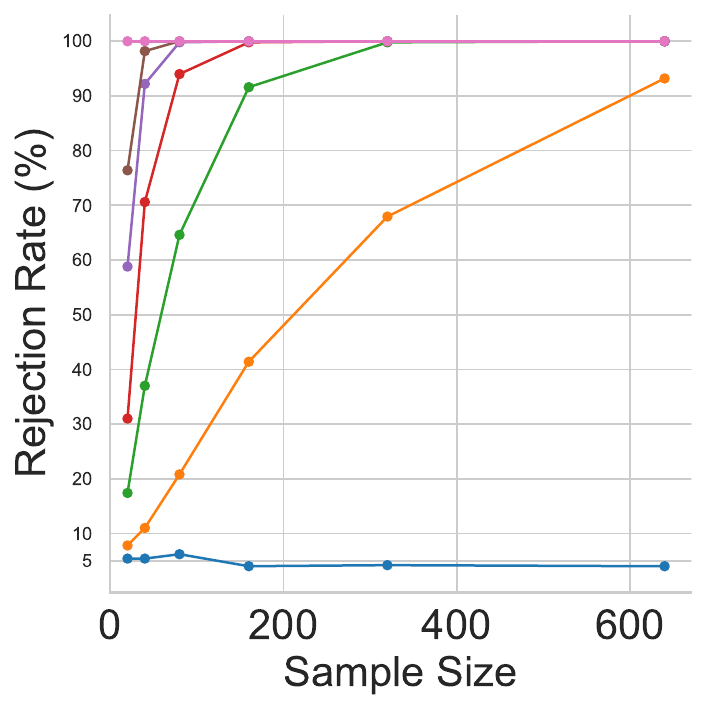}};
        \node[] (dd) at (4.8, -.8) {\small\textbf{d}};
        \draw (5, -.6) -- ++(0,-.4)  --  ++(-.4,0) {[rounded corners=4pt]-- ++(0,.4)} -- cycle;

        \draw[dotted, rounded corners=4pt] (-1.15, -4.7) -- ++(7, 0) -- ++(0,-.55) -- ++(-7,0) -- cycle;
        \node (L) at (2.35, -5.1) {\includegraphics[width=7cm]{plots/legend.pdf}};
        \node[] (ll) at (2.35, -4.9) {\tiny$\varphi$};
    \end{tikzpicture}   

    \caption{Panel (a) shows six consecutive densities sampled from the GAR(1) process described in Section \ref{subsec-wasserstein}. Panels (b), (c) and (d) are generated in the same way as in Figure \ref{fig:r}.}
    \label{fig:wasserstein}
\end{figure}

In this second experiment, we consider time series in the space $\c{D}([0, 1])$ of density functions over $[0, 1]$ equipped with the 2-Wasserstein distance, as described in Example \ref{ex-wasserstein}. Since the support of the distributions is bounded, the space $\c{D}([0,1])$ is bounded as well. Geodesics are given by linear interpolation of the corresponding quantile function: given two distributions $\mathbb{P}, \mathbb{Q} \in \c{D}([0,1])$ with quantile functions $F^{-1}_\mathbb{P}, F^{-1}_\mathbb{Q}$, the quantile function of any point on the connection geodesic $\gamma_\mathbb{P}^\mathbb{Q}$ is $F^{-1}_{\gamma_\mathbb{P}^\mathbb{Q}(t)}(u) = (1-t)F^{-1}_\mathbb{P}(u) + tF^{-1}_\mathbb{Q}(u)$.

We generate the time series with the standard normal distribution $N(0,1)$ truncated to $[0,1]$ as the Fr\'echet mean. Then, the data is generated according to Equation (\ref{eq-iterated}). The noise sampling is based on sampling a random optimal transport $\eta : [0,1] \rightarrow [0,1]$ and applying it by quantile composition, which corresponds to computing the pushforward under $\eta$: given a distribution $\mathbb{P} \in \c{D}([0,1])$ with quantile function $F_{\mathbb{P}}^{-1}$, the noise map $\varepsilon$ is then given by $F_{\varepsilon(\mathbb{P})}^{-1} = \eta \circ F^{-1}_{\mathbb{P}}$. To generate the transport maps $\eta$, we follow the procedure described in \citet{panaretos_amplitude_2016}. First, a random integer frequency is uniformly sampled from $\eset{-4, \ldots, 4} \backslash \eset{0}$, then, the maps are given by $\eta(x) = x - \sin\left(\pi k x\right)/ \abs{\pi k}$. The random maps $\eta$ are smooth, strictly increasing and satisfy $\eta(0) = 0$ and $\eta(1) = 1$. By symmetry of the random parameter $k$, one can see that for any $x \in [0,1]$, we have $\expec{\eta(x)} = x$, and this property is inherited by the noise maps $\varepsilon$. While the noise maps have a Lipschitz constant of 2, meaning that the condition of Theorem \ref{thm-wu} is satisfied for $\varphi < 0.5$, we observe empirically that the estimators seem to still be consistent even for values of $\varphi \in [0.5, 1)$.


The results in Figure \ref{fig:wasserstein} match those observed in the previous experiment. The scaled error curves displayed in panel (b) confirm the convergence rate proved in Theorem \ref{thm-rate-mu}. Similarly to the previous experiment, the conditions of Theorem \ref{thm-wu} are not satisfied for $\varphi = 1$, and the estimator $\hat\mu_T$ fails to converge; to improve readability, we did not include the associated curve. Similarly to the previous experiment, the estimator $\hat\varphi_T$ seems to converge at a $\sqrt{T}$ rate. The error does not seem to improve for i.i.d.\,observations, but the monotonicity as a function of $\varphi$ observed in the previous experiment approximately holds as seen in panel (c). We also observe in panel (d) that the hypothesis test behaves as expected. The blue curve, corresponding to the null hypothesis, demonstrates that the empirical size of the test is correct for all sample sizes considered.
\begin{figure}[t!]
    \begin{tikzpicture}
        \draw[dotted, rounded corners=4pt] (-4.2, 3.2) -- (9, 3.2) -- (9,-.5) -- (-4.2,-.5) -- cycle;
        \node (A) at (2.25, 2.2) {\includegraphics[width=12.5cm]{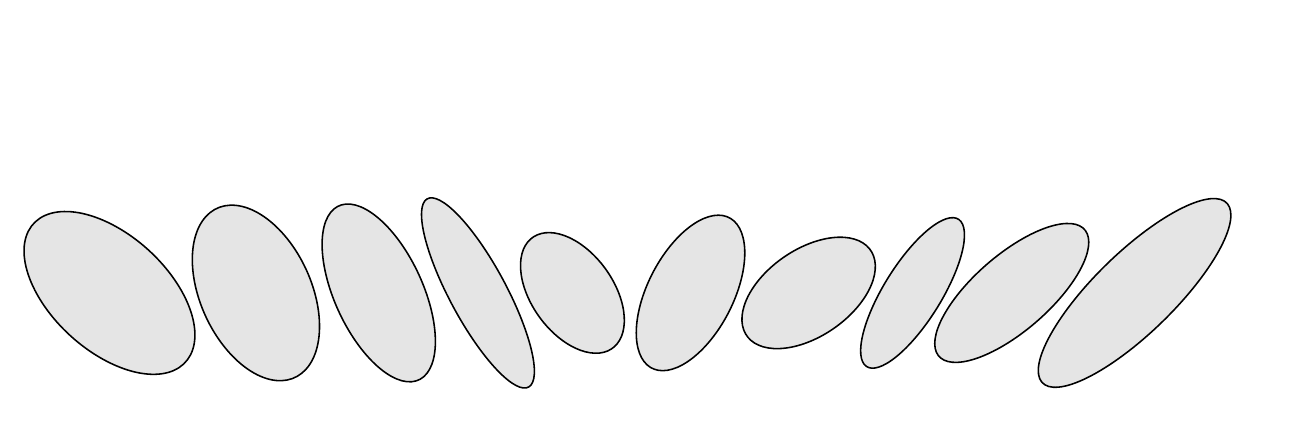}};
        \node[fill=white] (aa) at (-4, 3) {\small\textbf{a}};
        \draw (-3.8, 3.2) -- ++(0,-.4)  --  ++(-.4,0) {[rounded corners=4pt]-- ++(0,.4)} -- cycle;

        \node (x_start) at (-3.5, .05) {};
        \node (x_end) at (8, .05) {};
        \draw[->, line width=0.25mm, axis_color] (x_start) edge (x_end);
        \node (t) at (2.25, -.2) {Time};

        \draw[dotted, rounded corners=4pt] (-4.2, -.6) -- (0.1, -.6) -- (0.1,-4.6) -- (-4.2,-4.6) -- cycle;
        \node (B) at (-2.05, -2.65) {\includegraphics[width=4cm]{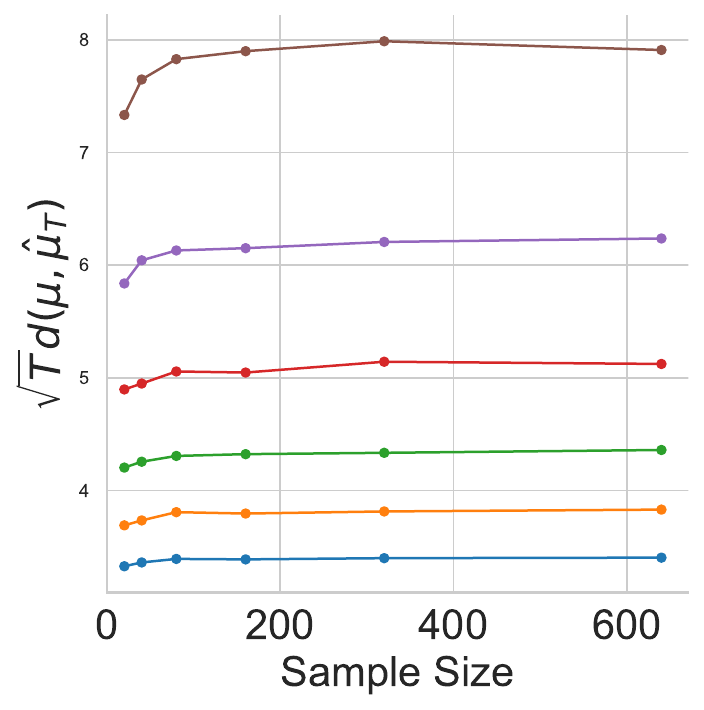}};
        \node[] (bb) at (-4, -.8) {\small\textbf{b}};
        \draw (-3.8, -.6) -- ++(0,-.4)  --  ++(-.4,0) {[rounded corners=4pt]-- ++(0,.4)} -- cycle;
        
        \draw[dotted, rounded corners=4pt] (.2, -.6) -- (4.5, -.6) -- (4.5,-4.6) -- (.2,-4.6) -- cycle;
        \node (C) at (2.35, -2.6) {\includegraphics[width=4cm]{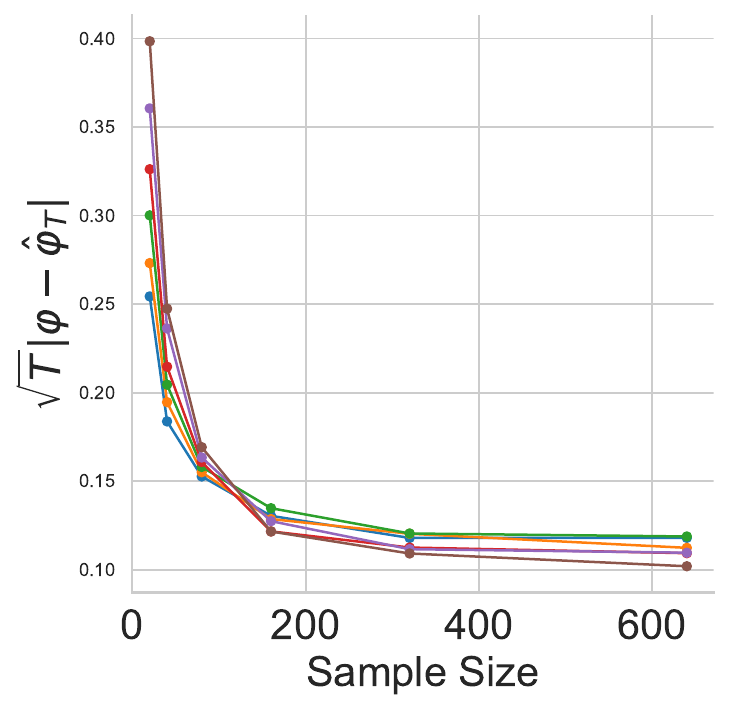}};
        \node[] (cc) at (.4, -.8) {\small\textbf{c}};
        \draw (.6, -.6) -- ++(0,-.4)  --  ++(-.4,0) {[rounded corners=4pt]-- ++(0,.4)} -- cycle;

        \draw[dotted, rounded corners=4pt] (4.6, -.6) -- (9, -.6) -- (9,-4.6) -- (4.6,-4.6) -- cycle;
        \node (D) at (6.75, -2.65) {\includegraphics[width=4cm]{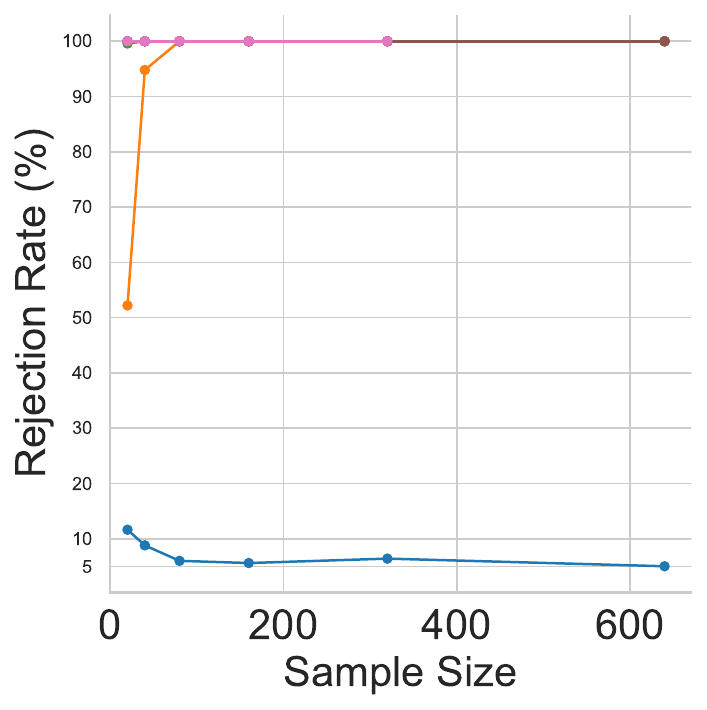}};
        \node[] (dd) at (4.8, -.8) {\small\textbf{d}};
        \draw (5, -.6) -- ++(0,-.4)  --  ++(-.4,0) {[rounded corners=4pt]-- ++(0,.4)} -- cycle;

        \draw[dotted, rounded corners=4pt] (-1.15, -4.7) -- ++(7, 0) -- ++(0,-.55) -- ++(-7,0) -- cycle;
        \node (L) at (2.35, -5.1) {\includegraphics[width=7cm]{plots/legend.pdf}};
        \node[] (ll) at (2.35, -4.9) {\tiny$\varphi$};
    \end{tikzpicture}   

    \caption{Panel (a) displays 10 consecutive covariance ellipses corresponding to the top-left $2 \times 2$ submatrix of the covariances sampled from the GAR(1) process described in Section \ref{subsec-wasserstein}. Each ellipse is the application of a covariance submatrix to a unit circle. Panels (b), (c) and (d) are generated in the same way as in Figure \ref{fig:r}.}
    \label{fig:logchol}
\end{figure}

\subsection{SPD matrices}

In this last experiment we investigate the properties of the GAR(1) model in the space $\mathcal{S}_{10}^+$ of $10\times 10$ SPD matrices with the Log-Cholesky distance described in Example \ref{ex-logchol}. In this space, matrices $M_0, M_1 \in \mathcal{S}_{10}^+$ are uniquely identified by their Cholesky factors $L_0, L_1$. Points on the geodesic line between these matrices are given by linearly interpolating off-diagonal entries of the Cholesky factors and geometrically interpolating the diagonal elements. That is, for $t \in [0,1]$, the Cholesky factor $L_t$ of $\gamma_{M_0}^{M_1}(t)$ is given via $\floor{L_t} = (1-t) \floor{L_0} + t\floor{L_t}$ and $D(L_t) = D(L_0)^{1-t}D(L_1)^t$.

We generate time series with the identity matrix $1_{10}$ as the Fr\'echet mean. Each noise map in this experiment applies a random congruent transformation of the input with a random lower-triangular matrix $L_\varepsilon \in \R^{10 \times 10}$ with $\varepsilon(X) = L_\varepsilon X L_\varepsilon^\top$. The lower-triangular entries of $L_\varepsilon$ are i.i.d. following a normal distribution $\floor{L_\varepsilon}_{ij} \sim N(0, 0.5^2)$, and the diagonal entries are i.i.d.\,following a log-normal distribution $\log D(L_\varepsilon)_{ii} \sim N(0, 0.2^2)$. For a matrix $X \in \mathcal{S}_{10}^+$ with Cholesky decomposition $X = LL^\top$, the matrix $\varepsilon(X)$ is also $\mathcal{S}_{10}^+$ and has Cholesky decomposition $\varepsilon(X) = L_\varepsilon L$, implying $\expec{\varepsilon(X)} = X$.

Figure \ref{fig:logchol} shows similar results as in the other two experimental settings. The convergence rate proved in Theorem \ref{thm-rate-mu} is confirmed in panel (b). In this setting, the stability of the error curves indicates an early attainment of the asymptotic regime. Panel (c) suggests a convergence rate of $\hat\varphi_T$ faster than $\sqrt{T}$, but additional simulations for larger sample sizes rejects this conjecture. We observe in panel (d) that the test exposes the correct level and high power already at small sample sizes for all non-zero tested values of $\varphi$.


\section{Application: Inflation expectation} \label{sec-application}


\begin{figure}[htbp]
    \begin{tikzpicture}
        \node[inner sep=0] (image1) at (0,0) {\includegraphics[width=0.45\textwidth]{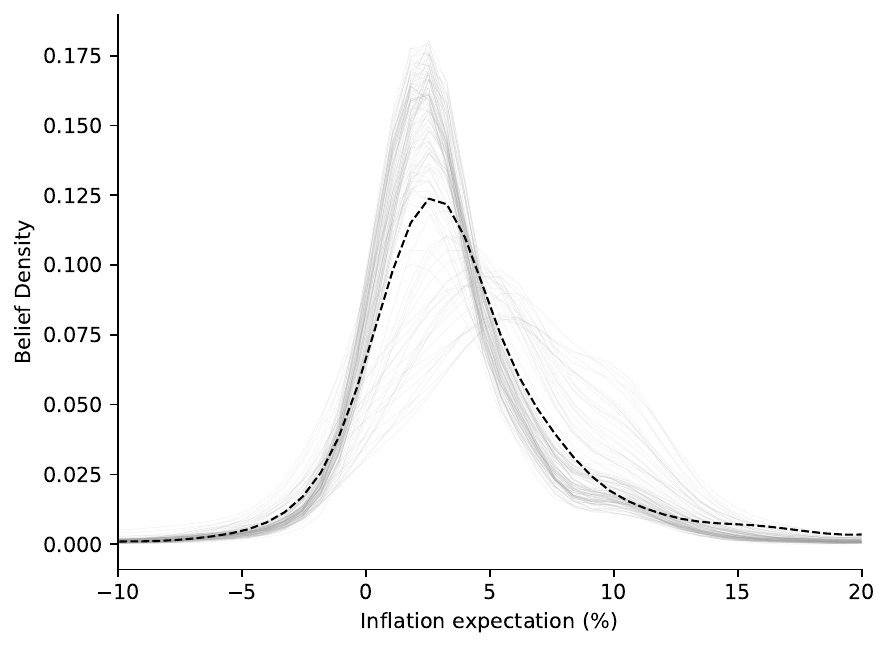}};
        \node[inner sep=0] (image2) at (0.48\textwidth,0.01\textheight) {\includegraphics[width=0.45\textwidth]{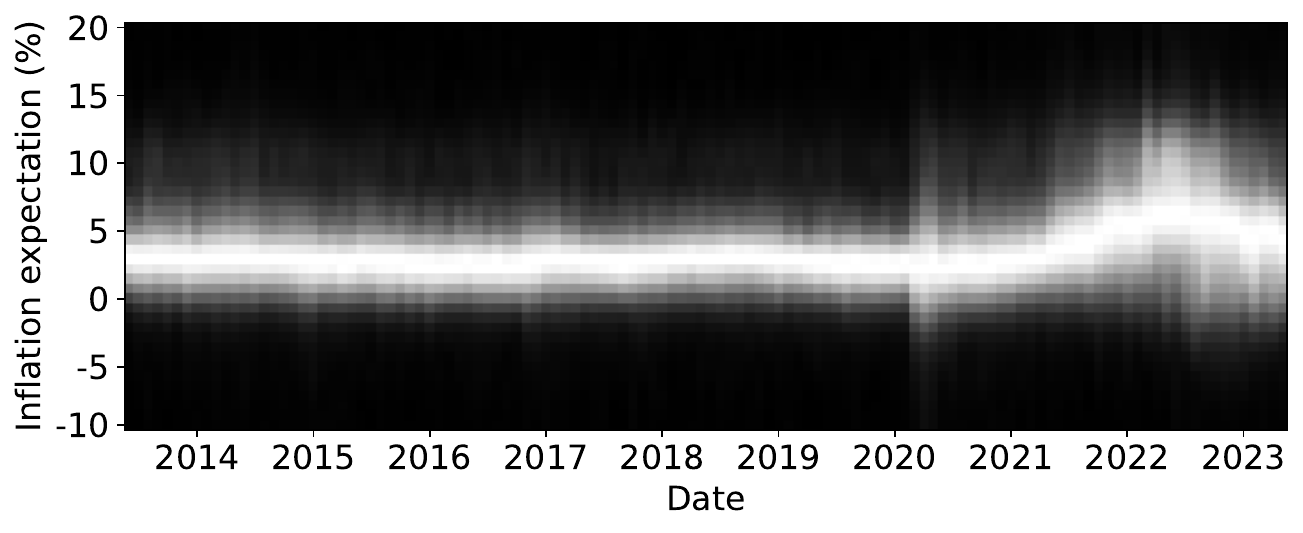}};
    \end{tikzpicture}
    \caption{Left: smoothed densities of the monthly 12-months-ahead inflation expectation. The empirical Fr\'echet mean of these densities is displayed in dashed black. Right: display of the monthly 12-months-ahead inflation expectation densities. In both panels, we only show the $[-10, 20]$ range of the data.}\label{fig-data-analysis-data}
\end{figure}

Analysis of consumer inflation expectations brings insights into how the general populations perceives broader economic trends \citep{dietrich_news_2022, meeks_heterogeneous_2023}. The \textit{Survey of Consumer Expectations} (SCE) is a monthly survey maintained by the Federal Reserve Bank of New York collecting information on households' expectations on a broad variety of economic topics between June 2013 and May 2023, see \citet{armantier_overview_2017}.

We focus our attention on the inflation expectation question, in which each consumer is asked to provide a distribution representing their belief for the 12-months ahead inflation. The survey respondents are presented with pre-defined bins over which they can distribute percentage points, defining a histogram of their beliefs. The bins are given by the nodes $-36\%$, $-12\%$, $-8\%$, $-4\%$, $-2\%$, $0\%$, $2\%$, $4\%$, $8\%$, $12\%$ and $36\%$. Each month, an average of approximately 1293 response histograms are available. We aggregate the histograms monthly by first taking the median belief of each histogram (which is already present in the dataset) and approximate the monthly median belief density via kernel density estimator with a Gaussian kernel and using Scott's rule \citep{scott_multivariate_1992} for the choice of the bandwidth. This results in a time-series of $T = 114$ elements in $\c{D}([-36, 36])$ displayed in Figure \ref{fig-data-analysis-data}. 

We fit the parameters of the GAR(1) model as described in Section \ref{sec-estimation} and obtain an empirical Fr\'echet mean $\hat\mu_T$ displayed in the left panel of Figure \ref{fig-data-analysis-data} (dashed black) and a concentration parameter $\hat\varphi_T = 0.85$ indicating a strong sequential dependence of the densities. The hypothesis test presented in Section \ref{sec-hypothesis-test} rejects the hypothesis of independence at level at a $5\%$ level with a test statistic $D_T \approx 0.76$ and estimated $p$-value $\hat p_B \approx 10^{-3}$ with $B = 1000$ permutations. The left panel of Figure \ref{fig-data-analysis-results} displays the histogram of the bootstrapped values of $D_T$, illustrating the approximate normal distribution of $D_T$ under $H_0$.

To further highlight the auto-regressive aspect of the data, we compare the residuals of the GAR(1) model with those obtained under the null model, $X_t = \varepsilon_t(\mu)$. Using fitted parameters, we generate predictions for each time step under the GAR(1) model as $\hat X_{t+1} = \gamma_{\hat\mu_T}^{X_t}(\hat\varphi_T)$, and under the null model as $\hat X_{t+1}^0 = \hat \mu_T$. We then consider the squared residuals $d(X_{t+1}, \hat X_{t+1})^2$ and $d(X_{t+1}, \hat \mu_T)^2$, respectively. To asses the model's performance, we compute a metric space adaptation of the coefficient of determination, as proposed by \citet{petersen_frechet_2019}. The empirical estimator $R^2_\oplus$ of $R^2_\oplus$ is given by
\begin{equation*}
    \hat R^2_\oplus = 1 - \frac{\sum_{t=1}^{T-1} d(X_{t+1}, \gamma_{\hat\mu_T}^{X_t}(\hat\varphi_T))^2}{\sum_{t=1}^{T-1} d(X_t, \hat\mu_T)^2}.
\end{equation*}

\begin{figure}[t]
    \begin{tikzpicture}
        \node[inner sep=0] (image1) at (0,0) {\includegraphics[width=0.45\textwidth]{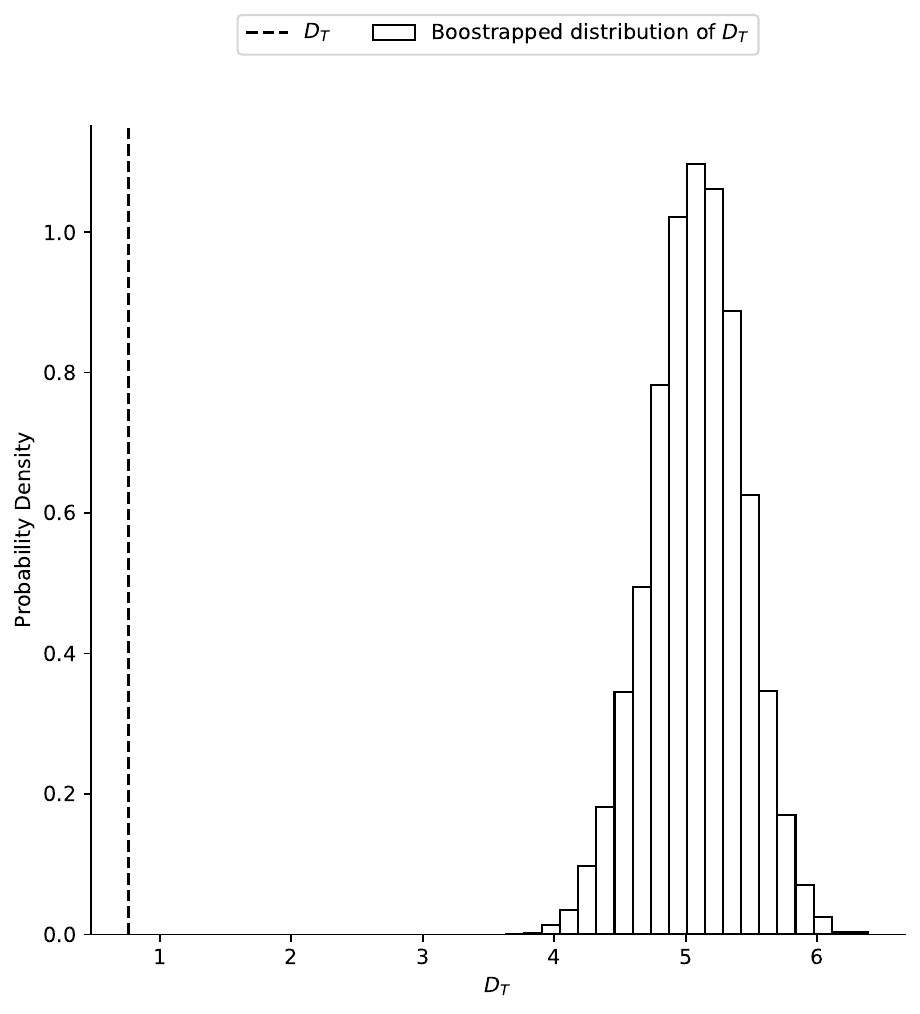}};
        \node[inner sep=0] (image3) at (0.5\textwidth,-0.34) {\includegraphics[width=0.45\textwidth]{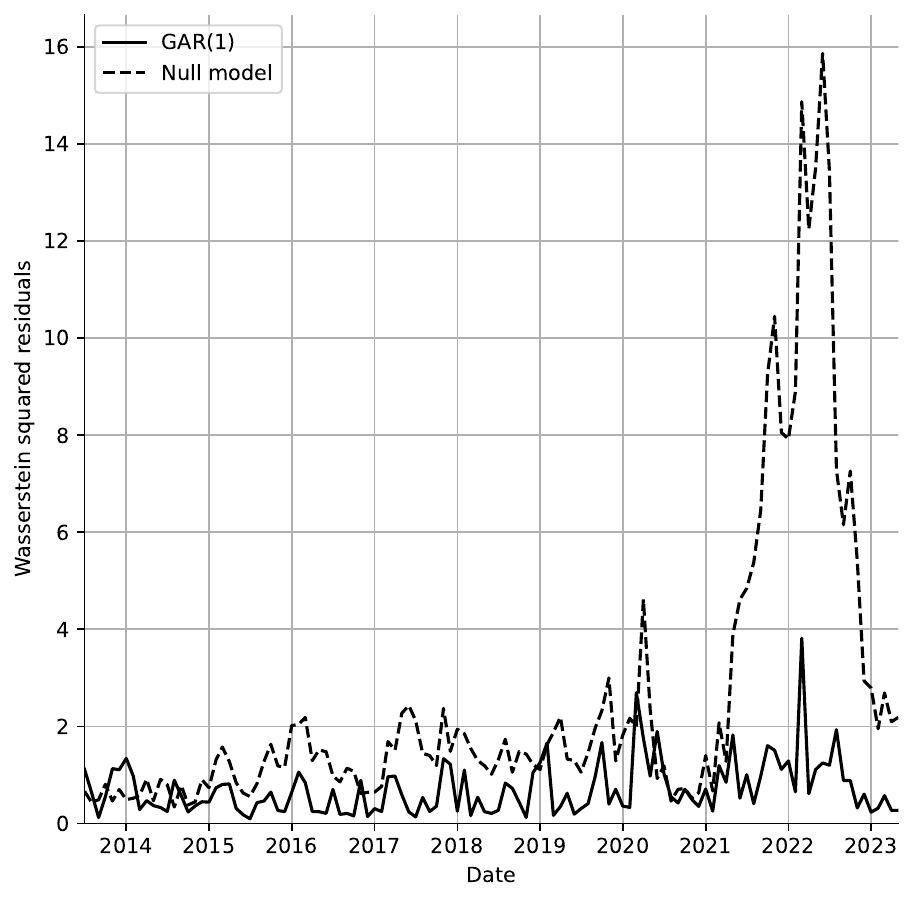}};
    \end{tikzpicture}
    \caption{Left: Histogram of the distribution of $D_T$ under the null hypothesis together with the computed value of $D_T$ on the observed data. Right, top: Empirical cumulative distribution function of the residuals for the GAR(1) model (full line) and null model (dashed line). Right, bottom: Residuals of both models over time.}\label{fig-data-analysis-results}
\end{figure}

In this analysis, we find an empirical coefficient of determination of $\hat R^2_\oplus = 0.72$ indicating that the GAR(1) model is able to explain a significant portion of the variability present in the data. The right panel of Figure \ref{fig-data-analysis-results} shows that the residuals of the GAR(1) fit are smaller than for the fit under the null model, as shown in the upper graph. Furthermore, as shown in the lower-right panel, the residuals under the null model residuals increase and are high during the years 2021 and 2022 while the GAR(1) residuals stay stable over time. This is consistent with the data as shown in the right panel of Figure \ref{fig-data-analysis-data} where it is visible that there is a shift in inflation expectation from 2021 onwards, possibly due to the economical impact of the COVID 19 pandemic and the escalation of the Russo-Ukranian war in early 2022. While this shift results in unexplained deviations from the mean in the null model starting in 2021, the time dependence of the GAR(1) model allows for a better fit in these years.
\section{Discussion} \label{sec-discussion}

This paper proposed a first-order autoregressive model for time series of random variable residing in metric spaces. The model is parametrized by a Fr\'echet mean and a concentration parameter which we proved can be consistently estimated under mild assumptions. This paper also presents a test for serial dependence under the GAR(1) model, allowing to test it against a null hypothesis of repeated i.i.d.\,measurements. Monte Carlo experiments as well as a real data analysis demonstrated the theoretical properties of the model as well as its practical relevance.

Several directions could be taken to extend the results of this paper. First, as observed in the experiments in Section \ref{sec-numerical}, a $\sqrt{T}$ rate of convergence of the estimator of the concentration parameter $\varphi$ appears to hold. This rate, as well as other stronger results about the asymptotic behavior of $\hat\varphi_T$, might be obtained using moment assumptions on the subderivates of $L$, see \citet{niemiro_asymptotics_1992}, or by assuming differentiability, see results from \citet{haberman_concavity_1989}. On the modeling side, two directions could be interesting to explore. A useful extension would be to adapt the model to allow for a negative relationship to the previous time step, with $\varphi < 0$. This can be done naturally in some specific cases by using an existing tangent space structure of the metric space, as done in \citet{Ghodrati_jtsa12736, zhu_autoregressive_2021}, but it is not necessarily clear how to define the notion of a \textit{negative direction} in a more general case. Furthermore, the model presented here only allows for a first-order auto-regressive structure. One possible extension would be to consider higher-order autoregressive models by applying the same principle, this time replacing the one geodesic update in Equation \eqref{eq-iterated} with multiple updates using previous time steps. However, we expect this approach to be challenging to analyze. Instead, it could be of interest to propose another auto-regressive model sharing the similarity to the AR(1) model on the real line, but which could be more easily extended to a higher number of lags. Finally, one could develop other classical tests found in time series analyses to this model class, for instance a test to detect change-points in the Fr\'echet mean or concentration parameter, see \citet{jiang_two-sample_2024}.

\section{Acknowledgement} \label{sec-ack}

This work has received funding from the European Union’s Horizon 2020 research and innovation program under the Marie Skłodowska-Curie grant agreement No 956107, "Economic Policy in Complex Environments (EPOC)".

\bibliographystyle{elsarticle-harv}
\bibliography{references}

\newpage

\appendix

\section{Comparison to other approaches} \label{sec-app-add-sims}

In this section, we compare the test presented in this paper to the serial independence test proposed in \citet{jiang_testing_2023} as well as a test for serial independence constructed by testing $\varphi = 0$ via the estimator $\hat\varphi_T$. We compare the rejection rate of the different tests at level $\alpha = 0.05$ for different simulation setups, values of $\varphi$ and sample sizes in the same three scenarios as in Section \ref{sec-numerical}: $\R$ with multiplicative noise, univariate distributions with the 2-Wasserstein distance and SPD matrices with the Log-Cholesky metric.

The first alternative test uses $\hat\varphi_T$ as the test statistic. Similarly to the test constructed via the statistic $D_T$, to construct a level $\alpha$ hypothesis test for $H_0 : \varphi = 0\,\t{vs.}\,H_1 : \varphi > 0$, we use a permutation procedure to compute approximate $p$-values under $H_0$. With the same notation as in Section \ref{sec-hypothesis-test}, let $\hat\varphi_T^{\pi}$ be the estimator of $\varphi$ computed on the randomly permuted sample $(X_{\pi(1)}, \ldots, X_{\pi(T)})$. The approximate $p$-value using a bootstrapped sample of $B \in \N$ replicas is $\hat p_B = \frac{1}{B}\sum_{b=1}^B \mathbbm{1}\eset{\hat\varphi_T \leq \hat\varphi_T^{\pi}}$. We reject the null hypothesis if $\hat p_B \leq \alpha$ to construct a level $\alpha$ test. While we do not have theoretical results about the asymptotic distribution of $\hat\varphi_T$ justifying this test, it is still of interest to empirically analyze the result of this procedure.

We also compare our test to the methodology presented in \citet{jiang_testing_2023}. The authors propose a generalization of the spectral density function to metric spaces based on the distance covariance, see \citet{lyons_distance_2013}. Based on this, the define two test statistics, $\textrm{CvM}_T$ and $\textrm{KS}_T$. Similarly to our $D_T$, their test statistics are non-pivotal and a wild bootstrap is proposed to obtain the critical values of their test statistics.

The summary of the results can be found in Table \ref{table-rej-rate}. We observe that the tests based on $D_T$ or $\hat\varphi_T$ seem to almost uniformly outperform the tests proposed by \citet{jiang_testing_2023}. This could partially be explained by the fact that our tests are tailored for the data generating process considered, while the tests in \citet{jiang_testing_2023} do not assume any structure on the autoregressive data generating process. We observe that the tests based on $\hat\varphi_T$ and $D_T$ expose approximately similar performance in the first scenario of real numbers with multiplicative noise. For the other two scenarios, the two tests seem to be equally well calibrated, up to stochastic difference, while the test based on $D_T$ achieves considerably higher power compared uniformly over every alternative and sample size we considered.

\begin{table}
    \small
    \centering
    \begin{tabular}{lll|p{1.35cm}|p{1.35cm}|p{1.35cm}|p{1.35cm}}
        \multicolumn{3}{c|}{} & \makecell{\centering $\textrm{CvM}_T$}  & \makecell{\centering $\textrm{KS}_T$}   & \makecell{\centering $\hat\varphi_T$} & \makecell{\centering $D_T$} \\
        \hhline{=======}
        \multirow{9}{*}{$\mathbb{R}$} 
        &    & $n=40$  & \makecell{\centering 0.112} & \makecell{\centering 0.090} & \makecell{\centering 0.058} & \makecell{\centering $\mathbf{\centering 0.050}$} \\ \cline{3-7}
        & $H_0: \varphi=0$                               & $n=80$  & \makecell{\centering 0.076} & \makecell{\centering 0.076} & \makecell{\centering $\mathbf{\centering 0.052}$} & \makecell{\centering 0.034} \\ \cline{3-7}
        &                                & $n=160$ & \makecell{\centering 0.090} & \makecell{\centering 0.074} & \makecell{\centering 0.044} & \makecell{\centering $\mathbf{\centering 0.048}$} \\
        \cline{2-7}
        & \multirow{3}{*}{$\varphi=0.1$} & $n=40$  & \makecell{\centering 0.120} & \makecell{\centering 0.106} & \makecell{\centering $\mathbf{\centering 0.166}$} & \makecell{\centering 0.142} \\ \cline{3-7}
        &                                & $n=80$  & \makecell{\centering 0.160} & \makecell{\centering 0.106} & \makecell{\centering 0.228} & \makecell{\centering $\mathbf{\centering 0.236}$} \\ \cline{3-7}
        &                                & $n=160$ & \makecell{\centering 0.144} & \makecell{\centering 0.066} & \makecell{\centering $\mathbf{\centering 0.312}$} & \makecell{\centering $\mathbf{\centering 0.312}$} \\
        \cline{2-7}
        & \multirow{3}{*}{$\varphi=0.3$} & $n=40$  & \makecell{\centering 0.238} & \makecell{\centering 0.108} & \makecell{\centering $\mathbf{\centering 0.492}$} & \makecell{\centering 0.478} \\ \cline{3-7}
        &                                & $n=80$  & \makecell{\centering 0.434} & \makecell{\centering 0.116} & \makecell{\centering 0.800} & \makecell{\centering $\mathbf{\centering 0.806}$} \\ \cline{3-7}
        &                                & $n=160$ & \makecell{\centering 0.664} & \makecell{\centering 0.118} & \makecell{\centering 0.936} & \makecell{\centering $\mathbf{\centering 0.972}$} \\
        \hhline{=======}
        \multirow{9}{*}{$\mathcal{D}$} 
        &    & $n=40$  & \makecell{\centering 0.156} & \makecell{\centering 0.092} & \makecell{\centering $\mathbf{\centering 0.056}$} & \makecell{\centering 0.036} \\ \cline{3-7}
        & $H_0: \varphi=0$                               & $n=80$  & \makecell{\centering 0.158} & \makecell{\centering 0.092} & \makecell{\centering $\mathbf{\centering 0.042}$} & \makecell{\centering 0.040} \\ \cline{3-7}
        &                                & $n=160$ & \makecell{\centering 0.146} & \makecell{\centering 0.114} & \makecell{\centering $\mathbf{\centering 0.048}$} & \makecell{\centering 0.044} \\
        \cline{2-7}
        & \multirow{3}{*}{$\varphi=0.1$} & $n=40$  & \makecell{\centering $\mathbf{\centering 0.206}$} & \makecell{\centering 0.114} & \makecell{\centering 0.132} & \makecell{\centering 0.180} \\ \cline{3-7}
        &                                & $n=80$  & \makecell{\centering 0.218} & \makecell{\centering 0.096} & \makecell{\centering 0.186} & \makecell{\centering $\mathbf{\centering 0.276}$} \\ \cline{3-7}
        &                                & $n=160$ & \makecell{\centering 0.248} & \makecell{\centering 0.098} & \makecell{\centering 0.150} & \makecell{\centering $\mathbf{\centering 0.530}$} \\
        \cline{2-7}
        & \multirow{3}{*}{$\varphi=0.3$} & $n=40$  & \makecell{\centering 0.336} & \makecell{\centering 0.138} & \makecell{\centering 0.596} & \makecell{\centering $\mathbf{\centering 0.798}$} \\ \cline{3-7}
        &                                & $n=80$  & \makecell{\centering 0.446} & \makecell{\centering 0.146} & \makecell{\centering 0.732} & \makecell{\centering $\mathbf{\centering 0.982}$} \\ \cline{3-7}
        &                                & $n=160$ & \makecell{\centering 0.616} & \makecell{\centering 0.122} & \makecell{\centering 0.818} & \makecell{\centering $\mathbf{1}$} \\
        \hhline{=======}
        \multirow{9}{*}{$\mathcal{S}_{10}^+$} 
        &    & $n=40$  & \makecell{\centering 0.850} & \makecell{\centering 0.344} & \makecell{\centering $\mathbf{\centering 0.012}$} & \makecell{\centering $\mathbf{\centering 0.012}$} \\ \cline{3-7}
        & $H_0: \varphi=0$                               & $n=80$  & \makecell{\centering 0.966} & \makecell{\centering 0.426} & \makecell{\centering $\mathbf{\centering 0.018}$} & \makecell{\centering 0.014} \\ \cline{3-7}
        &  & $n=160$ & \makecell{\centering 0.986} & \makecell{\centering 0.430} & \makecell{\centering 0.016} & \makecell{\centering $\mathbf{\centering 0.042}$} \\
        \cline{2-7}
        & \multirow{3}{*}{$\varphi=0.1$} & $n=40$  & \makecell{\centering 0.882} & \makecell{\centering 0.320} & \makecell{\centering 0.524} & \makecell{\centering $\mathbf{\centering 0.984}$} \\ \cline{3-7}
        &                                & $n=80$  & \makecell{\centering 0.978} & \makecell{\centering 0.368} & \makecell{\centering 0.618} & \makecell{\centering $\mathbf{1}$} \\ \cline{3-7}
        &                                & $n=160$ & \makecell{\centering 0.980} & \makecell{\centering 0.444} & \makecell{\centering 0.654} & \makecell{\centering \makecell{\centering $\mathbf{1}$}} \\
        \cline{2-7}
        & \multirow{3}{*}{$\varphi=0.3$} & $n=40$  & \makecell{\centering 0.824} & \makecell{\centering 0.332} & \makecell{\centering $\mathbf{1}$} & \makecell{\centering $\mathbf{1}$} \\ \cline{3-7}
        &                                & $n=80$  & \makecell{\centering 0.980} & \makecell{\centering 0.390} & \makecell{\centering $\mathbf{1}$} & \makecell{\centering $\mathbf{1}$} \\ \cline{3-7}
        &                                & $n=160$ & \makecell{\centering 0.992} & \makecell{\centering 0.396} & \makecell{\centering $\mathbf{1}$} & \makecell{\centering $\mathbf{1}$} \\
    \end{tabular}
    \caption{Empirical rejection rate for each test at level $\alpha = 0.05$. Each rejection rate is based on 1000 simulations, as described in the introduction to Section \ref{sec-numerical}. In each line, the number in bold corresponds to the rejection rate closest to the desired test level $\alpha = 0.05$ for the rows with $\varphi = 0$ and with the highest rejection rate for the rows with $\varphi > 0$.  } \label{table-rej-rate}
\end{table}

\newpage
\section{Proofs of theorems} \label{sec-proofs}
\subsection*{Consistency of the mean estimator}

We start by defining the following function which will be useful in the proofs presented in this section. Let $\omega, \omega_0 \in \Omega$, we define for all $x \in \Omega$ the function
\begin{equation*}
    g^\omega_{\omega_0}(x) = d(x, \omega)^2 - d(x, \omega_0)^2.
\end{equation*}
In a Hadamard space, $g^\omega_{\omega_0}$ has the following Lipschitz property holding both in $x$ and in the pair $(\omega, \omega_0)$.
\begin{blemma} \label{lem-g-lip}
    Let $(\Omega, d)$ be a Hadamard space and $\omega, \omega_0, x, x' \in \Omega$, then
    \begin{equation*}
        \abs{ g_{\omega_0}^\omega(x) - g_{\omega_0}^\omega(x') } \leq 2d(\omega, \omega_0) d(x, x').
    \end{equation*}
\end{blemma}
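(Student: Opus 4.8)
The plan is to reduce the claim to a single four-point inequality valid in every Hadamard space, rather than to try to bound $g_{\omega_0}^\omega(x)-g_{\omega_0}^\omega(x')$ term by term. First I would record why the elementary route fails. Writing
\begin{equation*}
    g_{\omega_0}^\omega(x) - g_{\omega_0}^\omega(x') = \bigl(d(x,\omega)^2 - d(x',\omega)^2\bigr) - \bigl(d(x,\omega_0)^2 - d(x',\omega_0)^2\bigr)
\end{equation*}
and factoring each difference as $a^2-b^2=(a-b)(a+b)$ controls the factors $\abs{d(x,\omega)-d(x',\omega)}\le d(x,x')$ by the triangle inequality, but leaves the \emph{sum} terms $d(x,\omega)+d(x',\omega)$ uncontrolled. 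The cancellation that actually makes $g_{\omega_0}^\omega$ Lipschitz is a non-positive-curvature phenomenon: in a Hilbert space $g_{\omega_0}^\omega$ is affine in $x$ and the bound is just Cauchy--Schwarz with constant $2d(\omega,\omega_0)$, so in a general Hadamard space the statement must be extracted from the NPC inequality and not from the metric axioms alone.

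The key algebraic step is to regroup the four squared distances so as to expose a quadruple pattern,
\begin{equation*}
    g_{\omega_0}^\omega(x) - g_{\omega_0}^\omega(x') = \bigl(d(x,\omega)^2 + d(x',\omega_0)^2\bigr) - \bigl(d(x,\omega_0)^2 + d(x',\omega)^2\bigr).
\end{equation*}
I would then invoke the \emph{quadruple inequality} for Hadamard spaces: for all $a,b,c,e\in\Omega$,
\begin{equation*}
    d(a,c)^2 + d(b,e)^2 \le d(a,e)^2 + d(b,c)^2 + 2\,d(a,b)\,d(c,e).
\end{equation*}
Applying it with $(a,b,c,e)=(x,x',\omega,\omega_0)$ yields exactly $g_{\omega_0}^\omega(x)-g_{\omega_0}^\omega(x')\le 2\,d(\omega,\omega_0)\,d(x,x')$, and applying it again with $c$ and $e$ interchanged (equivalently, with $x$ and $x'$ interchanged) yields the reverse bound $g_{\omega_0}^\omega(x)-g_{\omega_0}^\omega(x')\ge -2\,d(\omega,\omega_0)\,d(x,x')$. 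Combining the two inequalities gives the absolute-value statement. A sanity check on the real line, where the grouped expression equals $2(x-x')(\omega_0-\omega)$, confirms both the sign bookkeeping and that the constant $2$ is sharp.

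The main obstacle is that the quadruple inequality is not among the results stated in the excerpt, so I would need to justify it from the tools available. It is a standard equivalent characterisation of CAT(0)/Hadamard geometry (Berg--Nikolaev; Sturm), and for completeness I would derive it from the midpoint NPC inequality \eqref{eq-npc-ineq-midpoint}: introduce the midpoint $m$ of an appropriate pair, apply \eqref{eq-npc-ineq-midpoint} with the two remaining points as the reference point $z$, and combine the resulting bounds with the triangle inequality and the elementary estimate $2\alpha\beta\le\alpha^2+\beta^2$. Carrying out this midpoint argument so that the cross term reduces precisely to $2\,d(a,b)\,d(c,e)$ is the one genuinely technical point; everything after the reduction is immediate.
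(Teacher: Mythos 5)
Your proof is correct and follows essentially the same route as the paper: both regroup the four squared distances into the quadruple pattern, apply Reshetnyak's quadruple comparison with $(x,x',\omega,\omega_0)$, and obtain the reverse bound by swapping $x$ and $x'$. Note that the quadruple inequality is in fact available in the paper as Proposition \ref{prop-quadruple-comp} (cited from \citet{auscher_probability_2003}), so the derivation from the midpoint NPC inequality you flag as the technical obstacle is not needed.
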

\begin{proof}
    By Reshetnyak’s Quadruple Comparison (see Proposition \ref{prop-quadruple-comp}),
    \begin{align*}
        & d(x', \omega_0)^2 + d(x, \omega)^2 \leq d(x, \omega_0)^2 + d(x', \omega)^2 + 2d(x, x')d(\omega, \omega_0)\\
        &\Rightarrow d(x, \omega)^2 - d(x', \omega)^2 - \left(d(x, \omega_0)^2 - d(x', \omega_0)^2\right) \leq 2d(x, x')d(\omega, \omega_0)\\
        &\Rightarrow g_{\omega_0}^\omega(x) - g_{\omega_0}^\omega(x') \leq 2d(\omega, \omega_0)d(x, x')
    \end{align*}
    By inverting the role of $x$ and $x'$ we obtain the same upper bound on $g_{\omega_0}^\omega(x') - g_{\omega_0}^\omega(x)$ which completes the proof.
\end{proof}

In order to study the asymptotic behavior of the minimizer $\hat\mu_T$ of $M_T$, we need to quantify the deviations of the empirical loss function $M_T$ from its population limit $M$. To that end, given some $\omega_0 \in \Omega$, we define the process $\omega \mapsto H_{\omega_0}^\omega$ by
\begin{equation}\label{def-h}
    H_{\omega_0}^\omega = \frac{1}{\sqrt{T}} \sum_{t=1}^T g_{\omega_0}^\omega(X_t) - \expec{g_{\omega_0}^\omega(X)}.
\end{equation}
We start with the following proposition showing that $H_{\omega_0}^\omega$ is sub-Gaussian.

\begin{bproposition}\label{prop-subgaussian}
    Under the assumptions of Theorem \ref{thm-rate-mu}, there exists constants $c_1, c_2 > 0 $ such that for all $\lambda > 0$, 
    \begin{equation}\label{eq-prop-subgaussian}
    \P{\abs{H^\omega_{\omega_0}} \geq \lambda}
    \leq
    c_1 \exp\left\{- \frac{\lambda^2}{c_2 d(\omega,\omega_0)^2} \right\}.
    \end{equation}
\end{bproposition}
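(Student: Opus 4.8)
The plan is to isolate the dependence on $d(\omega,\omega_0)$ using Lemma~\ref{lem-g-lip}, control the serial dependence of the centered summands through the geometric-moment contraction \eqref{eq-geom-cond}, and then convert this into an exponential tail bound by a martingale argument. Write $Z_t = g_{\omega_0}^\omega(X_t) - \mathbb{E}[g_{\omega_0}^\omega(X)]$, so that $H_{\omega_0}^\omega = T^{-1/2}\sum_{t=1}^T Z_t$. Lemma~\ref{lem-g-lip} states that $g_{\omega_0}^\omega$ is Lipschitz with constant $2 d(\omega,\omega_0)$; consequently every comparison of $g_{\omega_0}^\omega$ at two arguments carries an explicit factor $d(\omega,\omega_0)$, which is exactly the scale appearing in \eqref{eq-prop-subgaussian}. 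This reduces the problem to a sub-Gaussian bound for the normalized sum of a fixed $1$-Lipschitz functional of the process, with the variance proxy then multiplied by $d(\omega,\omega_0)^2$.

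Next I would exploit the representation of the process as an iterated random function system. Under the hypotheses of Theorem~\ref{thm-rate-mu}, the stationary solution $X_t$ (Theorem~\ref{thm-wu}) is a measurable function of the innovations $(\varepsilon_j)_{j\le t}$, so I set $\mathcal{F}_k = \sigma(\varepsilon_j : j \le k)$ and introduce the coupled process $X_t^{(k)}$ obtained by replacing $\varepsilon_k$ with an independent copy. Applying \eqref{eq-geom-cond} to the block of maps $F_t \circ \cdots \circ F_k$ and using stationarity to bound the remaining factor gives $\mathbb{E}[d(X_t, X_t^{(k)})^\alpha] \le C\, r^{\,t-k}$ for $t \ge k$, i.e.\ the functional dependence measures of the sequence decay geometrically. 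Combined with the Lipschitz property, this yields geometric decay in $t-k$ of the martingale-difference blocks $\mathbb{E}[Z_t \mid \mathcal{F}_k] - \mathbb{E}[Z_t \mid \mathcal{F}_{k-1}]$, each contribution being proportional to $d(\omega,\omega_0)$.

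I would then decompose $\sum_{t=1}^T Z_t$ into the martingale differences $M_k = \sum_{t \ge k}\bigl(\mathbb{E}[Z_t \mid \mathcal{F}_k] - \mathbb{E}[Z_t \mid \mathcal{F}_{k-1}]\bigr)$ with respect to the filtration $\{\mathcal{F}_k\}$. Summing the geometric series from the previous step bounds $\|M_k\|$ uniformly in $k$ and $T$ by a constant multiple of $d(\omega,\omega_0)$, so that the predictable quadratic variation is of order $T\, d(\omega,\omega_0)^2$. An exponential (Azuma/Bernstein-type) inequality for martingales with controlled increments then produces $\mathbb{P}\bigl(|\sum_{t=1}^T Z_t| \ge \sqrt{T}\,\lambda\bigr) \le c_1 \exp\{-\lambda^2 / (c_2 d(\omega,\omega_0)^2)\}$, which is \eqref{eq-prop-subgaussian} after dividing the sum by $\sqrt{T}$.

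The main obstacle is upgrading from the $\alpha$-th moment contraction — all that \eqref{eq-geom-cond} provides — to genuine exponential concentration, since a martingale exponential inequality requires either almost surely bounded increments or sub-Gaussian conditional increments, whereas the coupling only controls $L^\alpha$ norms of the distances $d(X_t, X_t^{(k)})$. Bridging this gap is the crux: I expect to need a truncation that discards the event where some $d(X_t, X_t^{(k)})$ is atypically large (whose probability is controlled by the geometric moment bound), together with a Bernstein estimate for the truncated, now bounded, martingale whose variance scale is supplied by the Lipschitz constant $2 d(\omega,\omega_0)$ of Lemma~\ref{lem-g-lip}.
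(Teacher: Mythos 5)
Your architecture is the same as the paper's: both proofs replace the raw sum $\sum_t Z_t$ by a martingale (you via the innovation-filtration projections $\E[Z_t\mid\mathcal F_k]-\E[Z_t\mid\mathcal F_{k-1}]$, the paper via the Poisson equation $h(x)-\E[h(X_{t+1})\mid X_t=x]=g^{\omega}_{\omega_0}(x)-\E[g^{\omega}_{\omega_0}(X)]$ plus a negligible boundary term $R_T$ --- these are the same Gordin-type decomposition), both extract the factor $d(\omega,\omega_0)$ from the Lipschitz bound of Lemma \ref{lem-g-lip}, both use the geometric-moment contraction \eqref{eq-geom-cond} to sum the resulting series, and both intend to finish with an Azuma/Bernstein inequality.

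The gap is that you stop exactly at the step that constitutes the proof. Your last paragraph correctly observes that \eqref{eq-geom-cond} only controls $L^\alpha$ norms of $d(X_t,X_t^{(k)})$, so it does not by itself deliver almost surely bounded or conditionally sub-Gaussian martingale increments, and you then only \emph{announce} a truncation-plus-Bernstein fix without carrying it out. As written, no exponential inequality is actually derived, so \eqref{eq-prop-subgaussian} is not established. Note also that a truncation/Bernstein route, if executed, would generically yield a sub-exponential rather than a sub-Gaussian tail at the scale $d(\omega,\omega_0)$, so you would still need an extra ingredient to recover the claimed Gaussian form. The paper closes this gap differently: it bounds each term of the telescoping series defining $D_t$ almost surely by $2d(\omega,\omega_0)\,C r^{k-(t+1)}$ (applying Lemma \ref{lem-g-lip} together with the contraction to the conditional coupling $X_{k:t+1}(X_t)$), sums the geometric series to get $\abs{D_t}\le \tilde C\, d(\omega,\omega_0)$ a.s., and then applies Azuma--Hoeffding directly. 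If you want to complete your proof you should either justify such an almost-sure increment bound (e.g.\ under boundedness of $\Omega$ or an almost-sure contraction of the noise maps) or strengthen the moment hypothesis; the honest difficulty you identified is real, but it must be resolved, not deferred.
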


\begin{proof}
Following \citet{wu_limit_2004} and \citet{gordin_central_1978}, we study the asymptotic behavior of the scaled process $\sqrt{T}H^\omega_{\omega_0}$ by considering the solution $h \in L_2(\Omega)$ to Poisson's equation 
\begin{equation}\label{eq-poisson}
    h(x) - \expec{h(X_{t+1}) \mid X_t = x } = g_{\omega_0}^\omega(x) - \expec{g^\omega_{\omega_0}(X)}.
\end{equation}
A solution to this equation exists and can be written as
\begin{equation*}
    h(x) = \sum_{t=0}^\infty \left(\expec{g^\omega_{\omega_0}(X_t) \mid X_0 = x}  - \expec{g^\omega_{\omega_0}(X)}\right).
\end{equation*}
Using (\ref{eq-poisson}), we can decompose $\sqrt{T}H^\omega_{\omega_0}$ as
\begin{align*}
    \sqrt{T}H^\omega_{\omega_0} 
    &= \sum_{t=1}^T g_{\omega_0}^\omega(X_t) - \expec{g_{\omega_0}^\omega(X)}
    = \sum_{t=1}^T h(X_t) - \expec{h(X_{t+1}) \mid X_t}\\
    &= \expec{h(X_1) \mid X_{0}} - \expec{h(X_{T+1}) \mid X_{T}} + \sum_{t=1}^T h(X_t) - \expec{h(X_t) \mid X_{t-1}}\\
    &:= R_T + \sum_{t=1}^T D_t,
\end{align*}
where we introduced $R_T = \expec{h(X_1) \mid X_{0}} - \expec{h(X_{T+1}) \mid X_{T}}$ and $D_t = h(X_t) - \expec{h(X_t) \mid X_{t-1}}$.
Note that $D_t$ is a martingale-difference and that $R_T = O_P(1)$. To show that this decomposition is valid, we start by showing that $h$ is absolutely summable. That is, we show
\begin{equation*}
    \sum_{t=0}^\infty \abs{\expec{ g^\omega_{\omega_0}(X_t) \mid X_0 = x } - \expec{g^\omega_{\omega_0}(X)}} < \infty.
\end{equation*}
Using Lemma \ref{lem-g-lip} and the independence of $\eset{\varepsilon_t}_{t \in \N}$, along with the assumption that condition (\ref{eq-geom-cond}) holds for some $\alpha > 1$ (and hence also for $\alpha = 1$), we have
\begin{align*}
    &\abs{\expec{ g^\omega_{\omega_0}(X_t) \mid X_0 = x } - \expec{g^\omega_{\omega_0}(X)}}
    \leq 2 d(\omega, \omega_0) \expect{X_t}{ \expect{X}{d(X_t, X)} \mid X_0 = x }\\
    &\qquad= 2 d(\omega, \omega_0) \expec{ d(X_t(x), X_t(\tilde X_0)) }
    \leq 2 d(\omega, \omega_0) C r^t.
\end{align*}
Using this bound in the infinite sum, we obtain
\begin{equation*}
    \sum_{t=0}^\infty \abs{\expec{ g^\omega_{\omega_0}(X_t) \mid X_0 = x } - \expec{g^\omega_{\omega_0}(X)}} \leq 2 d(\omega, \omega_0) C \sum_{t=0}^\infty r^t = \tilde{C}d(\omega, \omega_0).
\end{equation*}
Hence, $\lim_{T \rightarrow \infty} \sqrt{T}H^\omega_{\omega_0}$ exists and is almost surely bounded implying that the decomposition presented above is valid and that both $D_t$ and $R_t$ are also almost surely bounded with $\abs{D_t} < C$ and $\abs{R_T} < C$. In particular, we have that $\abs{D_t} \leq \tilde{C} d(\omega,\omega_0)$ for some $\tilde{C} > 0$, since
\begin{align*}
    D_t 
    &= h(X_t) - \expec{h(X_t) \mid X_{t-1}}\\
    &= \sum_{k=0}^\infty \expec{g^\omega_{\omega_0}(X_k) \mid X_t} - \expec{\expec{g^\omega_{\omega_0}(X_k) \mid X_t} \mid X_{t-1}}\\
    &= \sum_{k=0}^\infty \expec{g^\omega_{\omega_0}(X_k) \mid X_t} - \expec{g^\omega_{\omega_0}(X_k)\mid X_{t-1}}\qquad\t{(Tower rule)}\\
    &= \sum_{k=t}^\infty \expec{g^\omega_{\omega_0}(X_k) \mid X_t} - \expec{g^\omega_{\omega_0}(X_k)\mid X_{t-1}}.\qquad(k \leq t-1 \Rightarrow \sigma(X_k) \subset \sigma(X_{t-1}))
\end{align*}
We can use the function notation in Equation (\ref{eq-f-def}) to get $X_k = X_{k:t+1}(X_t)$ and rewrite
\begin{align*}
    &\expec{g^\omega_{\omega_0}(X_k) \mid X_t} - \expec{g^\omega_{\omega_0}(X_k)\mid X_{t-1}}
    \\&= \expec{g^\omega_{\omega_0}(X_{k:t+1}(X_t)) \mid X_t} - \expec{g^\omega_{\omega_0}(X_{k:t+1}(X_t))\mid X_{t-1}}
\end{align*}
where $X_{k:t+1}$ is random in both conditional expectations, but $X_t$ is only random in the second conditional expectation. Taking absolute values and using Lemma \ref{lem-g-lip} together with (2) in \citet{wu_limit_2004} gives
\begin{equation*}
    \abs{\expec{g^\omega_{\omega_0}(X_k) \mid X_t} - \expec{g^\omega_{\omega_0}(X_k)\mid X_{t-1}}} \leq 2 d(\omega, \omega_0) C r^{k - (t+1)}.
\end{equation*}
Using this bound in the sum gives
\begin{equation*}
    \abs{D_t} \leq \sum_{k=t}^\infty 2 d(\omega, \omega_0) C r^{k - (t+1)} = 2C d(\omega, \omega_0) \sum_{k=0}^\infty r^k = \tilde{C} d(\omega, \omega_0).
\end{equation*}
We now show the following result, which is equivalent to \eqref{eq-prop-subgaussian}
\begin{equation}\label{eq-sn-bound}
    \P{\abs{R_T + \sum_{t=1}^T D_t} \geq T\lambda}
    \leq
    e^2 \exp\left\{- \frac{T \lambda^2}{4 C d(\omega,\omega_0)^2} \right\}.
\end{equation}
To do this, we consider two cases. If $\sqrt{T}\lambda < 4C$, the bound is vacuous. Otherwise, for $\sqrt{T}\lambda \geq 4C$, we have
\begin{equation*}
    \P{\abs{R_T + \sum_{t=1}^T D_t} \geq T\lambda}
    \leq \P{\abs{R_T} \geq \frac{T\lambda}{4}} + \P{\abs{\sum_{t=1}^T D_t} \geq \frac{3T\lambda}{4}}.
\end{equation*}
Since $\sqrt{T}\lambda \geq 4C$ and $\abs{R_t} \leq C$, we have that $\eset{\abs{R_T} \geq \frac{T\lambda}{4}}$ is a probability zero event. We can thus focus on bounding the martingal difference sum. By Chernoff's bounding technique, we have for all $\lambda > 0$
\begin{align*}
    \P{\sum_{t=1}^T D_t \geq \frac{3T\lambda}{4}}
    &= \P{\frac{1}{\sqrt{T}}\sum_{t=1}^T D_t \geq \frac{3\sqrt{T}\lambda}{4}}\\
    &\leq \min_{u > 0} \exp\left(-u\frac{3\sqrt{T}\lambda}{4}\right)\expec{\exp\left(u \frac{1}{\sqrt{T}}\sum_{t=1}^T D_t \right)}.
\end{align*}
Using the bound $\abs{D_t} \leq \tilde{C} d(\omega, \omega_0)$ and following the proof of Azuma-Hoeffding's inequality (see Theorem 2.2.1 in \citet{raginsky_concentration_2013}) we can bound the martingale-difference sum,
\begin{align*}
    \P{\sum_{t=1}^T D_t \geq \frac{3T\lambda}{4}}
    &\leq \min_{u > 0} \exp\left(-u\frac{3\sqrt{T}\lambda}{4} + \frac{u^2}{2}\tilde{C}^2 d(\omega,\omega_0)^2 \right)
    \\&\leq \exp\left(-\frac{\lambda^2 T}{4\tilde{C}^2 d(\omega,\omega_0)^2} \right).
\end{align*}
Where the last inequality comes from taking $u = \lambda \sqrt{T} / (\tilde{C}^2 d(\omega,\omega_0)^2)$. By symmetry the arguments can be repeated on the mirrored inequality to obtain a bound on $\P{\abs{\sum_{t=1}^T D_t} \geq \frac{3T\lambda}{4}}$. Together with the previous argument, this gives
\begin{equation*}
    \P{\abs{R_T + \sum_{t=1}^T D_t} \geq T\lambda}
    \leq e^2 \exp\left(-\frac{\lambda^2 T}{4\tilde{C}^2 d(\omega,\omega_0)^2} \right).
\end{equation*}

\end{proof}

Using the fact that the empirical process applied to $g^\omega_{\omega_0}$ is sub-Gaussian, we can use standard M-estimation theory to provide a proof of Theorem \ref{thm-rate-mu}.

\begin{proof}[Proof of Theorem \ref{thm-rate-mu}]
Noting that $(M_T - M)(\omega) - (M_T - M)(\mu) = T^{-1/2}H_\mu^\omega$ we have by Proposition \ref{prop-subgaussian} that
\begin{equation*}
    \P{\sqrt{T}\abs{(M_T - M)(\omega) - (M_T - M)(\mu)} \geq \lambda}
    \leq
    c_1 \exp\left\{- \frac{\lambda^2}{c_2 d(\omega,\omega_0)^2} \right\}.
\end{equation*}
So $\eset{\sqrt{T}(M_T - M)(\omega)}_{\omega \in \Omega}$ is sub-Gaussian. By Corollary 2.2.8 in \citet{van_der_vaart_weak_1996}, we have
\begin{align*}
    \expec{\sup_{d(\omega, \mu) < \delta} \sqrt{T}\abs{(M_T - M)(\omega) - (M_T - M)(\mu)}}
    &\lesssim \int_0^\delta \sqrt{\log\left(1 + D(\varepsilon, d) \right)}\d \varepsilon\\
    &= \delta \int_0^1 \sqrt{\log\left(1 + D(\delta\varepsilon, d) \right)}\d \varepsilon.
\end{align*}
Since by assumption the entropy integral is bounded and $O(1)$ for $\delta \rightarrow 0$, we bound (up to a multiplicative constant) the modulus of continuity by $T^{-1/2}\delta$. Additionally, by the variance inequality in Hadamard spaces (see Proposition 4.4 in \citet{auscher_probability_2003}), we have that the condition $M(\omega) - M(\mu^\star) \geq d(\omega, \mu^\star)^2$ holds. Thus by Theorem 3.2.5 in \citet{van_der_vaart_weak_1996}, $d(\hat\mu_T, \mu^\star) = O_P(T^{-1/2})$.
\end{proof}

\subsection*{Uniform convergence of $L_T$}

\begin{bproposition}\label{prop-uniform}
    Under the conditions of Theorem \ref{thm-phi-consistent}, we have that \linebreak $\norm{L_T - L}_\infty = o_P(1)$.
\end{bproposition}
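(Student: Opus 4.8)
The plan is to decompose the error into a term coming from substituting the estimated mean $\hat\mu_T$ for the true mean $\mu$, and a term coming from replacing the population expectation by an empirical average. Introduce the intermediate process
\begin{equation*}
    \widetilde L_T(u) = \frac{1}{T-1}\sum_{t=1}^{T-1} d\left(X_{t+1}, \gamma_\mu^{X_t}(u)\right)^2,
\end{equation*}
which uses the true Fr\'echet mean $\mu$ but the empirical average. Then $\norm{L_T - L}_\infty \leq \norm{L_T - \widetilde L_T}_\infty + \norm{\widetilde L_T - L}_\infty$, and I would show each summand is $o_P(1)$.

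For the first summand, the key geometric fact is that the two geodesics $\gamma_{\hat\mu_T}^{X_t}$ and $\gamma_\mu^{X_t}$ share the endpoint $X_t$, so by the Geodesic Comparison Inequality (Proposition \ref{prop-geo-comp}) one has $d(\gamma_{\hat\mu_T}^{X_t}(u), \gamma_\mu^{X_t}(u)) \leq d(\hat\mu_T, \mu)$ uniformly in $u \in [0,1]$. Combining this with the elementary Lipschitz bound $\abs{d(z,a)^2 - d(z,b)^2} \leq (d(z,a) + d(z,b))\,d(a,b)$, applied with $z = X_{t+1}$, and controlling the multiplier $d(z,a)+d(z,b)$ by $2\left(d(X_{t+1},\mu) + d(X_t,\mu)\right) + 2d(\hat\mu_T,\mu)$ via the triangle inequality, gives
\begin{equation*}
    \norm{L_T - \widetilde L_T}_\infty \leq \frac{2 d(\hat\mu_T,\mu)}{T-1}\sum_{t=1}^{T-1}\left(d(X_{t+1},\mu) + d(X_t,\mu) + d(\hat\mu_T,\mu)\right).
\end{equation*}
The average on the right converges by the ergodic theorem to $4\expec{d(X_1,\mu)} < \infty$ (finite since $X_t \in L^2(\Omega)$), hence is $O_P(1)$, while $d(\hat\mu_T,\mu) = o_P(1)$ by Theorem \ref{thm-rate-mu}; the product is therefore $o_P(1)$.

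For the second summand I would invoke a uniform law of large numbers over the compact parameter space $[0,1]$, following \citet{newey_uniform_1991}. Two ingredients are needed. First, stochastic equicontinuity in $u$: since $d(\gamma_\mu^{X_t}(u), \gamma_\mu^{X_t}(u')) = \abs{u-u'}\,d(\mu, X_t)$ along a geodesic, the same Lipschitz bound on squared distances yields $\abs{\widetilde L_T(u) - \widetilde L_T(u')} \leq B_T\abs{u-u'}$ with $B_T = \frac{2}{T-1}\sum_{t=1}^{T-1}\left(d(X_{t+1},\mu)+d(X_t,\mu)\right)d(\mu,X_t) = O_P(1)$, the latter being finite in expectation by $L^2(\Omega)$ integrability; the same Lipschitz constant controls $L$ itself. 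Second, pointwise convergence $\widetilde L_T(u) \to L(u)$ in probability for each fixed $u$: because the stationary solution of Theorem \ref{thm-wu} is a measurable function of the i.i.d.\,noise sequence $\eset{\varepsilon_t}$ it is ergodic, so Birkhoff's theorem applies to the $L^1$ summand $d(X_{t+1},\gamma_\mu^{X_t}(u))^2$. Equicontinuity plus pointwise convergence on the compact interval $[0,1]$ then upgrades to $\norm{\widetilde L_T - L}_\infty = o_P(1)$.

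The main obstacle is the pointwise convergence step under dependence. One cannot simply bound $\Var{\widetilde L_T(u)}$, since that would require fourth moments of distances whereas only $L^2(\Omega)$ is assumed; this is precisely why I would route the argument through ergodicity of the iterated random function system, which requires only $L^1$ integrability of the summand. The remaining care is purely bookkeeping: verifying that all envelope and Lipschitz multipliers have finite means under the $L^2(\Omega)$ assumption so that the equicontinuity constants are genuinely $O_P(1)$, and checking that the substitution bound is uniform in $u$, which the Geodesic Comparison Inequality delivers.
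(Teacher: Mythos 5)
Your proof is correct, and its overall skeleton is the same as the paper's: a Newey-style uniform law of large numbers over the compact set $[0,1]$, obtained from (i) a plug-in bound controlling the replacement of $\mu$ by $\hat\mu_T$ via the Geodesic Comparison Inequality and the elementary bound $\abs{d(z,a)^2-d(z,b)^2}\le (d(z,a)+d(z,b))\,d(a,b)$, (ii) pointwise convergence, and (iii) a stochastic Lipschitz bound in $u$. The one step where you genuinely diverge is pointwise convergence of the empirical average $\widetilde L_T(u)$ to $L(u)$: the paper invokes Theorem 3 of \citet{wu_limit_2004}, verifying Dini continuity of the summand $g(X_t,X_{t+1})=d(X_{t+1},\gamma_\mu^{X_t}(u))^2$ to get a functional CLT and hence an $O_P(T^{-1/2})$ rate, whereas you observe that the stationary solution is a Bernoulli shift of the i.i.d.\,noise sequence, hence ergodic, and apply Birkhoff's theorem to the $L^1$ summand. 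Your route is more elementary and yields only $o_P(1)$, which is all the proposition needs; it also has the genuine advantage of nowhere requiring boundedness of $\Omega$ (you replace the paper's ``squared distance is Lipschitz on a bounded space'' step by Cauchy--Schwarz under the stated $L^2(\Omega)$ assumption, both in the envelope for the plug-in term and in the Lipschitz constant $B_T$), whereas the paper's written proof of conditions 2 and 3 leans on boundedness that is not among the hypotheses of Theorem \ref{thm-phi-consistent}. The paper's route buys a rate for the pointwise deviation, which is not exploited here. The only point you share with the paper and should make explicit is that the observed trajectory is identified with (or is asymptotically indistinguishable from, by the geometric-moment contraction) the stationary ergodic solution, since Birkhoff's theorem is applied to that solution.
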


\begin{proof}
    We show this result by verifying the conditions of Corollary 2.2 of \citet{newey_uniform_1991}. Namely, we need to show that:
    \begin{enumerate}
        \item $L$ is continuous;
        \item $L_T$ converges pointwise to $L$;
        \item There exists a sequence $C_t = O_P(1)$ such that for all $\varphi, \varphi' \in (0,1)$, $\abs{L_T(\varphi) - L_T(\varphi')} \leq C_T \abs{\varphi - \varphi'}$.
    \end{enumerate} 
    We proceed to verify these conditions.

    \textit{1. Continuity of $L$.} By definition $\omega \mapsto d(\omega_0, \omega)^2$ is continous. Since $\Omega$ is a Hadamard space, we also have that geodesics are continuous in $t \in [0,1]$, hence for all $x_t, x_{t+1} \in \Omega$ and $\varphi_0 \in [0,1]$, $d(x_{t+1}, \gamma_{\mu}^{x_t} \varphi_n) \rightarrow d(x_{t+1}, \gamma_{\mu}^{x_t} \varphi_0)$ for any sequence $\varphi_n \rightarrow \varphi_0$. Furthermore, by geodesic convexity of the squared distance, we have that $d(x_{t+1}, \gamma_{\mu}^{x_t} \varphi)^2 \leq (1-\varphi) d(x_{t+1}, \mu)^2 + \varphi d(x_{t+1}, x_t)^2$ which is integrable with respect to $(X_t, X_{t+1})$ since $X_t$ and $X_{t+1}$ have second moments. By dominated convergence, this shows that $L(\varphi) \rightarrow L(\varphi_0)$ as $\varphi_n \rightarrow \varphi_0$ for a ny sequence $\varphi_n \rightarrow \varphi_0$, and hence $L$ is continuous.


    \textit{2. Pointwise convergence}. Let $\varphi_0 \in (0, 1)$. Using the fact that $X_{t+1} = \varepsilon_{t+1}(\gamma_\mu^{X_t}(\varphi))$, we can decompose the pointwise deviation of $L_T$ from $L$ as follows,
    \begin{align*}
        &\abs{L_T(\varphi_0) - L(\varphi_0)}\\
        &= \abs{\frac{1}{T-1}\sum_{t=1}^{T-1} d\left(X_{t+1}, \gamma_{\hat\mu_T}^{X_t}(\varphi_0)\right)^2 - \expec{d\left(\varepsilon_{t+1}(\gamma_{\mu}^{X_t}(\varphi)), \gamma_{\mu}^{X_t}(\varphi_0)\right)^2}}\\
        &\leq
        \abs{\frac{1}{T-1}\sum_{t=1}^{T-1} d\left(X_{t+1}, \gamma_{\hat\mu_T}^{X_t}(\varphi_0)\right)^2 - d\left(X_{t+1}, \gamma_{\mu}^{X_t}(\varphi_0)\right)^2}\\
        &\qquad+
        \abs{\frac{1}{T-1}\sum_{t=1}^{T-1} d\left(X_{t+1}, \gamma_{\mu}^{X_t}(\varphi_0)\right)^2 - \expec{d\left(\varepsilon_{t+1}(\gamma_{\mu}^{X_t}(\varphi)), \gamma_{\mu}^{X_t}(\varphi_0)\right)^2}}.
    \end{align*}
    By Lipschitz continuity of the squared distance in a bounded metric space, together with its geodesic convexity in Hadamard spaces and the fact that $d(\mu, \hat\mu_T) = O_P(T^{-1/2})$, we have that the first sum in the upper bound is $O_P(T^{-1/2})$.

    \begin{align*}
        &\abs{d\left(X_{t+1}, \gamma_{\hat\mu_T}^{X_t}(\varphi_0)\right)^2 - d\left(X_{t+1}, \gamma_{\mu}^{X_t}(\varphi_0)\right)^2}\\
        &\leq C_1 d\left(\gamma_{\hat\mu_T}^{X_t}(\varphi_0), \gamma_{\mu}^{X_t}(\varphi_0)\right)\abs{d\left(X_{t+1}, \gamma_{\hat\mu_T}^{X_t}(\varphi_0)\right) + d\left(X_{t+1}, \gamma_{\mu}^{X_t}(\varphi_0)\right)}.
    \end{align*}
    By the geodesic comparison inequality, $d\left(\gamma_{\hat\mu_T}^{X_t}(\varphi_0), \gamma_{\mu}^{X_t}(\varphi_0)\right) \leq \varphi_0d(\mu, \hat\mu_T)$, and using that $x \mapsto d(x_0, x)$ is geodesically convex, we get
    \begin{align*}
        d(X_{t+1}, \gamma_{\hat\mu_T}^{X_t}(\varphi_0))
        &\leq \varphi_0 d(X_{t+1}, \hat\mu_T) + (1-\varphi_0)d(X_{t+1}, X_t)\\
        &\leq d(X_{t+1}, \hat\mu_T) + d(X_{t+1}, X_t)\\
        &\leq 2 d(X_{t+1}, \mu) + d(X_t, \mu) + d(\mu, \hat\mu_T).
    \end{align*}
    Similarly, $d\left(X_{t+1}, \gamma_{\mu}^{X_t}(\varphi_0)\right) \leq 2d(X_{t+1}, \mu) + d(X_t, \mu)$, giving
    \begin{align*}
        &\abs{d\left(X_{t+1}, \gamma_{\hat\mu_T}^{X_t}(\varphi_0)\right)^2 - d\left(X_{t+1}, \gamma_{\mu}^{X_t}(\varphi_0)\right)^2}\\
        &\leq C_2 d(\mu, \hat\mu_T)\left[d(X_{t+1}, \mu) + d(X_t, \mu) + d(\mu, \hat\mu_T)\right]
    \end{align*}
    Taking the average over $t = 1, \ldots, T-1$, we get
    \begin{align*}
        &\abs{\frac{1}{T-1}\sum_{t=1}^{T-1} d\left(X_{t+1}, \gamma_{\hat\mu_T}^{X_t}(\varphi_0)\right)^2 - d\left(X_{t+1}, \gamma_{\mu}^{X_t}(\varphi_0)\right)^2}\\
        &\dot\leq d(\mu, \hat\mu_T)^2 + d(\mu, \hat\mu_T) \frac{1}{T-1} \sum_{t=1}^{T-1} d(X_{t+1}, \mu) + d(X_t, \mu).
    \end{align*}
    We now show that the second term is $O_P(T^{-1/2})$ as well. We do this using Theorem 3 in \citet{wu_limit_2004} with $Y_t = (X_t, X_{t+1})$ and $g(X_t, X_{t+1}) = d\left(X_{t+1}, \gamma_{\mu}^{X_t}(\varphi_0)\right)^2 - \expec{d\left(\varepsilon_{t+1}(\gamma_{\mu}^{X_t}(\varphi)), \gamma_{\mu}^{X_t}(\varphi_0)\right)^2}$. Let $\rho$ be the product metric on $\Omega \times \Omega$, $\rho((x_1, x_2), (y_1, y_2)) = \sqrt{d(x_1, y_1)^2 + d(x_2, y_2)^2}$. Let $Y_t = (X_t, X_{t+1})$ and $\tilde Y_t = (\tilde X_t, \tilde X_{t+1})$ be pairs in $\Omega \times \Omega$ such that $\rho(Y_t, \tilde Y_t) \leq \delta$, then
    \begin{align*}
        &\abs{g(Y_t) - g(\tilde Y_t)}
        = \abs{d\left(X_{t+1}, \gamma_{\mu}^{X_t}(\varphi_0)\right)^2 - d\left(\tilde X_{t+1}, \gamma_{\mu}^{\tilde X_t}(\varphi_0)\right)^2}\\
        &\leq \abs{d\left(X_{t+1}, \gamma_{\mu}^{X_t}(\varphi_0)\right)^2 - d\left(\tilde X_{t+1}, \gamma_{\mu}^{X_t}(\varphi_0)\right)^2}\\
        &\qquad\qquad+ \abs{d\left(\tilde X_{t+1}, \gamma_{\mu}^{X_t}(\varphi_0)\right)^2 - d\left(\tilde X_{t+1}, \gamma_{\mu}^{\tilde X_t}(\varphi_0)\right)^2}\\
        &\leq C d(X_{t+1}, \tilde X_{t+1}) + C d\left(\gamma_{\mu}^{X_t}(\varphi_0), \gamma_{\mu}^{\tilde X_t}(\varphi_0)\right)\\
        &= C d(X_{t+1}, \tilde X_{t+1}) + C\varphi_0 d(X_t, \tilde X_t)
    \end{align*}
    Since $\rho(Y_t, \tilde Y_t) \leq \delta$, we have that $\max \eset{d(X_{t+1}, \tilde X_{t+1}), d(X_t, \tilde X_t)} \leq \delta$ and hence
    $\abs{g(Y_t) - g(\tilde Y_t)} \leq C \delta$, showing that $g$ is Dini continuous and also stochastically Dini continuous. Theorem 3 in \citet{wu_limit_2004} gives that the second term in the above equation converges to a Brownian motion when scaled by $\sqrt{T}$ and hence is $O_P(T^{-1/2})$ which completes the proof of pointwise convergence.

    \textit{3. Stochastic Lipschitz Continuity of $L_T$.} Let $\varphi, \varphi' \in (0,1)$, then using that $\Omega$ is bounded and thus the squared distance is Lipschitz, we have that
    \begin{align*}
        \abs{L_T(\varphi) - L_T(\varphi')}
        &\leq \frac{1}{T-1}\sum_{t=1}^{T-1} \abs{d\left(X_{t+1}, \gamma_{\hat\mu_T}^{X_t}(\varphi)\right)^2 - d\left(X_{t+1}, \gamma_{\hat\mu_T}^{X_t}(\varphi')\right)^2}\\
        &\leq C \frac{1}{T-1}\sum_{t=1}^{T-1} d(\gamma_{\hat\mu_T}^{X_t}(\varphi), \gamma_{\hat\mu_T}^{X_t}(\varphi'))\\
        &= \abs{\varphi - \varphi'} C \frac{1}{T-1}\sum_{t=1}^{T-1} d(\hat\mu_T, X_t).
    \end{align*}
    Again using that $\Omega$ is bounded, the average is also bounded and we obtain the desired result.
\end{proof}

\section{Auxiliary theoretical results} \label{sec-app-hadamard}

\subsection*{General results in Hadamard spaces}

We start by stating results available in Hadamard spaces that will be used in the rest of the Appendix. 

\begin{cproposition}[Reshetnyak's Quadruple Comparison; Proposition 2.4 in \citet{auscher_probability_2003}]\label{prop-quadruple-comp}
    
    Let $(\Omega, d)$ be a Hadamard space. For all $x_1, x_2, x_3, x_4 \in \Omega$,
    \begin{equation*}
        d(x_1, x_3)^2+d(x_2, x_4)^2 \leq d(x_2, x_3)^2+d(x_4, x_1)^2+2 d(x_1, x_2) d(x_3, x_4).
    \end{equation*}
\end{cproposition}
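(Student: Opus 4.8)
The plan is to treat this as the metric-space analogue of the Cauchy--Schwarz (equivalently parallelogram) identity. Rewriting the claim as $d(x_1,x_3)^2 + d(x_2,x_4)^2 - d(x_2,x_3)^2 - d(x_1,x_4)^2 \le 2\,d(x_1,x_2)\,d(x_3,x_4)$, one sees that in a Hilbert space the left-hand side equals $2\langle x_1 - x_2,\, x_4 - x_3\rangle$, so the statement is exactly Cauchy--Schwarz applied to the two ``difference vectors'' $x_1-x_2$ and $x_4-x_3$, with equality when these are positively proportional. This pins down what must be reproduced using only the metric, and it also flags the difficulty: the NPC inequality \eqref{eq-npc-ineq-midpoint} furnishes only \emph{upper} bounds on squared distances to midpoints, whereas the quantity above is a signed difference requiring two-sided control.

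First I would reduce the four-point inequality to a single scalar estimate by inserting midpoints. Let $p$ be the midpoint of $x_1,x_3$ and $q$ the midpoint of $x_2,x_4$ furnished by \eqref{eq-npc-ineq-midpoint}. Applying the NPC inequality six times --- to the pair $(x_1,x_3)$ with comparison point $z=x_2,x_4,q$, and to the pair $(x_2,x_4)$ with $z=x_1,x_3,p$ --- one adds the two inequalities bounding $d(p,q)^2$ and substitutes the bounds on $d(x_2,p)^2,d(x_4,p)^2,d(x_1,q)^2,d(x_3,q)^2$; all midpoint cross-terms collapse and one obtains
\begin{equation*}
4\,d(p,q)^2 \le d(x_1,x_2)^2 + d(x_2,x_3)^2 + d(x_1,x_4)^2 + d(x_3,x_4)^2 - d(x_1,x_3)^2 - d(x_2,x_4)^2 .
\end{equation*}
Rearranging gives $d(x_1,x_3)^2 + d(x_2,x_4)^2 \le d(x_2,x_3)^2 + d(x_1,x_4)^2 + d(x_1,x_2)^2 + d(x_3,x_4)^2 - 4\,d(p,q)^2$, so the proposition follows as soon as $d(x_1,x_2)^2 + d(x_3,x_4)^2 - 4\,d(p,q)^2 \le 2\,d(x_1,x_2)\,d(x_3,x_4)$, that is, as soon as
\begin{equation*}
2\,d(p,q) \ge \bigl\lvert d(x_1,x_2) - d(x_3,x_4) \bigr\rvert .
\end{equation*}

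This last lower bound is where I expect the real work to lie, since it is precisely where the Cauchy--Schwarz content is concentrated. In a Hilbert space $2\,d(p,q) = \norm{(x_1-x_2)+(x_3-x_4)}$ and the bound is just the reverse triangle inequality, but in a general Hadamard space $p$ and $q$ carry no linear relation to the geodesics $[x_1,x_2]$ and $[x_3,x_4]$. The naive route $d(x_1,x_2)\le \tfrac12 d(x_1,x_3) + d(p,q) + \tfrac12 d(x_2,x_4)$ produces the diagonals $d(x_1,x_3),d(x_2,x_4)$ on the right rather than $d(x_3,x_4)$, while convexity of the distance between the geodesics $[x_1,x_3]$ and $[x_2,x_4]$ yields $d(p,q)\le\tfrac12\bigl(d(x_1,x_2)+d(x_3,x_4)\bigr)$, an upper bound --- the wrong direction. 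Because NPC is inherently one-sided, neither tool alone closes the gap. I would therefore extract the missing lower bound from the Alexandrov comparison-angle (first-variation) information carried by the geodesics at $x_3$ and $x_4$, or equivalently from Reshetnyak's planar majorization of the geodesic quadrilateral; this is exactly the classical mechanism that converts the one-sided NPC estimates into a two-sided one, and it is the reason the result is invoked as a known theorem rather than derived from the midpoint inequality alone.
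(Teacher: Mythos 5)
The paper offers no proof of this proposition --- it is imported verbatim as Proposition 2.4 of \citet{auscher_probability_2003} --- so the only question is whether your argument stands on its own. It does not: the step you defer as ``where the real work lies'' is not merely hard, it is false. Your reduction is correct up to and including the midpoint estimate
\begin{equation*}
4\,d(p,q)^2 \leq d(x_1,x_2)^2 + d(x_2,x_3)^2 + d(x_1,x_4)^2 + d(x_3,x_4)^2 - d(x_1,x_3)^2 - d(x_2,x_4)^2,
\end{equation*}
which does follow from repeated use of \eqref{eq-npc-ineq-midpoint} exactly as you describe. But the complementary bound $2\,d(p,q) \geq \abs{d(x_1,x_2) - d(x_3,x_4)}$ fails in general Hadamard spaces. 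Take $\Omega$ to be a tripod: three unit segments glued at a branch point $o$ with leaves $a,b,c$. This is a complete metric tree, hence a Hadamard space. Set $x_1 = x_2 = a$, $x_3 = b$, $x_4 = c$. The midpoints of the diagonals $[x_1,x_3]=[a,b]$ and $[x_2,x_4]=[a,c]$ both equal $o$, so $d(p,q)=0$, while $\abs{d(x_1,x_2)-d(x_3,x_4)} = d(b,c) = 2$. Note that the quadruple inequality itself holds here with equality ($4+4 \leq 4+4+0$), so your chain of sufficient conditions provably cannot close: the slack you need is stored in a quantity, $d(p,q)$, which nonpositive curvature is allowed to crush to zero.

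The obstruction is the one you yourself flag: NPC only gives upper bounds on distances to midpoints, and the diagonal-midpoint route needs a lower bound for which there is no metric substitute for the reverse triangle inequality applied to $(x_1-x_2)+(x_3-x_4)$. The standard proofs avoid diagonal midpoints entirely. One route is via comparison triangles and Alexandrov angles, as you suggest at the end; the more elementary route (the one in the cited reference) places a point $z_t$ at parameter $t$ along one side, applies the geodesic NPC inequality \eqref{eq-npc-ineq} together with the plain triangle inequality for the diagonals passing near $z_t$, and then optimizes over $t$ --- it is this optimization that generates the mixed term $2\,d(x_1,x_2)\,d(x_3,x_4)$ and plays the role of Cauchy--Schwarz. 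As written, your proposal reduces the proposition to a false statement and therefore does not constitute a proof.
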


Specializing this inequality to geodesics yields the following 

\begin{cproposition}[Geodesic Comparison Inequality; Corollary 2.5 in \citet{auscher_probability_2003}]\label{prop-geo-comp}
    Let $(\Omega, d)$ be a Hadamard space, $\gamma, \eta : [0, 1] \rightarrow \Omega$ be geodesics and $t \in [0,1]$. Then
    \begin{align*}
        d(\gamma(t), \eta(t))^2 \leq &(1-t)d(\gamma(0), \eta(0))^2 + td(\gamma(1), \eta(1))^2 \\
            &\qquad- t(1-t)[d(\gamma(0), \gamma(1) - d(\eta(0), \eta(1))]^2.
    \end{align*}
\end{cproposition}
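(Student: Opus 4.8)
The plan is to reduce the claim to Reshetnyak's Quadruple Comparison (Proposition \ref{prop-quadruple-comp}) after unfolding the two geodesic midpoints by three applications of the NPC inequality (\ref{eq-npc-ineq}) from Proposition \ref{prop-sq-dist-convex}. Throughout I would abbreviate $p_i = \gamma(i)$ and $q_i = \eta(i)$ for $i \in \{0,1\}$, so the goal is to bound $d(\gamma(t),\eta(t))^2$ in terms of the four corner distances $d(p_i,q_j)$ and the two edge lengths $d(p_0,p_1)$, $d(q_0,q_1)$.

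First I would apply (\ref{eq-npc-ineq}) to the geodesic $\gamma$ with the external point $z = \eta(t)$, which bounds $d(\gamma(t),\eta(t))^2$ by $(1-t)d(\eta(t),p_0)^2 + t\, d(\eta(t),p_1)^2 - t(1-t)d(p_0,p_1)^2$. Each of the distances $d(\eta(t),p_0)^2$ and $d(\eta(t),p_1)^2$ is the squared distance from a fixed point to a point on the geodesic $\eta$, so I would invoke (\ref{eq-npc-ineq}) a second and third time, applied to $\eta$ with $z = p_0$ and with $z = p_1$ respectively. Substituting these two bounds into the first inequality and expanding produces an explicit upper bound for $d(\gamma(t),\eta(t))^2$ built from $d(p_i,q_j)^2$ together with the terms $-t(1-t)d(p_0,p_1)^2$ and $-t(1-t)d(q_0,q_1)^2$.

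The final step is algebraic. I would expand the square $[d(p_0,p_1) - d(q_0,q_1)]^2$ in the target right-hand side and subtract it from the expanded upper bound. After collecting coefficients, the difference collapses to $t(1-t)$ times the quantity $d(p_0,q_1)^2 + d(p_1,q_0)^2 - d(p_0,q_0)^2 - d(p_1,q_1)^2 - 2\,d(p_0,p_1)\,d(q_0,q_1)$. Reshetnyak's Quadruple Comparison applied to the quadruple $(x_1,x_2,x_3,x_4) = (p_0,p_1,q_1,q_0)$ states exactly that this quantity is non-positive, and since $t(1-t) \geq 0$ on $[0,1]$, the desired inequality follows.

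I expect the main obstacle to be bookkeeping rather than conceptual: after the three substitutions one must correctly gather the coefficients of all six squared-distance terms, and in particular verify that the coefficients of $d(p_0,p_1)^2$ and $d(q_0,q_1)^2$ already agree with the target so that only the mixed term $d(p_0,p_1)\,d(q_0,q_1)$ remains to be controlled. The one point requiring genuine care is choosing the labelling of the four points fed into Reshetnyak's inequality so that it yields precisely this mixed term with the correct sign; checking that $t(1-t)$ is non-negative then guarantees the direction of the bound is preserved.
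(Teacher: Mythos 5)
Your proof is correct. Note that the paper does not prove this statement at all: it is imported verbatim as Corollary 2.5 of the cited reference on probability measures in NPC spaces, so there is no in-paper argument to compare against. Your derivation is nevertheless the standard one and checks out in every detail: writing $p_i=\gamma(i)$, $q_i=\eta(i)$, the three applications of the NPC inequality give the bound $(1-t)^2 d(p_0,q_0)^2 + t(1-t)\left[d(p_0,q_1)^2+d(p_1,q_0)^2\right] + t^2 d(p_1,q_1)^2 - t(1-t)\left[d(p_0,p_1)^2+d(q_0,q_1)^2\right]$, and since $(1-t)^2-(1-t)=t^2-t=-t(1-t)$, subtracting the target right-hand side leaves exactly $t(1-t)\left[d(p_0,q_1)^2+d(p_1,q_0)^2-d(p_0,q_0)^2-d(p_1,q_1)^2-2d(p_0,p_1)d(q_0,q_1)\right]$, which is nonpositive by Reshetnyak's inequality with the labelling $(x_1,x_2,x_3,x_4)=(p_0,p_1,q_1,q_0)$ that you identify. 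The two anticipated pitfalls you flag (coefficient bookkeeping and the choice of labelling) are precisely the only places where care is needed, and you resolve both correctly.
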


\subsection*{Strong convexity of $L$}

It is possible to extend the identifiability result in Theorem \ref{thm-phi-id} and show that $L$ does not only have a unique minimizer, but is also strongly convex. This, as we show in the following lemma is a consequence of the geodesic convexity of the squared distance in Hadamard spaces.

\begin{clemma}\label{lem-L-strongly-convex}
    Let $\Xts \subset \Omega$, assume that $\Xts$ are $L^2(\Omega)$ and satisfies Equation (\ref{eq-iterated}) with true concentration parameter $\varphi \in [0,1]$. Then, the function $L$ is strongly convex.
  \end{clemma}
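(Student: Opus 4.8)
The plan is to show strong convexity of $L(u) = \expec{d(X_{t+1}, \gamma_\mu^{X_t}(u))^2}$ directly from the definition: I want to exhibit a constant $m > 0$ such that for all $u, v \in [0,1]$ and $\lambda \in [0,1]$,
\begin{equation*}
    L(\lambda u + (1-\lambda) v) \leq \lambda L(u) + (1-\lambda) L(v) - \frac{m}{2}\lambda(1-\lambda)(u-v)^2.
\end{equation*}
Since $L$ is an expectation, it suffices to establish the analogous pointwise inequality for the integrand $u \mapsto d(X_{t+1}, \gamma_\mu^{X_t}(u))^2$, conditionally on $(X_t, X_{t+1})$, with a strong-convexity modulus that does not depend on $(X_t, X_{t+1})$, and then take expectations.

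**The key geometric input.** The heart of the argument is that for a fixed geodesic $\gamma = \gamma_\mu^{X_t}$ and fixed target point $z = X_{t+1}$, the map $u \mapsto d(z, \gamma(u))^2$ is strongly convex as a function of $u \in [0,1]$. First I would note that $\gamma(\lambda u + (1-\lambda)v)$ is itself the point at parameter $\lambda$ along the geodesic joining $\gamma(v)$ to $\gamma(u)$ (reparametrizing the same geodesic segment), and that $d(\gamma(u), \gamma(v)) = |u - v|\, d(\mu, X_t)$. Applying the NPC inequality (\ref{eq-npc-ineq}) from Proposition \ref{prop-sq-dist-convex} to the segment from $\gamma(v)$ to $\gamma(u)$ with the point $z$ gives
\begin{equation*}
    d(z, \gamma(\lambda u + (1-\lambda)v))^2 \leq \lambda\, d(z, \gamma(u))^2 + (1-\lambda)\, d(z, \gamma(v))^2 - \lambda(1-\lambda)(u-v)^2 d(\mu, X_t)^2.
\end{equation*}
This is exactly pointwise strong convexity with modulus proportional to $d(\mu, X_t)^2$.

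**Taking expectations and obtaining a uniform modulus.** Taking expectations over $(X_t, X_{t+1})$ on both sides yields
\begin{equation*}
    L(\lambda u + (1-\lambda)v) \leq \lambda L(u) + (1-\lambda) L(v) - \lambda(1-\lambda)(u-v)^2\, \expec{d(\mu, X_t)^2}.
\end{equation*}
Writing $m = 2\,\expec{d(\mu, X_t)^2} = 2\Var{X_t}$ (a finite constant since $\Xts \subset L^2(\Omega)$, and time-invariant by stationarity), this is strong convexity with modulus $m$. The one point requiring care is that $m > 0$, i.e.\ that $X_t$ is not almost surely equal to its Fr\'echet mean $\mu$; this is the degenerate case where the time series is constant, which can be excluded by assuming a non-degenerate noise map (under the standing assumption that $\varepsilon$ is genuinely random, $\Var{X_t} > 0$). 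I would flag this as the only genuine obstacle: the inequality is automatic, but one must ensure the strong-convexity constant is strictly positive rather than merely nonnegative, so the statement is really a claim about $L$ being strongly convex with the explicit modulus $2\Var{X_t}$ under the implicit non-degeneracy of the process.
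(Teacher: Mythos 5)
Your proof is correct and follows essentially the same route as the paper's: reparametrize $\gamma_\mu^{X_t}$ on the subinterval, apply the NPC inequality of Proposition \ref{prop-sq-dist-convex} with $d(\gamma(u),\gamma(v)) = |u-v|\,d(\mu,X_t)$, and take expectations to get modulus $\expec{d(X_t,\mu)^2}$. Your additional observation that one must check $\expec{d(X_t,\mu)^2} > 0$ (excluding the degenerate constant process) is a fair point that the paper's proof passes over silently, but otherwise the two arguments coincide.
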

  \begin{proof}
    We show that the strong convexity of $L$ is inherited from the geodesic convexity of the squared distance in Hadamard spaces. Indeed, let $\varphi_1, \varphi_2 \in [0,1]$, wlog $\varphi_1 < \varphi_2$. Let $t \in [0,1]$ and define $\varphi_t = (1-t)\varphi_1 + t\varphi_2$. Then, $\gamma_{\mu}^{X_t}$ restricted to $[\varphi_1, \varphi_2]$ and reparametrized on $[0,1]$ gives the geodesic connecting $\gamma_{\mu}^{X_t}(\varphi_1)$ to $\gamma_{\mu}^{X_t}(\varphi_2)$ and hence
    \begin{equation*}
        L(\varphi_t)
        = \expec{d\left(X_{t+1}, \gamma_{\mu}^{X_t}((1-t)\varphi_1 + t\varphi_2)\right)^2}
        = \expec{d\left(X_{t+1}, \gamma_{\gamma_{\mu}^{X_t}(\varphi_1)}^{\gamma_{\mu}^{X_t}(\varphi_2)}(t)\right)^2}.
    \end{equation*}
    Using Proposition \ref{prop-sq-dist-convex} we get
    \begin{align*}
        L(\varphi_t)
        &\leq \expec{
            (1-t) d(X_{t+1}, \gamma_{\mu}^{X_t}(\varphi_1))^2
            + t d(X_{t+1}, \gamma_{\mu}^{X_t}(\varphi_2))^2}\\
        &\qquad - \expec{
            t(1-t) d(\gamma_{\mu}^{X_t}(\varphi_1), \gamma_{\mu}^{X_t}(\varphi_2))^2
        }\\
        &= (1-t)L(\varphi_1) + t L(\varphi_2) - t(1-t)\abs{\varphi_1 - \varphi_2}^2\expec{d(X_{t+1}, \mu)^2}.
    \end{align*}
    Since $X_{t+1}$ is $L^2(\Omega)$, we have that $\expec{d(X_{t+1}, \mu)^2} < \infty$, showing that $L$ is strongly convex.
  \end{proof}

\end{document}